\documentclass[11pt]{article}

  \usepackage{smoothing_paper}
\usepackage[section]{placeins}
\usepackage{tabularx,ragged2e,booktabs,caption}
\usepackage{authblk} 
\interfootnotelinepenalty=10000


\newcommand{\ie}{\emph{i.e.}}

\newcommand{\expt}[1]{\mathrm{E}\left[#1\right]}

\newcommand{\rset}{\mathbb{R}}
\newcommand{\nset}{\mathbb{N}}

\newcommand{\PERIOD}{.}
\newcommand{\COMMA}{,}

\newcommand{\Ordo}[1]{{\mathcal{O}}\left(#1\right)}


\pagestyle{plain}                                                      
\setlength{\textwidth}{6.5in}     
\setlength{\oddsidemargin}{0in}   
\setlength{\evensidemargin}{0in}  
\setlength{\textheight}{8.5in}    
\setlength{\topmargin}{-0.2in}    
\setlength{\headheight}{0in}      
\setlength{\headsep}{0in}         
\setlength{\footskip}{.5in}       


\makeatletter
\def\BState{\State\hskip-\ALG@thistlm}
\makeatother


\title{Numerical Smoothing  with Hierarchical Adaptive Sparse Grids and
Quasi-Monte Carlo Methods for Efficient Option Pricing}

\author[1]{Christian Bayer}
\author[2]{Chiheb Ben Hammouda\thanks{benhammouda@uq.rwth-aachen.de}}
\author[3,4]{Ra\'ul Tempone}
\affil[1]{Weierstrass Institute for Applied Analysis and Stochastics (WIAS), Berlin, Germany.}
\affil[2]{Chair of Mathematics for Uncertainty Quantification, RWTH Aachen University, Aachen, Germany.}
\affil[3]{King Abdullah University of Science and Technology (KAUST), Computer, Electrical and Mathematical Sciences \& Engineering Division (CEMSE), Thuwal, Saudi Arabia.}
\affil[4]{Alexander von Humboldt Professor in Mathematics for Uncertainty Quantification, RWTH Aachen University, Aachen, Germany.}


\begin{document}
	\date{}
\maketitle

\begin{abstract}
When approximating the expectations of a functional of a solution to a stochastic differential equation, the numerical performance of deterministic quadrature  methods, such as sparse grid quadrature and quasi-Monte Carlo (QMC) methods,  may  critically depend on the regularity of the integrand.  To overcome this issue and improve   the  regularity structure of the problem,  we consider cases in which analytic  smoothing (bias-free mollification) cannot be performed and introduce a novel  numerical smoothing  approach by combining a  root-finding method with   a one-dimensional  numerical integration with respect to a single well-chosen variable.  We prove that,  under appropriate conditions, the resulting  function of the remaining  variables is  highly smooth, potentially affording the improved  efficiency of adaptive sparse grid quadrature (ASGQ) and QMC methods, particularly when  combined with hierarchical transformations (\ie,   the Brownian bridge and Richardson extrapolation on the weak error). This approach facilitates the effective treatment of high dimensionality.  Our study is motivated by option pricing problems, focusing on dynamics where  the discretization of the asset price is necessary.   Based on our analysis and  numerical experiments, we demonstrate the advantages of combining numerical smoothing with the ASGQ and QMC methods over these methods without smoothing  and  the Monte Carlo approach.   Finally, our approach is generic and can be applied to solve a broad class of problems,  particularly approximating distribution functions, computing financial Greeks, and estimating risk quantities.

\

\textbf{Keywords} Adaptive sparse grid quadrature, quasi-Monte Carlo, numerical smoothing, Brownian bridge, Richardson extrapolation, option pricing,  Monte Carlo, distribution functions, Greeks, risk estimation

\textbf{2010 Mathematics Subject Classification} 65C05,  65D30, 65D32, 65Y20, 91G20, 91G60.
\end{abstract}

\section{Introduction}
In several applications, such as pricing digital and barrier options, computing  financial Greeks, and estimating  risk quantities and  distribution functions, one is interested in efficiently computing the expectation of a functional $g$ of a solution $X$ to a stochastic differential equation (SDE):
\begin{small}
\begin{equation}\label{eq:QoI}
\expt{g(X)}.
\end{equation}\end{small}
Approximating  \eqref{eq:QoI} is usually challenging due to the   combination of two complications: 
\begin{enumerate}
\item An input space can have   high dimensionality for many reasons, including (i) the time discretization of an SDE that describes the dynamics or (ii) having numerous  underlying assets.
\item The  payoff function, $g$, exhibits  low regularity with respect to (w.r.t.) the input parameters.
\end{enumerate}
The first class of methods for approximating \eqref{eq:QoI} relies on Monte Carlo (MC) techniques. Although the convergence rate of  the  standard MC method is insensitive to  the input space  dimensionality and the regularity of the observable $g$, the convergence may be very slow. 
Moreover, it may not exploit the available regularity structure that could help achieve better convergence rates, except for  multilevel MC methods \cite{giles2015multilevel,bayer2020multilevel}, where Lipschitzity is necessary to obtain  optimal convergence rates. Another  class of methods  relies on deterministic quadrature  techniques (e.g., sparse grid quadrature  \cite{gerstner1998numerical,barthelmann2000high,bungartz2004sparse}, adaptive sparse grid quadrature (ASGQ) \cite{bayersmoothing,bayer2018hierarchical,ben2020hierarchical,bayer2022optimal}, and quasi-MC (QMC) \cite{niederreiter1992random, bayersmoothing,bayer2018hierarchical}).  In this work, we introduce a numerical smoothing technique to improve the performance of deterministic quadrature  approaches by improving  the  regularity structure of the problem.

The high dimensionality of the input space and   existence of discontinuities\footnote{We consider  discontinuities either in the gradients (kinks) or  in the function (jumps).} in the integrand  considerably degrade the performance of  deterministic  quadrature methods.  Some studies \cite{griebel2013smoothing, griebel2017note,griewank2017high,bayersmoothing,xiao2018conditional}  have addressed   cases involving integrands with discontinuities; however,  the emphasis was on the QMC method. In particular, \cite{griebel2013smoothing,griebel2017note,griewank2017high} focused   on the   theoretical aspects of employing the  QMC method in such a setting.    An adaptive version of the QMC method combined with geometric random splitting was employed for pricing multidimensional vanilla options for the Black-Scholes model \cite{deluigi:hal-00746872}.   Moreover, the low regularity of the integrand was addressed in previous studies by (i) performing bias-free mollification  using the conditional expectation over a subset of integration variables \cite{bayersmoothing,xiao2018conditional,bayer2018hierarchical}, or (ii)  mapping the problem to the frequency space  \cite{bayer2022optimal},  implying a better regularity structure compared to the physical space, when applicable.\footnote{The Fourier transform of the density function is available and inexpensive to compute.}

This work considers cases where  bias-free mollification cannot be performed. We introduce a novel numerical smoothing
technique based on (i) identifying discontinuity locations in a lower-dimensional space using hierarchical path generation and a  linear transformation of the coarsest factors,  (ii) solving  the  discontinuities using root-finding algorithms, (iii)  employing suitable transformations of the integration domain, and (iv) a  numerical preintegration step  w.r.t.~the dimension containing  discontinuities. We prove that, under appropriate conditions, the resulting function of the remaining variables is  highly smooth, potentially affording  improved efficiency of the ASGQ and QMC methods, particularly when combined with hierarchical transformations to treat the high dimensionality effectively \cite{bayersmoothing}.     Given that ASGQ and QMC methods benefit from anisotropy, the first technique involves employing  a hierarchical  path generation method  based on the Brownian bridge construction to reduce the effective dimension. The second technique  involves employing the Richardson extrapolation to reducw the bias (weak error),  subsequently  reducing the  number of time steps required at the coarsest level to achieve a certain error tolerance and    decreasing the total number of dimensions required for the integration problem. Our analysis and numerical experiments demonstrate the advantage of our approach,  substantially outperforming the  ASGQ and QMC methods without smoothing and the MC approach,  for high-dimensional examples and  dynamics where discretization is needed, such as the Heston model.

The outline of this study is as follows: Section \ref{sec:General setting} explains the technique of numerical smoothing, the selection of the optimal smoothing direction, and  the different building blocks that constitute our hierarchical quadrature methods.  Section \ref{sec:Analiticity Analysis} presents  the smoothness analysis of the resulting integrand after  numerical smoothing.  Next,     Section \ref{sec:Error discussion in the context of ASGQ method} discusses the  error and work     for the ASGQ  method  with numerical smoothing. Finally,   Section \ref{sec:Numerical experiments: Numerical smoothing with ASGQ} reports     the results of the  numerical experiments conducted using the ASGQ, QMC, and MC methods.  These results verify the considerable computational gains achieved using the ASGQ and QMC methods (both combined with numerical smoothing) over the MC method and the standard (without smoothing) ASGQ and QMC methods.

\section{Problem Setting and Approach  Formulation}\label{sec:General setting}
To demonstrate the application of our approach, we work mainly with two possible structures of the observable $g$:
\begin{equation}\label{eq:payoff structures}
\text{(i)}\:  g(\mathbf{x})=\max(\phi(\mathbf{x}),0);\:  \text{(ii)}\: g(\mathbf{x})=\mathbf{1}_{(\phi(\mathbf{x}) \ge 0)}, \: \mathbf{x} \in \rset^d,
\end{equation} 
where the function $\phi: \rset^d \mapsto \rset$ is assumed to be smooth.

We introduce the notation $\mathbf{x}_{-j}$ to denote    a vector with length $d-1$ representing all  variables other than $x_j$ in $\mathbf{x}$. Abusing the notation, we define  $\phi(\mathbf{x})=\phi(x_j,\mathbf{x}_{-j})$, and  for  ease of presentation, we assume that, for fixed $\mathbf{x}_{-j}$,  the function $\phi(x_j,\mathbf{x}_{-j})$ either has a simple root or is positive for all $x_j \in \rset$. This is guaranteed by the monotonicity condition \eqref{assump:Monotonicity condition} and    infinite growth condition \eqref{assump:Growth condition}, which  are assumed  for some $j \in \{1,\dots,d\}$.
\begin{align}
	\frac{\partial \phi}{\partial x_j}(\mathbf{x}) &>0,\: \forall \mathbf{x} \in \rset^d \: \: \textbf{(Monotonicity condition)\footnotemark}  \label{assump:Monotonicity condition}\\
	\underset{x_j \rightarrow +\infty}{\lim} \phi(\mathbf{x})&=\underset{x_j \rightarrow +\infty}{\lim} \phi(x_j,\mathbf{x}_{-j})=+\infty, \: \forall \mathbf{x}_{-j} \in \rset^{d-1}\: \text{or} \:\: \frac{\partial^2 \phi} {\partial x_j^2}(\mathbf{x}) \ge 0,\: \forall \mathbf{x} \in \rset^{d}  \: \: \textbf{(Growth condition)}  \label{assump:Growth condition} \PERIOD
\end{align}
\footnotetext{We present the monotonicity condition for an increasing function without  loss of generality.  However, the assumption still holds for a decreasing function, which may be the case when considering a spread option.}
Our approach can be easily extended to the case of finitely many roots without accumulation. We explain this extension in Remark \ref{rem: Generalization of numerical smoothing to the case of countable multiple roots}.
\subsection{Continuous-time formulation and optimal smoothing direction}\label{sec:Continuous time formulation and optimal smoothing direction}
In this section, we characterize the optimal  smoothing direction using the continuous-time formulation.  The purpose of this work is to approximate $\expt{g(\mathbf{X}_T)}$ at final time $T$, where $g$ is a  low-regular payoff function  and $\mathbf{X}:=(X^{(1)}, \dots, X^{(d)})$ is described using the following SDE:\footnote{We assume that  $\{W^{(j)}\}_{j=1}^d$ are uncorrelated and the correlation terms are included  in the diffusion terms $b_{ij}$.}
\begin{equation}\label{eq:SDE_interest}
dX_{t}^{(i)}=a_i(\mathbf{X}_t) dt + \sum_{j=1}^d b_{ij}(\mathbf{X}_t) dW_{t}^{(j)} \PERIOD
\end{equation}
First, we hierarchically represent  $\mathbf{W}:=(W^{(1)},\dots, W^{(d)})$ as follows:
\begin{equation}\label{eq:bridge_construction}
W^{(j)}(t)= \frac{t}{T} W^{(j)}(T)+B^{(j)}(t)= \frac{t}{\sqrt{T}} Z_j+B^{(j)}(t), \: 1 \le j \le d \COMMA
\end{equation}
where $\{Z_j\}_{j=1}^d$  are  independent and identically distributed (i.i.d.) standard Gaussian  random variables (rdvs), and  $\{B^{(j)}\}_{j=1}^d$ are independent Brownian bridges.

We can hierarchically represent $\mathbf{Z}:=(Z_1,\dots,Z_d)$  as
\begin{equation*}\label{eq:smoothing_decomposition}
\mathbf{Z}= \underset{\text{One dimensional projection}}{\underbrace{P_0 \mathbf{Z}}} +  \underset{\text{Projection on the complementary}} {\underbrace{P_{\perp} \mathbf{Z}}}\COMMA
\end{equation*} 
where $P_0 \mathbf{Z}:=(\mathbf{Z}, \mathbf{v}) \mathbf{v}$, with  $|| \mathbf{v} ||=1$, and  $Z_v:=(\mathbf{Z}, \mathbf{v})$ is a standard Gaussian rdv.\footnote{The notation   $(.,.)$  denotes the scalar product operator.} 

Furthermore,  defining $ \mathbf{w}:=\mathbf{Z}- Z_v \mathbf{v}$ yields
\begin{equation}\label{eq:smoothing_decomposition_componentwise}
Z_j=Z_v v_j+ (P_{\perp} \mathbf{Z})_j= Z_v v_j+ w_j, \: 1\le j \le d.
\end{equation}
Using  \eqref{eq:bridge_construction}  and  \eqref{eq:smoothing_decomposition_componentwise} in  \eqref{eq:SDE_interest} implies that
\begin{equation}\label{eq:SDE_decomposition_componentwise_exapanded}
dX_{t}^{(i)}=\left(a_i(\mathbf{X}_t)+\sum_{j=1}^d b_{ij}(\mathbf{X}_t)  \frac{Z_v v_j}{\sqrt{T}}\right) dt+\left(\sum_{j=1}^d b_{ij}(\mathbf{X}_t) \frac{w_j}{\sqrt{T}}\right) dt +\sum_{j=1}^d b_{ij}(\mathbf{X}_t) dB_{t}^{(j)}\PERIOD
\end{equation}
If we define $H_{\mathbf{v}}\left(Z_{v}, \mathbf{w} \right):=g\left(\mathbf{X}(T)\right)$, then   \eqref{eq:smoothing_decomposition_componentwise} and  \eqref{eq:SDE_decomposition_componentwise_exapanded} can be used to yield
\begin{align}
\expt{g\left(\mathbf{X}(T)\right)}&=\expt{\expt{H_{\mathbf{v}}\left(Z_{v}, \mathbf{w} \right)\mid \mathbf{w}}}\label{eq:expect_tower_prop},\\
\text{Var}\left[g\left(\mathbf{X}(T)\right)\right]&=\expt{\text{Var}\left[H_{\mathbf{v}}\left(Z_{v}, \mathbf{w} \right)\mid \mathbf{w} \right]}+\text{Var}\left[\expt{H_{\mathbf{v}}\left(Z_{v}, \mathbf{w}\right)\mid \mathbf{w}}\right] \label{eq:variance_tower_prop}.
\end{align}
Using  \eqref{eq:expect_tower_prop} and \eqref{eq:variance_tower_prop}, the optimal smoothing direction is    characterized as the one that maximizes the smoothing effect at $T$, that is, $\mathbf{v}$ solves the following equivalent optimization problem:
\begin{equation}\label{eq:smoothing_opt_problem}
 \underset{\underset{|| \mathbf{v} || =1}{\mathbf{v} \in \rset^d}}{\min} \: \text{Var}\left[\expt{H_{\mathbf{v}}\left(Z_{v}, \mathbf{w}\right)\mid \mathbf{w}}\right] \iff \underset{\underset{|| \mathbf{v} || =1}{\mathbf{v} \in \rset^d}}{\max} \: \expt{\text{Var}\left[H_{\mathbf{v}}\left(Z_{v}, \mathbf{w}\right)\mid \mathbf{w} \right]} .
\end{equation}
The left-hand side of \eqref{eq:smoothing_opt_problem}  corresponds to reducing the variance of the original estimator by  conditioning  w.r.t.~a specific subset of rdvs, where  the best conditioning direction (i.e., leading to the least variance) depends on the   choice of $\mathbf{v}$.   Moreover,  because $\expt{g\left(\mathbf{X}(T)\right)}$ is constant,  the right-hand side of \eqref{eq:smoothing_opt_problem}   can be motivated as follows: $H_{\mathbf{v}}\left(Z_{v}, \mathbf{w}\right)$ can be understood as convoluting  $g$ with a Gaussian density whose scale parameter depends on the choice of $\mathbf{v}$.  A larger scale parameter for the corresponding Gaussian density results in better regularity for the resulting function.

Solving  \eqref{eq:smoothing_opt_problem} is difficult, and $\mathbf{v}$ is dependent on the problem. In this work, we aim to heuristically determine $\mathbf{v}$  by considering the structure of the problem. In the following section, we provide more insight on selecting $\mathbf{v}$ and performing numerical smoothing in the time-stepping setting.

\subsection{Motivation and idea of  numerical smoothing}\label{sec:Discrete time, practical motivation}
We consider $\mathbf{X}$  the solution of the SDE \eqref{eq:SDE_interest}. To illustrate our numerical smoothing idea, we consider, for  ease of presentation,   the discretized $d$-dimensional   geometric Brownian motion (GBM) model  given by\footnote{For  ease of presentation, we set the drift term  in \eqref{eq:lognormal model} to $0$.}   
\begin{equation}\label{eq:lognormal model}
	dX^{(j)}_t=\sigma^{(j)} X^{(j)}_t dW^{(j)}_t,\quad 1 \le j \le d,
\end{equation}
where $\{W^{(1)}, \dots,W^{(d)}\}$ are correlated Brownian motions with correlations $\rho_{ij}$, and $\{\sigma^{(j)}\}_{j=1}^d$ denote  the volatilities of the different assets.
  
We denote by $(Z_1^{(j)},\dots,Z_N^{(j)})$ the $N$ standard Gaussian independent rdvs that will be used to construct the approximate path of the $j$-th asset $\bar{X}^{(j)}$, where $N$ represents the number of time steps ($\Delta t=\frac{T}{N}$).  Moreover, we denote by  $\psi^{(j)}: (Z_1^{(j)},\dots,Z_N^{(j)}) \mapsto (B_1^{(j)},\dots,B_N^{(j)})$  the mapping of the Brownian bridge construction, and  by $\Phi: \left(\Delta t, \mathbf{B}\right) \mapsto \bar{\mathbf{X}}^{\Delta t}(T)$ 
 the mapping  of the time-stepping scheme, where $\mathbf{B}:=\left(B^{(1)}_1,\dots,B^{(1)}_N,\dots, B^{(d)}_1,\dots,B^{(d)}_N\right)$ is the noncorrelated  Brownian bridge\footnote{Without loss of generality,   the   correlated Brownian bridge can be obtained  via simple matrix multiplication.} and  $\bar{\mathbf{X}}^{\Delta t}(T):= \left(\bar{X}_T^{(1)}, \dots,\bar{X}_T^{(d)} \right)$. Then, the option price can be expressed as
\begin{align}\label{eq: option price as integral_basket}
	\expt{g(\mathbf{X}(T))}&\approx	\expt{g\left(\bar{X}_T^{(1)}, \dots,\bar{X}_T^{(d)} \right)}=\expt{g(\bar{\mathbf{X}}^{\Delta t}(T))}  \nonumber\\
	&=\expt{g\circ \Phi  \left(B^{(1)}_1,\dots,B^{(1)}_N,\dots, B^{(d)}_1,\dots,B^{(d)}_N \right)} \nonumber\\
		&=\expt{g\circ\Phi  \left(\psi^{(1)}(Z_1^{(1)}, \dots, Z_N^{(1)}), \dots, \psi^{(d)}(Z_1^{(d)},\dots,Z^{(d)}_N)\right)} \nonumber\\
	&=\int_{\rset^{d \times N}} G(z_1^{(1)}, \dots, z_N^{(1)}, \dots, z_1^{(d)},\dots,z^{(d)}_N)) \rho_{d \times N}(\mathbf{z}) dz_1^{(1)} \dots dz_N^{(1)} \dots z_1^{(d)} \dots dz^{(d)}_N \COMMA
\end{align}
where\footnote{The formulation of our method is generic; for instance the mapping $\psi^{j}$  may be based on Haar basis functions as in \eqref{eq:Brownian-motion-truncated} instead of the Brownian bridges. Moreover,  a different scheme for the mapping $\Phi$ may be considered instead of the  Euler–Maruyama scheme  used in this work.}  $G:=g  \circ \Phi  \circ \left(\psi^{(1)}, \dots, \psi^{(d)}\right)$ and  $\rho_{d \times N}$ represents the $d \times N$  multivariate Gaussian density.

Moreover,   the numerical approximation of $X^{(j)}(T)$, using the Euler–Maruyama scheme, satisfies
\begin{equation}\label{eq:discrete_rep}
	\bar{X}^{(j)}(T)=X_0^{(j)} \prod_{n=0}^{N-1} \underset{:=f_n^{(j)}(Z^{(j)}_1; \mathbf{Z}^{(j)}_{-1})}{\underbrace{\left[ 1+\frac{\sigma^{(j)}}{\sqrt{T}} Z^{(j)}_1 \Delta t+ \sigma^{(j)} \Delta B^{(j)}_{n}\right]}}, \quad 1 \le j \le d \COMMA
\end{equation}
where $\Delta B^{(j)}_{n}:= B^{(j)}_{n+1}- B^{(j)}_{n}$.
\begin{remark}
Equation \eqref{eq:discrete_rep} holds even for stochastic volatility models, where $\sigma^{(j)}$  is a nonconstant and changes at each time step.
\end{remark}
\subsubsection{Step $1$ of numerical smoothing: Root finding for  the discontinuity location}\label{sec:Step $1$: Numerical smoothing}
In this step, the discontinuity location is determined by solving the corresponding root-finding problem in one dimension  after adopting suboptimal linear mapping for the coarsest factors of the Brownian increments $\mathbf{Z}_1:=(Z^{(1)}_1 , \dots, Z^{(d)}_1)$:
\begin{equation}\label{eq:linear_transformation}
\mathbf{Y}=\mathcal{A} \mathbf{Z}_1 \COMMA
\end{equation}
where $\mathcal{A} $  is a  $d \times d $ matrix representing a linear mapping.  To connect with Section \ref{sec:Continuous time formulation and optimal smoothing direction}, the smoothing direction $\mathbf{v}$ 
is expressed using the first row of  $\mathcal{A}$, which  is generally orthogonal, selected from a family of rotations. For instance, if we consider an arithmetic basket call option,   a sufficiently suitable selection of   $\mathcal{A}$ is  a rotation matrix, with the first  row (corresponding to the smoothing direction $\mathbf{v}$ introduced in Section \ref{sec:Continuous time formulation and optimal smoothing direction}) leading to $Y_1=\sum_{i=1}^d Z_1^{(i)}$ up to rescaling without any constraint for the remaining rows.  In practice, we construct $\mathcal{A}$ by fixing the first row to\footnote{Note that $\mathbf{1}_{1 \times d}$ denotes the row vector with dimension $d$,  where all its coordinates are $1$.}  $\frac{1}{\sqrt{d}} \mathbf{1}_{1 \times d}$,  and the remaining rows are obtained using the Gram-Schmidt procedure.

From \eqref{eq:discrete_rep},  using \eqref{eq:linear_transformation}, we obtain
 \begin{equation*}
 	\bar{X}^{(j)}(T)= X_0^{(j)} \prod_{n=0}^{N-1} f_n^{(j)}\left((\mathcal{A}^{-1} \mathbf{Y})_{j};  \mathbf{Z}^{(j)}_{-1} \right)=X_0^{(j)} \prod_{n=0}^{N-1} F_n^{(j)}(Y_{1};\mathbf{Y}_{-1},   \mathbf{Z}^{(j)}_{-1}), \quad 1 \le j \le d,
 \end{equation*}
where, by defining $\mathcal{A}^{\text{inv}}:= \mathcal{A}^{-1}$, we have
\begin{equation*}
F_n^{(j)}(Y_1;\mathbf{Y}_{-1}, \mathbf{Z}^{(j)}_{-1}) =  \left[ 1+\frac{\sigma^{(j)} \Delta t}{\sqrt{T}} A^{\text{inv}}_{j1} Y_1 +\frac{\sigma^{(j)}}{\sqrt{T}} \left( \sum_{i=2}^d A^{\text{inv}}_{ji} Y_i  \right) \Delta t+ \sigma^{(j)} \Delta B^{(j)}_{n}\right] \PERIOD
\end{equation*}
Considering that  the irregularity  is located at $\phi(\bar{\mathbf{X}}^{\Delta t}(T))=0$ (see \eqref{eq:payoff structures})\footnote{The locations may differ depending on the considered payoff function; for instance, many payoffs in quantitative finance have kinks  at the strike price.} then  to determine the discontinuity location  $y^{\ast}_1:=y^{\ast}_1(\mathbf{y}_{-1}, \mathbf{z}^{(1)}_{-1},\dots,\mathbf{z}^{(d)}_{-1} )$, we must find, for fixed $\mathbf{y}_{-1}$, $\mathbf{z}^{(1)}_{-1},\dots,\mathbf{z}^{(d)}_{-1} $, the roots of  $P(y^\ast_1)$:
\begin{equation}\label{eq:roots_function}
		\phi(\bar{\mathbf{X}}^{\Delta t}(T))=   \phi \left(  X_0^{(1)}  \prod_{n=0}^{N-1}  F_n^{(1)}(y^{\ast}_1;\mathbf{y}_{-1},  \mathbf{z}^{(1)}_{-1}), \dots , X_0^{(d)}  \prod_{n=0}^{N-1}  F_n^{(d)}(y^{\ast}_1;\mathbf{y}_{-1},  \mathbf{z}^{(d)}_{-1}) \right) := P(y^\ast_1)=0.
	\end{equation}
We use the Newton iteration method to determine the approximated discontinuity location,   $\bar{y}^\ast_1:=\bar{y}^\ast_1(\mathbf{y}_{-1}, \mathbf{z}^{(1)}_{-1},\dots,\mathbf{z}^{(d)}_{-1} )$.
\begin{remark}
		We chose $\mathbf{Z}_1$ for the numerical smoothing direction in  \eqref{eq:linear_transformation} for two reasons: (i) in this work, we consider European options whose payoff functionals depend only  on the assets prices at the final time $T$; and  (ii) the Brownian bridge construction creates a hierarchy of importance for the  rdvs  such that $\mathbf{Z}_1$ tends to be  the random factor most contributing to the information in $\mathbf{X}(T)$.    It may be more appropriate to consider a linear combination of $Z^{(j)}_1, \dots,Z^{(j)}_N$, $1 \le j \le d$ for the smoothing direction when considering payoff functionals that depend on the whole path of the asset price, such as  Asian options. Investigating this possibility is left for future work. Finally, the selection of $\mathcal{A}$ creates a new hierarchy of smoothness that depends more on the payoff structure.
\end{remark}
\begin{remark}
We recall that $\mathcal{A}$ is generally selected from a family of rotations, depending on the problem and payoff structure.  The investigation of optimal choices of  $\mathcal{A}$ for various settings  is  left for  future work, where we intend to perform a sensitivity analysis regarding possible choices.
	\end{remark}
\subsubsection{Step $2$ of numerical smoothing: Numerical preintegration}\label{sec:Step $2$: Integration}
In this stage, we perform the numerical preintegrating step w.r.t.~the direction considered for finding the root to determine $y^\ast_1$. Using  Fubini's theorem and    \eqref{eq: option price as integral_basket}, we obtain
\begin{align}\label{eq: pre_integration_step_wrt_y1_basket}
	\expt{g(\mathbf{X}(T))}\approx	\expt{g\left(\bar{X}_T^{(1)}, \dots,\bar{X}_T^{(d)} \right)} 
&:=\expt{I\left(\mathbf{Y}_{-1}, \mathbf{Z}^{(1)}_{-1},\dots,\mathbf{Z}^{(d)}_{-1} \right)}\\ \nonumber
	&\approx\expt{\bar{I}\left(\mathbf{Y}_{-1}, \mathbf{Z}^{(1)}_{-1},\dots,\mathbf{Z}^{(d)}_{-1} \right)} \COMMA
\end{align}
where
\begin{align}\label{eq:smooth_function_after_pre_integration}
 I\left(\mathbf{y}_{-1},\mathbf{z}^{(1)}_{-1},\dots,\mathbf{z}^{(d)}_{-1}\right)&=\int_{\rset} G\left(y_1,\mathbf{y}_{-1},\mathbf{z}^{(1)}_{-1},\dots,\mathbf{z}^{(d)}_{-1} \right) \rho_{1}(y_1) dy_1 \nonumber\\
 &= \int_{-\infty}^{y^\ast_1} G\left(y_1,\mathbf{y}_{-1},\mathbf{z}^{(1)}_{-1},\dots,\mathbf{z}^{(d)}_{-1} \right) \rho_{1}(y_1) dy_1+ \int_{y_1^\ast}^{+\infty} G\left(y_1,\mathbf{y}_{-1},\mathbf{z}^{(1)}_{-1},\dots,\mathbf{z}^{(d)}_{-1} \right) \rho_{1}(y_1) dy_1,
\end{align}
and $\bar{I}$ is the  approximation of $I$  obtained using  Newton iteration and  the two-sided Laguerre quadrature rule,  expressed as
\begin{equation}\label{eq:expression_h_bar}
	\bar{I}(\mathbf{y}_{-1},\mathbf{z}^{(1)}_{-1},\dots,\mathbf{z}^{(d)}_{-1}  ) \coloneqq \sum_{k=0}^{M_{\text{Lag}}} \eta_k \; G\left( \zeta_k\left(\bar{y}^{\ast}_1\right),
	\mathbf{y}_{-1},\mathbf{z}^{(1)}_{-1},\dots,\mathbf{z}^{(d)}_{-1}   \right),
\end{equation}
where $\bar{y}^{\ast}_1$ denotes the approximated discontinuity location and $M_{\text{Lag}}$ represents the number of Laguerre quadrature points  $\zeta_k \in \R$ with $\zeta_0 = \bar{y}^{\ast}_1$ and corresponding weights $\eta_k$.\footnote{Of course, the points $\zeta_k$ must be selected in a
	systematic manner depending on $\bar{y}^{\ast}_1$.}

The numerical smoothing treatment enables us to obtain a highly smooth integrand $\bar{I}$ (see   Section \ref{sec:Analiticity Analysis} for the smoothness analysis).
\begin{remark}[Extending the numerical smoothing idea to other payoffs and dynamics]
	Although we consider the case of the multivariate GBM model to illustrate our numerical  smoothing approach, we believe that this concept  is generic and can be extended straightforwardly to several types of payoff functions and dynamics  (see Section \ref{sec:Numerical experiments: Numerical smoothing with ASGQ} for  different tested examples).
\end{remark} 
\begin{remark}[Extending the numerical smoothing approach to the case of  multiple roots]\label{rem: Generalization of numerical smoothing to the case of countable multiple roots}
The aforementioned preintegration step can be generalized when finitely many discontinuities exist without accumulation, occurring  either because of   the payoff structure or  the use of the Richardson extrapolation. If we have $R$  multiple roots,  $\{y_i^\ast\}_{i=1}^{R}$ with the following order $y^\ast_1<y^\ast_2<\dots<y^\ast_{R}$,  the smoothed integrand in  \eqref{eq: pre_integration_step_wrt_y1_basket} is expressed as follows:
		\begin{align}\label{eq:smooth_function_after_pre_integration_multiple_root}
			I\left(\mathbf{y}_{-1},\mathbf{z}^{(1)}_{-1},\dots,\mathbf{z}^{(d)}_{-1}\right)&=\int_{\rset} G\left(y_1,\mathbf{y}_{-1},\mathbf{z}^{(1)}_{-1},\dots,\mathbf{z}^{(d)}_{-1} \right) \rho_{1}(y_1) dy_1 \nonumber\\
			=& \int_{-\infty}^{y^\ast_1} G\left(y_1,\mathbf{y}_{-1},\mathbf{z}^{(1)}_{-1},\dots,\mathbf{z}^{(d)}_{-1} \right) \rho_{1}(y_1) dy_1 \nonumber\\
			& + \sum_{i=1}^{R-1} \int_{y^\ast_{i}}^{y^\ast_{i+1}} G\left(y_1,\mathbf{y}_{-1},\mathbf{z}^{(1)}_{-1},\dots,\mathbf{z}^{(d)}_{-1} \right) \rho_{1}(y_1) dy_1\nonumber\\
			& + \int_{y_{R}^\ast}^{+\infty} G\left(y_1,\mathbf{y}_{-1},\mathbf{z}^{(1)}_{-1},\dots,\mathbf{z}^{(d)}_{-1} \right) \rho_{1}(y_1) dy_1,
		\end{align}
and its approximation $\bar{I}$   is given by
			\begin{align*}\label{eq:expression_h_bar_multiple_root}
				\bar{I}(\mathbf{y}_{-1},\mathbf{z}^{(1)}_{-1},\dots,\mathbf{z}^{(d)}_{-1}  ) \coloneqq  &\sum_{k=0}^{M_{\text{Lag},1}} \eta^{\text{Lag}}_k \; G\left( \zeta^{\text{Lag}}_{k,1}\left(\bar{y}^{\ast}_1\right),\mathbf{y}_{-1},\mathbf{z}^{(1)}_{-1},\dots,\mathbf{z}^{(d)}_{-1}   \right) \nonumber\\
				&+ 	\sum_{i=1}^{R-1} \left( \sum_{k=0}^{M_{\text{Leg},i}} \eta^{\text{Leg}}_k  \; G\left( \zeta^{\text{Leg}}_{k,i}\left(\bar{y}^{\ast}_{i}, \bar{y}^{\ast}_{i+1}\right),
				\mathbf{y}_{-1},\mathbf{z}^{(1)}_{-1},\dots,\mathbf{z}^{(d)}_{-1}   \right) \right)\nonumber\\
				&+ \sum_{k=0}^{M_{\text{Lag},R}} \eta^{\text{Lag}}_k \; G\left( \zeta^{\text{Lag}}_{k,R}\left(\bar{y}^{\ast}_{R}\right),
				\mathbf{y}_{-1},\mathbf{z}^{(1)}_{-1},\dots,\mathbf{z}^{(d)}_{-1}   \right), \nonumber
			\end{align*}
		where $\{\bar{y}_i^\ast\}_{i=1}^{R}$ are the approximated discontinuity locations,  $M_{\text{Lag},1}$ and $M_{\text{Lag},R}$ are the number of Laguerre quadrature points  $ \zeta^{\text{Lag}}_{.,.} \in \R$  with corresponding weights $\eta^{\text{Lag}}_{.} $, and  $\{M_{\text{Leg},i}\}_{i=1}^{R-1}$  are the numbers of Legendre quadrature points  $ \zeta^{\text{Leg}}_{.,.}$ with corresponding weights $\eta^{\text{Leg}}_{.}$\footnote{The points $\zeta^{\text{Lag}}_{.,.}$ and $\zeta^{\text{Leg}}_{.,.}$  must be selected systematically depending on $\{\bar{y}^{\ast}_i\}_{i=1}^{R}$.}. Moreover, $\bar{I} $ can be  approximated further depending on  the decay of  $ G\times \rho_{1} $ in the semi-infinite domains in \eqref{eq:smooth_function_after_pre_integration_multiple_root}  and   how close the roots are.  This approximation enables  dealing  with a countable number of discontinuities by keeping them toward infinity and then truncating the domain.
\end{remark}
\begin{remark}
		Our approach can be extended to solve a broad class of problems, particularly for  estimating risk quantities and computing Greeks (sensitivities) for discontinuous financial  payoffs (e.g.,  for the low-factor London interbank offer rate (LIBOR) market \cite{fries2008conditional, chan2013fast}).  The numerical smoothing idea  can be used to design novel efficient methods based on the pathwise approach   relying on the differentiability of the payoff.   We intend to explore these directions in the future, where we should also address  additional  challenges related to pathwise simulation.
\end{remark}
\begin{example}[Simple illustration: Single digital call under  the GBM model]\label{rem:Single digital call under GBM model}
	We let $dX = \sigma XdW$, with $\sigma>0$, and $W$ is a Brownian motion.  We  consider   $g$  given by (ii) in \eqref{eq:payoff structures},  where  $\phi(x) = x-K$, and $K$ is the strike price. From previous section,  the discontinuity is located at $y_1^{\ast}$,  which  is   an invertible function satisfying $X(T; y_1^\ast(x), \mathbf{z}_{-1})=x$. In this particular case,    $y_1^{\ast}$ is deterministic (it does not depend on the Brownian bridge increments) and is given by
	$$ y_1^{\ast} = (\log(K/X_0) + T \sigma^2/2) \frac{1}{\sqrt{T} \sigma}.$$
	Then,
	\begin{equation}
	I(\mathbf{z}_{-1})=\int_{\rset}  \mathbf{1}_{\{X(T; y,\mathbf{z}_{-1})>K\}}  \frac{1}{\sqrt{2 \pi}} \exp(-y^2/2) dy= \mathbb{P}\left( Y> y^\ast(K)\right),
	\end{equation}
where $Y \sim \mathcal{N} (0,1).$ Generally, $y_1^{\ast}$ is not deterministic; for instance when considering the Heston model (see \eqref{eq:dynamics Heston})  because the volatility is stochastic.
	\end{example}

\subsection{Hierarchical quadrature methods combined with numerical smoothing}\label{sec:Details of our approach}
After performing the numerical smoothing step, we end up with an  integration problem \eqref{eq: pre_integration_step_wrt_y1_basket} of a highly regular integrand $\bar{I}$ in a $(dN-1)$-dimensional space (see Section \ref{sec:Analiticity Analysis} for the regularity analysis).    The second  stage of our approach involves  approximating    \eqref{eq: pre_integration_step_wrt_y1_basket} efficiently. Thus, we employ   the ASGQ  and QMC methods (see Section \ref{sec:Brief description of ASGQ and QMC methods} for a brief description of the methods in our context; refer to \cite{bayer2018hierarchical,bayer2022optimal} for more details).  In general, given \eqref{eq:roots_function},   we must to compute this root for any quadrature or  QMC point.   However,  there are cases in which the root is deterministic (see Remark \ref{rem:Single digital call under GBM model}), where it is only computed once for all quadrature or QMC points.

The dimension of the integration problem \eqref{eq: pre_integration_step_wrt_y1_basket}  may   become very large because of  (i)  numerous time steps  $N$  in the discretization scheme or (ii) a large  number of  assets, $d$. To overcome the high-dimensionality issue, we apply an idea similar to that introduced in \cite{bayer2018hierarchical} and  combine   the ASGQ and QMC methods with  two hierarchical transformations. We first employ a hierarchical  path generation  based on the  Brownian bridge construction to reduce the effective dimension and then use the Richardson extrapolation to reduce the bias and, consequently,  the dimension of the integration problem. More details on these two hierarchical representations are available in \cite{bayer2018hierarchical}.

\subsubsection{Brief description of ASGQ and QMC methods}\label{sec:Brief description of ASGQ and QMC methods}
 We  denote  by  $\boldsymbol{\beta}:= (\beta_n)_{n=1}^{dN-1}  \in \nset^{dN-1}$  a  multi-index and   by  $\bar{I}_{\boldsymbol{\beta}}:= \sum_{j=1}^{\#\mathcal{T}^{m(\boldsymbol{\beta})}} \bar{\omega}_j \bar{I}(\hat{\mathbf{y}}_j) $   the Cartesian quadrature estimator\footnote{The 	cardinality of $\mathcal{T}^{m(\boldsymbol{\beta})}$ is  $\#\mathcal{T}^{m(\boldsymbol{\beta})}=\prod_{n=1}^{dN-1} m (\beta_n)$ with $m(\beta_n)$ quadrature points along the $n$th dimension, and $m: \nset \rightarrow \nset$ is a strictly increasing function with $m(0)=0$ and $m(1)=1$.} of $E[\bar{I}]$ in the  tensor  grid $\mathcal{T}^{m(\boldsymbol{\beta})}= \prod_{n = 1}^{dN-1}  \mathcal{H}^{m(\beta_n)}$, with  $\mathcal{H}^{m(\beta_n)}:=\{y^1_{\beta_n},\dots,y_{\beta_n}^{m(\beta_n)}\} \subset \rset$,   $\hat{\mathbf{y}}_j \in \mathcal{T}^{m(\boldsymbol{\beta})}$ are   the quadrature points and $\bar{\omega}_j$ denotes the  products of the weights of the univariate  rules.  Then, using a construction similar to that  described in    \cite{bayer2018hierarchical,bayer2022optimal},  the ASGQ estimator  for approximating \eqref{eq: pre_integration_step_wrt_y1_basket} using a set of multi-indices  $ \mathcal{I}_{\text{ASGQ}}\subset \nset^{dN-1}$ is  
		\begin{equation}\label{eq:MISC_quad_estimator}
			Q^{\text{ASGQ}}:=	Q^{\mathcal{I}_{\text{ASGQ}}} [\bar{I}]= \sum_{\boldsymbol{\beta} \in \mathcal{I}_{\text{ASGQ}}} \Delta [\bar{I}_{\boldsymbol{\beta}}],
	\end{equation}
	with 
		\begin{equation*}
			 \Delta [\bar{I}_{\boldsymbol{\beta}}] = \left( \prod_{i=1}^{dN-1} \Delta_i \right) \bar{I}_{\boldsymbol{\beta}}, \quad \text{and} \quad 	\Delta_i \bar{I}_{\boldsymbol{\beta}} := \left\{ 
			\aligned 
			\bar{I}_{\boldsymbol{\beta}} &- \bar{I}_{\boldsymbol{\beta}'}  \text{, with } \boldsymbol{\beta}' =\boldsymbol{\beta} - \mathbf{e}_i, \text{ if } \boldsymbol{\beta}_i>0 \COMMA \\
			\bar{I}_{\boldsymbol{\beta}} &, \quad  \text{ otherwise,}
			\endaligned
			\right.
	\end{equation*}
where $\mathbf{e}_i$ denotes the $i$th $(dN-1)$-dimensional unit vector.  

The construction of $\mathcal{I}_{\text{ASGQ}}$ is done a posteriori  and adaptively by profit thresholding, such that $ \mathcal{I}_{\text{ASGQ}}=\{\boldsymbol{\beta} \in \mathbb{N}^{dN-1}_{+}: P_{\boldsymbol{\beta}}	 \ge \overline{T}\}$, where  $P_{\boldsymbol{\beta}}= \frac{\abs{\Delta E_{\boldsymbol{\beta}}}}{\Delta\mathcal{W}_{\boldsymbol{\beta}}}$ is  the profit of a hierarchical surplus,  and 
\begin{align}	\label{error_contr}
	\Delta E_{\boldsymbol{\beta}} &=\left|Q^{\mathcal{I_{\text{ASGQ}}} \cup\{\boldsymbol{\beta}\}}-Q^{\mathcal{I_{\text{ASGQ}}}}\right|  \quad (\text{the error contribution})\\
	\Delta \mathcal{W}_{\boldsymbol{\beta}} &=\operatorname{Work}\left[Q^{\mathcal{I_{\text{ASGQ}}}\cup\{\boldsymbol{\beta}\}}\right]- \operatorname{Work}\left[Q^{\mathcal{I_{\text{ASGQ}}}}\right] \quad   (\text{the work contribution}).\nonumber
\end{align}	
We also use the randomized QMC (rQMC) method based on lattice rules \cite{SLOAN1985131,nuyens2014construction}, as described in  Section 4.2 in \cite{bayer2018hierarchical}. The rQMC estimator is defined as follows:
\begin{align}
	Q^{\text{rQMC}}=\frac{1}{q}\sum_{i=0}^{q-1}\left(\frac{1}{n}\sum_{k=0}^{n-1} (\bar{I} \circ F^{-1})  \left( \frac{k\mathbf{u}+\Delta^{(i)}  \: \text{mod}\: n}{n}\right)  \right),
\end{align}
where $\{\Delta^{(i)}\}_{i=0}^{q-1}$  are $q$   independent random shifts from the uniform distribution of $[0,1]^{dN-1}$, $\mathbf{u}=(u_1,\dots, u_{dN-1})$   is the  fixed lattice generating vector, and $ F^{-1}(\cdot)$ is the  inverse of the standard normal cumulative distribution function.  The total number of rQMC samples  is $M^{\text{rQMC}}=q \times n$.

\section{Smoothness Analysis and Error Discussion}\label{sec:Error discussion and smoothness analysis}

\subsection{Smoothness analysis}\label{sec:Analiticity Analysis}
To achieve the optimal performance of the ASGQ and QMC methods, the integrand should be highly smooth. In this section, we analyze the smoothness  of the integrand of interest after employing our numerical smoothing approach.   First, we introduce the notation and then state the smoothness theorem, Theorem \ref{thr:smoothness}.

For simplicity, we assume that we work on a fixed time interval $[0,T]$, with $T = 1$. Using the Haar mother wavelet
\begin{small}
	\begin{equation*}
		\psi(t) \coloneqq
		\begin{cases}
			1, & 0 \le t < \half, \\
			-1, & \half \le t < 1, \\
			0, & \text{else},
		\end{cases}
	\end{equation*}
\end{small}
we construct the Haar basis functions of $L^2\left([0,1] \right)$ by setting
	\begin{equation*}
		\psi_{-1}(t) \coloneqq \mathbf{1}_{[0,1]}(t);  \quad \psi_{n,k}(t) \coloneqq 2^{n/2} \psi\left( 2^n t - k \right), \quad n \in
		\N_0, \ k = 0, \ldots, 2^n-1.
	\end{equation*}
The support of $\psi_{n,k}$ is  $[2^{-n}k, 2^{-n}(k+1)]$. Moreover, we define
a grid $\mathcal{D}^n \coloneqq \Set{t^n_\ell \mid \ell = 0, \ldots, 2^{n+1}}$ by
$t^n_\ell \coloneqq \f{\ell}{2^{n+1}} T$. The Haar basis functions up to level
$n$ are piecewise constants with points of discontinuity given by
$\mathcal{D}^n$. Next, we define the antiderivatives of the Haar basis functions:
	\begin{equation*}
		\Psi_{-1}(t) \coloneqq \int_0^t \psi_{-1}(s) ds;\quad     \Psi_{n,k}(t) \coloneqq \int_0^t \psi_{n,k}(s) ds.
	\end{equation*}
For an i.i.d.~set of standard normal rdvs (\emph{coefficients})
$Z_{-1}$, $Z_{n,k}$, $n \in \N_0$, $k = 0, \ldots, 2^n-1$, we   define
the standard Brownian motion
	\begin{equation*}
		W_t \coloneqq Z_{-1} \Psi_{-1}(t) + \sum_{n=0}^\infty \sum_{k=0}^{2^n-1}
		Z_{n,k} \Psi_{n,k}(t),
	\end{equation*}
and the truncated version
	\begin{equation}  \label{eq:Brownian-motion-truncated}
		W_t^N \coloneqq Z_{-1} \Psi_{-1}(t) + \sum_{n=0}^N \sum_{k=0}^{2^n-1},
		Z_{n,k} \Psi_{n,k}(t).
\end{equation}
where   $W^N$ already coincides with $W$ along the grid $\mathcal{D}^N$. We
define the corresponding increments for any function or process $F$ as follows:
	\begin{equation*}
		\Delta^N_\ell F \coloneqq F(t^N_{\ell+1}) - F(t^N_\ell).
	\end{equation*}
For simplicity, we consider a one-dimensional SDE for the process $X$ as follows:
\begin{equation}
  \label{eq:SDE}
  dX_t = b(X_t) dW_t, \quad X_0 = x \in \R.
\end{equation}
We assume that $b$  and its derivatives for all
orders are bounded. Recall that we want to compute, for  $g: \R \to \R$, which is not necessarily smooth,  $\expt{g\left( X_T \right)}$. Furthermore, we  define the solution of the Euler–Maruyama  scheme along  $\mathcal{D}^N$
by $X^N_0 \coloneqq X_0 = x$; for convenience, we also define $X^N_T \coloneqq X^N_{2^N}$.
\begin{equation}\label{eq:SDE_descretized}
  X^N_{\ell+1} \coloneqq X^N_\ell + b\left( X^N_{\ell} \right) \Delta^N_\ell W, \quad \ell = 0, \ldots, 2^{N}-1.
\end{equation}
The rdv $X^N_\ell$ is a deterministic function of the rdvs $Z_{-1}$ and $\textbf{Z}^N \coloneqq \left(Z_{n,k} \right)_{n=0,
  \ldots, N, \ k=0, \ldots 2^n-1}$. Using this notation, we write
\begin{equation}\label{eq:X_l_definition}
  X^N_\ell = X^N_\ell\left(Z_{-1}, \textbf{Z}^N \right),
\end{equation}
for the appropriate (now deterministic) map $X^N_\ell: \R \times
\R^{2^{N+1}-1} \to \R$. We  write $y \coloneqq z_{-1}$ and $\textbf{z}^N$ for the
(deterministic) arguments of the function $X^N_\ell$.\footnote{We offer a note of caution  regarding the convergence as $N \to \infty$. Although
			the sequence of random processes $X^N_\cdot$ converges to the solution
			of~\eqref{eq:SDE} (under the usual assumptions on $b$), this is not true in
			any sense for  deterministic functions.}
		
We define  the deterministic function $ H^N: \R^{2^{N+1}-1} \to \R$, expressed as follows:
\begin{equation}
  \label{eq:H-function}
  H^N(\textbf{z}^N) \coloneqq E\left[ g\left( X^N_T\left( Z_{-1}, \textbf{z}^N \right) \right) \right].
\end{equation}
Before stating the main theorem that  $H^N$  satisfies, we motivate its proof and and  underlying assumptions. We consider a mollified version $g_\delta$ of $g$ and the corresponding
	function $H^N_\delta$ (defined by replacing $g$ with $g_\delta$
	in~\eqref{eq:H-function}). Tacitly, assuming that we can interchange the
	integration and differentiation (refer to  Lemma \ref{lem:dXdZ} for justification), we achieve
	\begin{equation*}
		\f{\pa H^N_\delta(\textbf{z}^N)}{\pa z_{n,k}} = E\left[ g_\delta^\prime\left( X^N_T\left( Z_{-1},
		\textbf{z}^N \right) \right) \f{\pa X^N_T(Z_{-1}, \textbf{z}^N)}{\pa z_{n,k}}\right].
	\end{equation*}
	Multiplying and dividing by $\f{\pa X^N_T(Z_{-1}, \textbf{z}^N)}{\pa y}$ and replacing
	the expectation by an integral w.r.t.~the standard normal density, we obtain
	\begin{equation}
		\label{eq:1}
		\f{\pa H^N_\delta(\textbf{z}^N)}{\pa z_{n,k}} = \int_\R \f{\pa g_\delta\left( X^N_T
			(y, \textbf{z}^N) \right)}{\pa y} \left( \f{\pa X^N_T}{\pa y}(y, \textbf{z}^N)
		\right)^{-1} \f{\pa X^N_T}{\pa z_{n,k}}(y, \textbf{z}^N) \f{1}{\sqrt{2\pi}}
		e^{-\f{y^2}{2}} dy.
	\end{equation}
	If integration by parts is possible, we can discard the mollified version and obtain the smoothness of $H^N$ because
	\begin{equation*}
		\f{\pa H^N(\textbf{z}^N)}{\pa z_{n,k}} = -\int_\R g\left( X^N_T
		(y, \textbf{z}^N) \right) \f{\pa}{\pa y} \left[ \left( \f{\pa X^N_T}{\pa y}(y, \textbf{z}^N)
		\right)^{-1} \f{\pa X^N_T}{\pa z_{n,k}}(y, \textbf{z}^N) \f{1}{\sqrt{2\pi}}
		e^{-\f{y^2}{2}}\right] dy.
	\end{equation*}
	However, there are situations in which there may be a potential problem looming in the inverse of the
	derivative w.r.t.~$y$\footnote{As an example, let us assume that
		$X^N_T(y,\textbf{z}^N) = \cos(y) + z_{n,k}$. Then, \eqref{eq:1} is generally not
		integrable.}. This observation motivates the introduction of Assumptions \ref{ass:boundedness-derivative} and \ref{ass:boundedness-inverse}.
	\begin{notation}\label{notation}
			For sequences of rdvs $F_N$, we write  that
			$F_N = \mathcal{O}(1)$ if there exists a rdv $C$ with finite moments
			of all orders,  such that for all $N$, we have $\abs{F_N} \le C $ a.s.
	\end{notation}
	\begin{assumption}
		\label{ass:boundedness-derivative}
		There are positive rdvs $C_p$ with finite moments of all orders\footnote{It is probably difficult to argue that a deterministic constant $C$  may exist in Assumption \ref{ass:boundedness-derivative} .} such that
		\begin{equation*}
			\forall N \in \N,\ \forall \ell_1, \ldots, \ell_p \in \{0, \ldots, 2^N-1\}:\ \abs{\f{\pa^p
					X^N_T}{\pa X^N_{\ell_1} \cdots \pa X^N_{\ell_p}}} \le C_p \text{ a.s.}
		\end{equation*}
		In terms of  notation~\ref{notation}, this means that $\f{\pa^p X^N_T}{\pa
			X^N_{\ell_1} \cdots \pa X^N_{\ell_p}} = \mathcal{O}(1)$.
	\end{assumption}
	Assumption~\ref{ass:boundedness-derivative} is natural   because  it is fulfilled if  the diffusion coefficient $b(\cdot)$ is smooth. This situation is valid for many option pricing models.   Besides Assumption \ref{ass:boundedness-derivative},  we  
	make  another assumption, Assumption \ref{ass:boundedness-inverse},  which might be challenging to verify in practice for some models. In Appendix \ref{appendix:Discussion of Assumption}, we explain  cases  with sufficient conditions where this assumption is valid.
	\begin{assumption}
		\label{ass:boundedness-inverse}
		For any $p \in \N$ we obtain
		\begin{equation*}
			\left( \f{\pa X^N_T}{\pa y}\left( Z_{-1}, \textbf{Z}^N \right) \right)^{-p} = \mathcal{O}(1).
		\end{equation*}
\end{assumption}
We are now in a position to state Theorem \ref{thr:smoothness} for  $H^N$. We refer to Appendix \ref{appendix:Details for the proof of Theorem} for  its proof.
\begin{theorem}
  \label{thr:smoothness}
  Assume that  $X^N_T$, defined by  \eqref{eq:SDE_descretized} and \eqref{eq:X_l_definition}, satisfies  Assumptions~\ref{ass:boundedness-derivative}  and~\ref{ass:boundedness-inverse}. Then,  for any $p \in \N$ and indices $n_1, \ldots, n_p$ and $k_1, \ldots, k_p$
  (satisfying $0 \le k_j < 2^{n_j}$),  the function  $H^N$ defined in \eqref{eq:H-function}  satisfies the following  (with constants independent of  $n_j, k_j$)
  \begin{equation*}
    \f{\pa^p H^N}{\pa z_{n_1,k_1} \cdots \pa z_{n_p,k_p}}(\textbf{z}^N) = 
    \mathcal{O}\left( 2^{-\sum_{j=1}^p n_j/2} \right). 
  \end{equation*}
In particular, $H^N$  is of  class $C^\infty$.
\end{theorem}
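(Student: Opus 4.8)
The plan is to exploit that, after the smoothing integration over $y \coloneqq z_{-1}$, the non-smooth map $g$ enters $H^N$ only as an integration weight against a $C^\infty$ kernel; all the $\mathbf{z}^N$-dependence will be carried by smooth objects, and the decay rate $2^{-\sum_j n_j/2}$ will come from the fact that each differentiation in $z_{n_j,k_j}$ pulls out a factor of size $\|\Psi_{n_j,k_j}\|_\infty = \Ordo{2^{-n_j/2}}$. Concretely, writing $G(y,\mathbf{z}^N) \coloneqq X^N_T(y,\mathbf{z}^N)$ and letting $\rho$ be the standard normal density, I would start from $H^N(\mathbf{z}^N) = \int_{\R} g(G(y,\mathbf{z}^N))\,\rho(y)\,dy$. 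By Assumption~\ref{ass:boundedness-derivative} (and the standing assumptions on $b$), $G$ is $C^\infty$; by Assumption~\ref{ass:boundedness-inverse}, $\partial_y G$ never vanishes and $1/\partial_y G$ is bounded, so $y \mapsto G(y,\mathbf{z}^N)$ is a $C^\infty$ diffeomorphism of $\R$ onto $\R$, with inverse $\mathcal Y(\cdot,\mathbf{z}^N)$. Substituting $x = G(y,\mathbf{z}^N)$ would give
\[
  H^N(\mathbf{z}^N) = \int_{\R} g(x)\,K(x,\mathbf{z}^N)\,dx, \qquad K(x,\mathbf{z}^N) \coloneqq \rho\!\left(\mathcal Y(x,\mathbf{z}^N)\right)\partial_x \mathcal Y(x,\mathbf{z}^N),
\]
with $K$ of class $C^\infty$ in $(x,\mathbf{z}^N)$ since $\mathcal Y$ is the inverse of a smooth map with non-vanishing $y$-derivative. (Equivalently, the same representation follows from repeated integration by parts in $y$, via $g'(G)\,\partial_z G = \partial_y[g(G)]\,\partial_z G/\partial_y G$, the boundary terms vanishing by decay; the change of variables just makes the bookkeeping transparent.)

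Next I would differentiate $p$ times under the integral sign — which, along with continuity in $\mathbf{z}^N$ of the result, is justified once the estimate below is available (dominated convergence, polynomial growth of $g$, Gaussian decay of the kernel) — to get
\[
  \frac{\partial^p H^N}{\partial z_{n_1,k_1}\cdots\partial z_{n_p,k_p}}(\mathbf{z}^N) = \int_{\R} g(x)\,\frac{\partial^p K}{\partial z_{n_1,k_1}\cdots\partial z_{n_p,k_p}}(x,\mathbf{z}^N)\,dx.
\]
The theorem then reduces to the estimate
\[
  \left|\frac{\partial^p K}{\partial z_{n_1,k_1}\cdots\partial z_{n_p,k_p}}(x,\mathbf{z}^N)\right| \le C_p\,2^{-\sum_j n_j/2}\,\theta(x),
\]
with $C_p$ independent of the indices and $\theta$ a fixed rapidly decreasing function; integrating against $g$ then gives the stated bound, and since $p$ is arbitrary, $H^N \in C^\infty$.

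For that estimate I would use two ingredients. First, implicit differentiation of $G(\mathcal Y(x,\mathbf{z}^N),\mathbf{z}^N)=x$ writes every mixed $z$-partial of $\mathcal Y$ as a finite sum of products of $y$- and $z$-partials of $G$ (evaluated at $y=\mathcal Y(x,\mathbf{z}^N)$) divided by powers of $\partial_y G$, which are harmless by Assumption~\ref{ass:boundedness-inverse}. Second, by induction over the Euler steps in \eqref{eq:SDE_descretized}, using the Leibniz and Fa\`a di Bruno rules together with the boundedness of $b$ and all its derivatives, I would show that each derivative $\partial_y^q\,\partial_{z_{n_1,k_1}}\!\cdots\partial_{z_{n_r,k_r}} X^N_\ell$ is a finite sum of products whose only index-dependent factors are increments $\Delta^N_m \Psi_{n_j,k_j}$, exactly one per index $j$; since $|\Delta^N_m \Psi_{n,k}| \le \|\Psi_{n,k}\|_\infty = \tfrac12\,2^{-n/2}$, such a derivative is $\Ordo{2^{-\sum_j n_j/2}}$, with all other factors bounded uniformly in $\mathbf{z}^N$ and of at most polynomial growth in $y$. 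Feeding this into the first ingredient, and using the Gaussian factor $\rho(\mathcal Y)$ to absorb the polynomial growth in $y$ (hence in $x$), would yield the claimed bound.

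The hard part will be this last induction: one must identify the right invariant for $X^N_\ell$ — that every monomial carries exactly one $\Psi$-increment per differentiated index, with all remaining factors bounded and of controlled growth in $y$ — verify that the Euler recursion \eqref{eq:SDE_descretized} preserves it, and then propagate this structure through the implicit-function computation for $\mathcal Y$, keeping every estimate uniform in the unboundedly many indices $n_j,k_j$ and making sure the Gaussian weight secures integrability in $x$. The change of variables and the differentiation under the integral sign are, by comparison, routine.
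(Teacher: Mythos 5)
Your overall strategy -- use the integration over $y=z_{-1}$ to move all derivatives off the non-smooth $g$, and extract one factor of order $2^{-n_j/2}$ per differentiated coordinate from the Haar antiderivatives -- is the same as the paper's, which implements it by mollifying $g$, multiplying and dividing by $\partial X^N_T/\partial y$, and integrating by parts $p$ times rather than by your change of variables. The gap is in how you propose to close the estimate. You read Assumption~\ref{ass:boundedness-inverse} as saying that $1/\partial_y X^N_T$ is bounded, so that $y \mapsto X^N_T(y,\mathbf{z}^N)$ is a diffeomorphism of $\rset$ onto $\rset$ and the kernel satisfies a pointwise bound $\bigl|\partial^p_{z} K(x,\mathbf{z}^N)\bigr| \le C_p\, 2^{-\sum_j n_j/2}\,\theta(x)$ with a deterministic constant $C_p$ and a fixed rapidly decreasing $\theta$. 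The assumption says much less: in the paper's convention, $\mathcal{O}(1)$ means ``bounded a.s.\ by a random variable with finite moments of all orders,'' and the paper explicitly remarks that deterministic constants are likely out of reach even in Assumption~\ref{ass:boundedness-derivative}. Under the stated hypotheses the image of $y\mapsto X^N_T(y,\mathbf{z}^N)$ need not be all of $\rset$, $1/\partial_y X^N_T$ need not be bounded uniformly in $(y,\mathbf{z}^N)$, and your claim that all remaining factors are ``bounded uniformly in $\mathbf{z}^N$'' is false: for instance $\partial X^N_T/\partial X^N_\ell=\prod_{j\ge \ell}\left(1+b^\prime(X^N_j)\,\Delta^N_j W\right)$ is a polynomial in $(y,\mathbf{z}^N)$, unbounded in $\mathbf{z}^N$. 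Hence the dominated, uniform-in-$\mathbf{z}^N$ kernel estimate your argument hinges on is not a consequence of Assumptions~\ref{ass:boundedness-derivative}--\ref{ass:boundedness-inverse}; it is strictly stronger than what the theorem asserts. The paper avoids exactly this by keeping every factor as an $\mathcal{O}(1)$ random variable with finite moments and integrating out only the single variable $y$ at the end (see the footnote to Proposition~\ref{prop:first-derivatives}), which is compatible with moment-type rather than uniform bounds.

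A second, more minor, issue is the bookkeeping in your induction over the Euler steps. Bounding each factor $\Delta^N_m\Psi_{n_j,k_j}$ by $\|\Psi_{n_j,k_j}\|_\infty\approx 2^{-n_j/2}$ and calling the chain-rule expansion ``a finite sum'' hides that, for each index $j$, the sum runs over the $2^{N-n_j+1}$ time steps in the support of $\psi_{n_j,k_j}$; with only the sup-norm bound per term, the resulting constant grows like $2^{N-n_j}$ per index, i.e.\ it is not independent of $n_j$ (nor of $N$). The correct count uses the finer increment bound $|\Delta^N_m\Psi_{n_j,k_j}|\le 2^{-(N-n_j/2)}$ together with the number $2^{N-n_j+1}$ of nonvanishing terms (equivalently, the total variation $2^{-n_j/2}$ of $\Psi_{n_j,k_j}$), exactly as in Lemmas~\ref{lem:dXdZ} and~\ref{lem:d2XdZ2} of the paper; your structural invariant (one $\Psi$-increment per differentiated index, all other factors $\mathcal{O}(1)$) is the right one, but the quantitative step needs this repair, and the ``other factors'' must be controlled in the random-variable sense discussed above rather than uniformly.
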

\begin{remark}[Regarding the analyticity of $H^N$]
	\label{rem:analyticity}
	We  expect that $H^N$ is analytic; however, the formal proof is subtle.  In particular, our proof in Appendix \ref{appendix:Details for the proof of Theorem} relies on successively
	applying the technique of dividing by
	$\f{\pa X^N_T}{\pa y}$ and then integrating by parts. Thus, the constant in $\mathcal{O}\left( 2^{-\sum_{j=1}^p n_j/2} \right)$   
	depends on $p$ and increases in $p$. In other words,
	Theorem~\ref{thr:smoothness} should be interpreted as an assertion of the anisotropy in the variables  $z_{n,k}$ rather than a statement on the
	behavior of higher  derivatives of $H^N$. Our proof reveals that the number of summands increases as $p!$. Therefore, the statement of the theorem does not already imply
	analyticity. This problem is an artifact of our construction, and
	there is no reason to assume such  behavior in general. Finally, we expect the analyticity region to shrink as $N \rightarrow \infty$, which motivates the use of the Richardson extrapolation to keep $N$  as small as possible while achieving the desired accuracy.
\end{remark}
\begin{remark}
The analysis of the smoothness direction and sufficient conditions for  Theorem \ref{thr:smoothness} to be valid at  high dimensions is an open problem and is beyond the  study scope.
\end{remark}

\subsection{Error  and work discussion  for ASGQ  combined with numerical smoothing}\label{sec:Error discussion in the context of ASGQ method}
In this section, we   analyze the  errors  in the proposed  approach when using the ASGQ   method combined with  numerical smoothing. The error analysis of the QMC method combined with numerical smoothing is almost similar, as explained in Remark \ref{rem: QMC_smoothing_error}. Following the notation  in Sections \ref{sec:Discrete time, practical motivation} and \ref{sec:Brief description of ASGQ and QMC methods},  we obtain the following error decomposition for  the ASGQ estimator, $Q^{\text{ASGQ}}$ :
\begin{align}\label{eq: error decomposition}
\expt{g(X(T)}-Q^{\text{ASGQ}}&= \underset{\text{Error I: bias or weak error}}{\underbrace{\expt{g(X(T))}- \expt{g(\bar{\mathbf{X}}^{\Delta t}(T))}}}\nonumber\\
&+\underset{\text{Error II: numerical smoothing error}} {\underbrace{\expt{I\left(\mathbf{Y}_{-1}, \mathbf{Z}^{(1)}_{-1},\dots,\mathbf{Z}^{(d)}_{-1}\right)}- \expt{\bar{I}\left(\mathbf{Y}_{-1}, \mathbf{Z}^{(1)}_{-1},\dots,\mathbf{Z}^{(d)}_{-1} \right)}}}\nonumber\\
&+ \underset{\text{Error III: ASGQ  error}}{\underbrace{\expt{\bar{I}\left(\mathbf{Y}_{-1}, \mathbf{Z}^{(1)}_{-1},\dots,\mathbf{Z}^{(d)}_{-1}\right)}- Q^{\text{ASGQ}}}}.
\end{align} 
Because we use the Euler–Maruyama scheme to simulate asset dynamics, we achieve
\begin{equation}\label{eq:Error I order}
\text{Error I}=\Ordo{\Delta t}.
\end{equation}
We  denote  by $\text{TOL}_{\text{Newton}} $ the tolerance of the  Newton method  to approximate the discontinuity location by  finding the roots of $P(y^\ast_1)$ defined in \eqref{eq:roots_function}. Thus,  $\abs{P(\bar{y}^\ast_1)} \le \text{TOL}_{\text{Newton}} $, and using the Taylor expansion, $\left(y^\ast_1-\bar{y}^\ast_1\right)=\Ordo{\text{TOL}_{\text{Newton}}}$. Consequently,  \text{Error II}  in  \eqref{eq: error decomposition} is expressed as
%
\begin{align}\label{eq:Error II order}
\text{Error II}&:=\expt{I\left(\mathbf{Y}_{-1}, \mathbf{Z}^{(1)}_{-1},\dots,\mathbf{Z}^{(d)}_{-1}\right)}- \expt{\bar{I}\left(\mathbf{Y}_{-1}, \mathbf{Z}^{(1)}_{-1},\dots,\mathbf{Z}^{(d)}_{-1} \right)}\nonumber\\
&=   E \left[\int_{-\infty}^{y^\ast_1} G\left(y_1,\mathbf{y}_{-1},\mathbf{z}^{(1)}_{-1},\dots,\mathbf{z}^{(d)}_{-1} \right) \rho_{1}(y_1) dy_1+ \int_{y_1^\ast}^{+\infty} G\left(y_1,\mathbf{y}_{-1},\mathbf{z}^{(1)}_{-1},\dots,\mathbf{z}^{(d)}_{-1} \right) \rho_{1}(y_1) dy_1\right.\nonumber\\
&\quad \quad  \quad \quad \quad \quad  \left. - \left(\int_{-\infty}^{\bar{y}^\ast_1} G\left(y_1,\mathbf{y}_{-1},\mathbf{z}^{(1)}_{-1},\dots,\mathbf{z}^{(d)}_{-1} \right) \rho_{1}(y_1) dy_1+ \int_{\bar{y}_1^\ast}^{+\infty} G\left(y_1,\mathbf{y}_{-1},\mathbf{z}^{(1)}_{-1},\dots,\mathbf{z}^{(d)}_{-1} \right) \rho_{1}(y_1) dy_1\right)\right. \nonumber\\
&\quad \quad  \quad \quad \quad \quad  \left. + \left(\int_{-\infty}^{\bar{y}^\ast_1} G\left(y_1,\mathbf{y}_{-1},\mathbf{z}^{(1)}_{-1},\dots,\mathbf{z}^{(d)}_{-1} \right) \rho_{1}(y_1) dy_1+ \int_{\bar{y}_1^\ast}^{+\infty} G\left(y_1,\mathbf{y}_{-1},\mathbf{z}^{(1)}_{-1},\dots,\mathbf{z}^{(d)}_{-1} \right) \rho_{1}(y_1) dy_1\right)\right. \nonumber\\
&\quad \quad  \quad \quad \quad \quad  \left. - \sum_{k=0}^{M_{\text{Lag}}} \eta_k \; G\left( \zeta_k\left(\bar{y}^{\ast}_1\right),
\mathbf{y}_{-1},\mathbf{z}^{(1)}_{-1},\dots,\mathbf{z}^{(d)}_{-1}   \right)\right]\nonumber\\
& =  \Ordo{\text{TOL}_{\text{Newton}}}+\Ordo{M_{\text{Lag}}^{-s/2}},
\end{align}
where   $s>0$ is related to the degree of regularity of the integrand, $G$, w.r.t.~$y_1$.\footnote{For the  parts of the  domain separated by the discontinuity  location, the derivatives of $G$ w.r.t.~$y_1$ are bounded up to order $s$.}
  
 The first  error contribution in \eqref{eq:Error II order} originates from  the gap created by integrating  $G$ over domains separated by  the approximated discontinuity location $\bar{y}^\ast_1$ instead of   $y^\ast_1$, which is the exact location.  We consider without loss of generality that $\bar{y}^{\ast}_1 <y_1^{\ast}$, then we obtain\footnote{A similar argument holds for the part of $G$ located on the right of the discontinuity.}
 \begin{align*}
 	&E \left[\int_{-\infty}^{y^\ast_1} G\left(y_1,\mathbf{y}_{-1},\mathbf{z}^{(1)}_{-1},\dots,\mathbf{z}^{(d)}_{-1} \right) \rho_{1}(y_1) dy_1 - \int_{-\infty}^{\bar{y}^\ast_1} G\left(y_1,\mathbf{y}_{-1},\mathbf{z}^{(1)}_{-1},\dots,\mathbf{z}^{(d)}_{-1} \right) \rho_{1}(y_1) dy_1\right]\\
 	&=E \left[\int_{\bar{y}^{\ast}_1}^{y^\ast_1} G\left(y_1,\mathbf{y}_{-1},\mathbf{z}^{(1)}_{-1},\dots,\mathbf{z}^{(d)}_{-1} \right) \rho_{1}(y_1) dy_1 \right]\\
 	&=E \left[\int_{\bar{y}^{\ast}_1}^{y^\ast_1} \left(G\left(\bar{y}^{\ast}_1;. \right)+G'\left(\bar{y}^{\ast}_1;. \right) (y_1-\bar{y}^{\ast}_1)+ \Ordo{(y_1-\bar{y}^{\ast}_1)^2}\right) \rho_{1}(y_1) dy_1 \right]\\
 	&=E \left[G\left(\bar{y}^{\ast}_1;. \right) \left(F(y^{\ast}_1) -F(\bar{y}^{\ast}_1) \right) + \Ordo{\text{TOL}_{\text{Newton}}^2}\right]=\Ordo{\text{TOL}_{\text{Newton}}},
 \end{align*}
where $F(\cdot)$ is the standard normal cumulative distribution function.  For the error bound above  to hold, we assume that $G$ and all its derivatives at $\bar{y}^{\ast}_1$,  apprearing in the constant in  $\Ordo{\text{TOL}_{\text{Newton}}}$,  are integrable.
 
 The second error  contribution  in  \eqref{eq:Error II order} originates from the one-dimensional preintegration step using the Laguerre quadrature, as explained in Section \ref{sec:Step $2$: Integration}. Considering that $G$
 is a  smooth function in  parts of the integration domain separated by the discontinuity,  we achieve a spectral convergence of the quadrature \cite{mastroianni1994error}, justifying the term   $M_{\text{Lag}}^{-s/2}$. For the error bound in \eqref{eq:Error II order} to hold, we assume that the constant in $\Ordo{M_{\text{Lag}}^{-s/2}}$, which depends on $\mathbf{y}_{-1}, \mathbf{z}_{-1}^{(1)},\dots, \mathbf{z}_{-1}^{(d)}$,  is integrable.

Finally, 	considering  $M_{\text{ASGQ}}$  quadrature points used in the ASGQ method, we achieve 
\begin{equation}\label{eq:Error III, ASGQ, order}
	\text{Error III}=\Ordo{M_{\text{ASGQ}}^{-p/2}},
\end{equation}
where the bound in \eqref{eq:Error III, ASGQ, order} is justified by the analysis in  \cite{chen2018sparse,ernst2018convergence} and $p:=p\left( N,d \right)>0$ is related to the degree of regularity of  $\bar{I}$, as defined in \eqref{eq: pre_integration_step_wrt_y1_basket} and \eqref{eq:smooth_function_after_pre_integration}, in the $(dN-1)$-dimensional space.\footnote{We refer to \cite{chen2018sparse,ernst2018convergence} for a clear characterization of $p$.} In this case,   our smoothness analysis (Section \ref{sec:Analiticity Analysis}) implies that $p \gg 1$, under the assumption that $\bar{I}$ converges to $I$ for  large values of  $M_{\text{Lag}}$  and small $\text{TOL}_{\text{Newton}}$.     Nevertheless,  the optimal performance for ASGQ can deteriorate (i) if $p$ and $s$ are not sufficiently large, or (ii) due to  the adverse effect of the high dimension that may severely affect  the rates.  Finally, although we work in the preasymptotic regime (small number of time steps, $N$),  the regularity parameter $p$ may deteriorate when increasing the dimension of the integration problem by increasing $N$, justifying the use of Richardson extrapolation.

Considering  \eqref{eq: error decomposition}, \eqref{eq:Error I order}, \eqref{eq:Error II order} and \eqref{eq:Error III, ASGQ, order}, the total error estimate of our approach is
\begin{equation}\label{eq:total_error_estimate}
\mathcal{E}_{\text{total, ASGQ}}:=\expt{g(X(T)}-Q^{\text{ASGQ}}=\Ordo{\Delta t}+\Ordo{M_{\text{ASGQ}}^{-p/2}}+\Ordo{M_{\text{Lag}}^{-s/2}}+ \Ordo{\text{TOL}_{\text{Newton}}}.
\end{equation}
To achieve  optimal performance, we need to optimize  the parameters  in \eqref{eq:total_error_estimate} to satisfy a certain error tolerance, $\text{TOL}$, with the least amount of work, which  can be achieved by  solving \eqref{eq:opt_ASGQ_work}:
\begin{align}\label{eq:opt_ASGQ_work}
\begin{cases} 
\underset{\left(M_{\text{ASGQ}},M_{\text{Lag}}, \text{TOL}_{\text{Newton}}\right)}{\operatorname{min}} \: \text{Work}_{\text{ASGQ}} \propto \Delta t^{-1}  \times M_{\text{ASGQ}} \times M_{\text{Lag}}  \\ 
s.t. \:   \mathcal{E}_{\text{total,ASGQ}}=\text{TOL}.
\end{cases}
\end{align}
We do not solve \eqref{eq:opt_ASGQ_work} in our experiments in Section \ref{sec:Numerical experiments: Numerical smoothing with ASGQ} (we select the parameters heuristically to achieve a suboptimal performance). However, in Appendix \ref{appendix:More details on the error and work discussion of ASGQ method combined with numerical smoothing}  we reveal  that, for a given error tolerance $\text{TOL}$,  under certain conditions of the regularity parameters $s$ and $p$ ($p,s \gg 1$), a lower bound on the computational work of the ASGQ method is of order $\text{Work}_{\text{ASGQ}}=\Ordo{\text{TOL}^{-1}}$. This complexity is significantly better than  $\Ordo{\text{TOL}^{-3}}$ achieved by the MC method.
\begin{remark}[On the error  of the  QMC  method combined with numerical smoothing]\label{rem: QMC_smoothing_error}
	Let  $Q^{\text{rQMC}}$  denote the rQMC estimator  to approximate $\expt{\bar{I}}$ in  \eqref{eq: pre_integration_step_wrt_y1_basket}  with $M_{\text{rQMC}}$ samples. Then, we achieve an error decomposition similar to that  in \eqref{eq: error decomposition}, with  \text{Error III} being  the rQMC  statistical error in this case \cite{niederreiter1992random}, and expressed as follows:
	\begin{equation*}\label{eq:Error III, QMC, order}
		\text{Error III} \: \text{(rQMC  error)}=\Ordo{M_{\text{rQMC}}^{-\frac{1}{2}-\delta}  \left(\log M_{\text{rQMC}}\right)^{d \times N-1}},
	\end{equation*}
where  $0 \le \delta \le \frac{1}{2}$  is related to the degree of regularity of  $\bar{I}$, defined in \eqref{eq:expression_h_bar}.

Moreover, the  analysis of QMC with randomly shifted lattice rules  in  \cite{sloan1998quasi,dick2013high} indicates that convergence rates close to the optimal rates $\Ordo{M_{\text{rQMC}}^{-1}}$ can be observed if our integrand, $\bar{I}(\cdot)$, belongs to  the $(dN-1)$-dimensional weighted Sobolev space of functions with square-integrable mixed first derivatives, $\mathcal{W}_{dN-1,\boldsymbol{\gamma}}$,\footnote{$\mathcal{W}_{dN-1,\boldsymbol{\gamma}}$ is  equipped with the (unanchored) norm $
				|| f||^2_{\mathcal{W}_{dN-1,\boldsymbol{\gamma}}} = \sum_{ \boldsymbol{\alpha} \subseteq  \{1:dN-1\}} \frac{1}{\gamma_\alpha} \int_{[0,1]^{| \boldsymbol{\alpha} |}}	 \left(  \int_{[0,1]^{dN-1-| \boldsymbol{\alpha}|}}  \frac{\partial^{| \boldsymbol{\alpha}|}}{\partial \mathbf{y}_{ \boldsymbol{\alpha}} } f(\mathbf{y}) d\mathbf{y}_{- \boldsymbol{\alpha}}\right)^2 d \mathbf{y}_{ \boldsymbol{\alpha}}$,
			where $\mathbf{y}:=(y_j)_{j \in  \boldsymbol{\alpha}}$ and $\mathbf{y}_{- \boldsymbol{\alpha}}:=(y_j)_{j \in \{1:dN-1\} \setminus  \boldsymbol{\alpha}}$.}  where $\boldsymbol{\gamma}:= \{\gamma_\alpha > 0: \boldsymbol{\alpha} \subseteq  \{1, 2, \dots, dN-1\}\}$ is a given collection of weights.
\end{remark}
\begin{remark}
Although  we did not use the Richardson extrapolation in the previous analysis, this hierarchical representation improves the complexity rate of the ASGQ method (as  observed in our numerical experiments in Section \ref{sec:Numerical experiments: Numerical smoothing with ASGQ}).
\end{remark}
\begin{remark}
As an alternative  method to approximate \eqref{eq: option price as integral_basket}, we can use the multilevel MC (MLMC) method, which also benefits  from  the numerical smoothing  (see \cite{bayer2020multilevel}) in terms of complexity and robustness, where we recover complexities obtained for Lipschitz functionals, $\mathcal{O}\left(\text{TOL}^{-2} (\log(\text{TOL}))^2\right)$ when using the Euler–Maruyama  scheme with numerical smoothing .  The comparison OF MLMC and deterministic quadrature methods, such as ASGQ,  is not  straightforward and is problem-dependent because there is a compromise between the regularity class  of the integrand and the anisotropy w.r.t.~the different dimensions. We intend to conduct this systematic comparison in future work. We also plan to explore the idea of numerical smoothing with multilevel QMC \cite{giles2009multilevel}, where we can profit from the good features of QMC and MLMC in this setting.
\end{remark}

\section{Numerical Experiments}\label{sec:Numerical experiments: Numerical smoothing with ASGQ}
We conduct  experiments using three  examples of payoffs: a single-asset digital option,  a single-asset call option, and  a 	four-asset arithmetic basket call option.\footnote{The payoff  $g$ is  expressed by $	g(\mathbf{x})=\max\left(\sum_{j=1}^{d} c_{j} x^{(j)}-K,0  \right)$, where $\{c_j\}_{j=1}^d$ denote the weights of the basket.} These examples are tested under two  dynamics for the asset price:   the discretized GBM model (a didactic example) and  the   Heston model, which is a relevant application of our approach (discretization is required).   Table \ref{table: Examples details.} lists the specifications of each example. Further details of the models and discretization schemes are described  in Section \ref{sec: Experiments settings}.  Sections \ref{sec: ASGQ with numerical smoothing versus ASGQ without  smoothing} and \ref{sec: QMC with numerical smoothing versus QMC without  smoothing}  demonstrate the advantage of combining numerical smoothing with the ASGQ  and rQMC methods over the  ASGQ and rQMC without  smoothing. In Section \ref{sec: Study of the numerical smoothing parameters},   we study the effect of the  numerical smoothing parameters on the numerical smoothing error and consequently on the quadrature error of the ASGQ method. Finally,   Section  \ref{sec: ASGQ with numerical smoothing  versus MC without smoothing} compares  the  MC and ASGQ methods in terms of errors and computational times. Our ASGQ implementation was based on \url{https://sites.google.com/view/sparse-grids-kit}.
\begin{small}
	\begin{table}[h!]
		\centering
		\begin{small}
			\begin{tabular}{||l|*{3}{|c|}r}
				\toprule[1.5pt]
			Example &	Parameters            & Reference solution    \\
				\hline
				
				Single-asset digital option under  GBM  & $\sigma=0.4$, $r=0$,	$T=1$,   $S_0=K=100$  & $0.42074$  \\	
					\hline
				Single-asset digital option under  Heston  &	 $v_0=0.04$, $\mu=0$,  $\rho=-0.9$, $\kappa=1$, $\xi=0.1$, & $\underset{(2.0e-05)}{0.5146}$  \\
				 &	 $\theta=0.0025$, $S_0=K=100$&   \\
					\hline
				Single-asset call option under  GBM  & 	$\sigma=0.4$, $r=0$,	$T=1$,   $S_0=K=100$   & $15.8519$  \\
					\hline
					Single-asset call  option under  Heston  &	$v_0=0.04$, $\mu=0$,  $\rho=-0.9$, $\kappa=1$, $\xi=0.1$, & $6.33254$  \\
					&	 $\theta=0.0025$, $S_0=K=100$ &   \\
						\hline
					
					$4$-asset basket call option   &$\sigma_{1,2,3,4}=0.4$, $\rho=0.3$, $r=0$, $T=1$,    & $\underset{(1.0e-03)}{11.04}$    \\
					under  GBM	& $S_0^{1,2,3,4}=K=100$,   $c_{1,2,3,4}=1/4$  &   \\
				\bottomrule[1.25pt]
			\end{tabular}
		\end{small}
		\caption{Model and option parameters of the tested examples  with their reference solutions.   The reference solution for the call option under the Heston model is computed  using Premia software using the method  in \cite{heston1993closed}. The numbers between parentheses correspond to the statistical error estimates when the reference solution is estimated using the MC estimator.} 
		\label{table: Examples details.}
	\end{table}
\end{small}
\subsection{Experiments setting}\label{sec: Experiments settings}
Regarding  the  numerical experiments  under the  GBM model, the assets dynamics follow \eqref{eq:lognormal model} and  are simulated using the Euler–Maruyama  scheme. Moreover, we test options under the Heston model \cite{heston1993closed,broadie2006exact,kahl2006fast,andersen2007efficient}, providing the following dynamics:
	\begin{align}\label{eq:dynamics Heston}
		dS_t&=\mu S_t dt+\sqrt{v_t}S_t dW_t^S= \mu S_t dt+ \rho\sqrt{v_t}S_t dW_t^v+ \sqrt{1-\rho^2} \sqrt{v_t}S_t dW_t \nonumber\\
		dv_t&=\kappa (\theta-v_t)dt+\xi \sqrt{v_t} dW_t^v\COMMA
\end{align}
where  $S_t$ denotes the asset price, $v_t$ represents the instantaneous variance, $\left(W_{t}^{S},W_{t}^{v}\right)$ are the correlated Wiener processes with correlation $\rho$, $\mu$  represents the asset's rate of return, $\theta$ is  the mean  variance, $\kappa$ is the rate at which $v_t$ reverts to $\theta$, and $\xi$ denotes the volatility of the volatility.

Many simulation schemes of \eqref{eq:dynamics Heston} have been proposed in the literature.    These methods primarily differ in how they simulate the volatility process to ensure positivity. Appendix \ref{sec:Schemes to simulate the Heston dynamics} provides an overview of the most popular methods in this context.

The ASGQ and rQMC  methods are extremely sensitive to the smoothness of the integrand. In particular, we  numerically found (Appendix  \ref{sec:On the choice of the simulation scheme of the Heston model}) that using  a nonsmooth transformation to ensure the positivity of the volatility process deteriorates the performance of the ASGQ method. To overcome this undesirable feature,   we propose using  an alternative scheme, namely, the Heston  Ornstein–Uhlenbeck (OU)-based scheme, in which the volatility is simulated as the sum of the OU or Bessel processes (Appendix \ref{sec:Discretization of Heston model with the volatility process Simulated using the sum of  Ornstein-Uhlenbeck (Bessel) processes}).  In the literature  \cite{andersen2007efficient, lord2010comparison,alfonsi2010high}, the focus has been on designing schemes that ensure the positivity of the volatility process and exhibit a good weak error behavior. In our setting, an optimal scheme is determined based on two criteria: (i) the behavior of the rates of mixed differences, which is an important feature for ensuring the    optimal performance of the ASGQ method (see  Appendix \ref{sec:Comparison in terms of  the weak error behavior} for more details), and  (ii) the weak error behavior  to apply the Richardson extrapolation when necessary. Comparing the different schemes (see Appendices \ref{sec:Comparison in terms of mixed differences rates} and \ref{sec:Comparison in terms of  the weak error behavior}) suggests that  the Heston OU-based scheme  yields the best results based on  our criteria.   Therefore, we used this scheme with the ASGQ and rQMC methods in our numerical experiments. For the MC method, we used the full truncation scheme (explained in Appendix \ref{sec:Discretization of Heston model with a non smooth transformations for the volatility process}).
\begin{remark}
In this work, our primary focus is the numerical smoothing idea with the  implied additional regularity  for the quantity of interest and its benefits on the performance of ASGQ and QMC. As a byproduct, for the examples under the Heston model, we numerically found that using a nonsmooth transformation to ensure the positivity of the volatility process deteriorates the   performance of ASGQ  even after applying the numerical smoothing because it affects the path regularity of the process. To overcome this undesirable feature for the parameters settings that we consider ($4 \kappa \theta/\xi^2  $ is an integer), we suggest using the Heston OU-based scheme as an alternative.   We expect our observations to still be valid for cases with tiny perturbations of the Heston model parameters ($4 \kappa \theta/\xi^2  $ is very close to an integer).  An extensive analysis of the proposed  Heston OU-based scheme is left for future work, where we plan to conduct a systematic investigation of its performance compared to the popular existing schemes, in the same spirit as  \cite{lord2010comparison}, and examine various challenging settings of model parameters (e.g.,  when $\xi^2 \gg 4 \kappa \theta $   and  $4 \kappa \theta/\xi^2  $    is not an integer).  The above observations suggest that, besides smoothing out the observable,  the regularity of the discretization scheme is also essential. Numerical smoothing   works perfectly if the scheme has sufficient path regularity. Otherwise,  besides  numerical smoothing,  a smooth discretization scheme must preserve the process path regularity to ensure the optimal performance of ASGQ and QMC.
	\end{remark}
\subsection{Comparison of the ASGQ method with and without numerical smoothing}\label{sec: ASGQ with numerical smoothing versus ASGQ without  smoothing}
This section illustrates  the advantage of combining numerical smoothing with the ASGQ method. Figures  \ref{fig: Digital option under Heston: Comparing the relative quadrature error convergence for ASGQ combined with numerical smoothing and ASGQ without smoothing.} and \ref{fig: Call option under Heston: Comparing the relative quadrature error convergence for ASGQ combined with numerical smoothing and ASGQ without smoothing.} show comparisons of 
	the relative quadrature error convergence for the  examples under the Heston model  in Table \ref{table: Examples details.},  with and without the Richardson  extrapolation.\footnote{The dimension of the integration problem is $N$ for the GBM examples and $2N$ for the Heston examples.}  Numerical smoothing significantly improves  the quadrature error convergence for all cases, which agrees with Theorem \ref{thr:smoothness}. For instance, for the call option under the Heston model (left plot in Figure \ref{fig: Call option under Heston: Comparing the relative quadrature error convergence for ASGQ combined with numerical smoothing and ASGQ without smoothing.}), the ASGQ method without smoothing cannot achieve  a relative quadrature error below $10\%$, even in the case of   more than $10^3$ quadrature points. Alternatively, the ASGQ method with numerical smoothing achieves a relative quadrature error below $1\%$ with the same number of quadrature points. The gains are more evident in the digital option case than in the call option case  (Figure \ref{fig: Digital option under Heston: Comparing the relative quadrature error convergence for ASGQ combined with numerical smoothing and ASGQ without smoothing.}). Further, using the Richardson extrapolation, the ASGQ method with numerical smoothing yields a smaller quadrature error. 	For all cases of options and  models,  with or without Richardson extrapolation, we observe that numerical smoothing also reduces the constant in the quadrature error besides improving the convergence rate. This observation can be explained by the analysis in \cite{chen2018sparse}, indicating that the constant in the error estimate depends on the  weighted sum of the mixed derivatives of the integrand. From this perspective, the numerical smoothing enables a faster decay of mixed derivatives than the case without smoothing (see Proposition 3.4 in \cite{chen2018sparse}). 
%
\begin{figure}[h!]
	\centering 
	\begin{subfigure}{0.4\textwidth}
		\includegraphics[width=\linewidth]{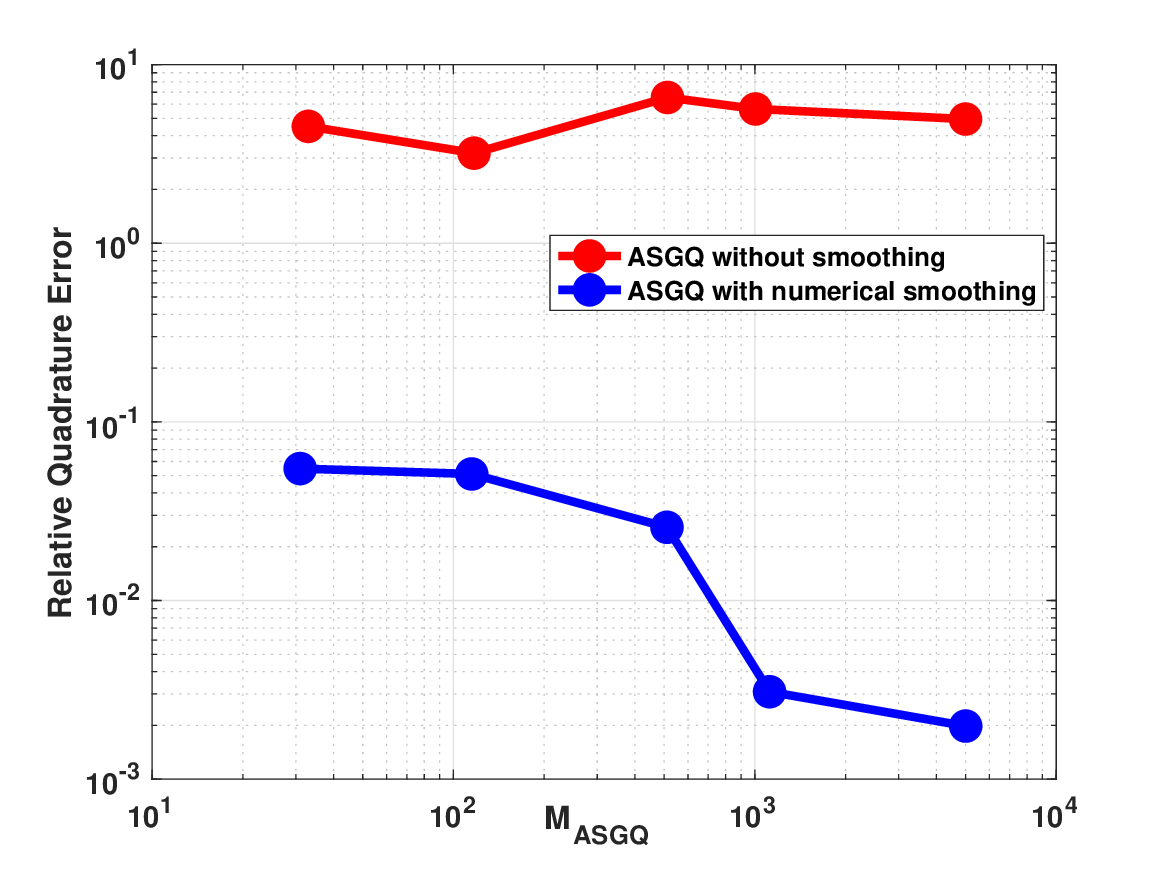}
		\caption{}
		\label{fig:1}
	\end{subfigure}\hfil 
	\begin{subfigure}{0.4\textwidth}

			\includegraphics[width=\linewidth]{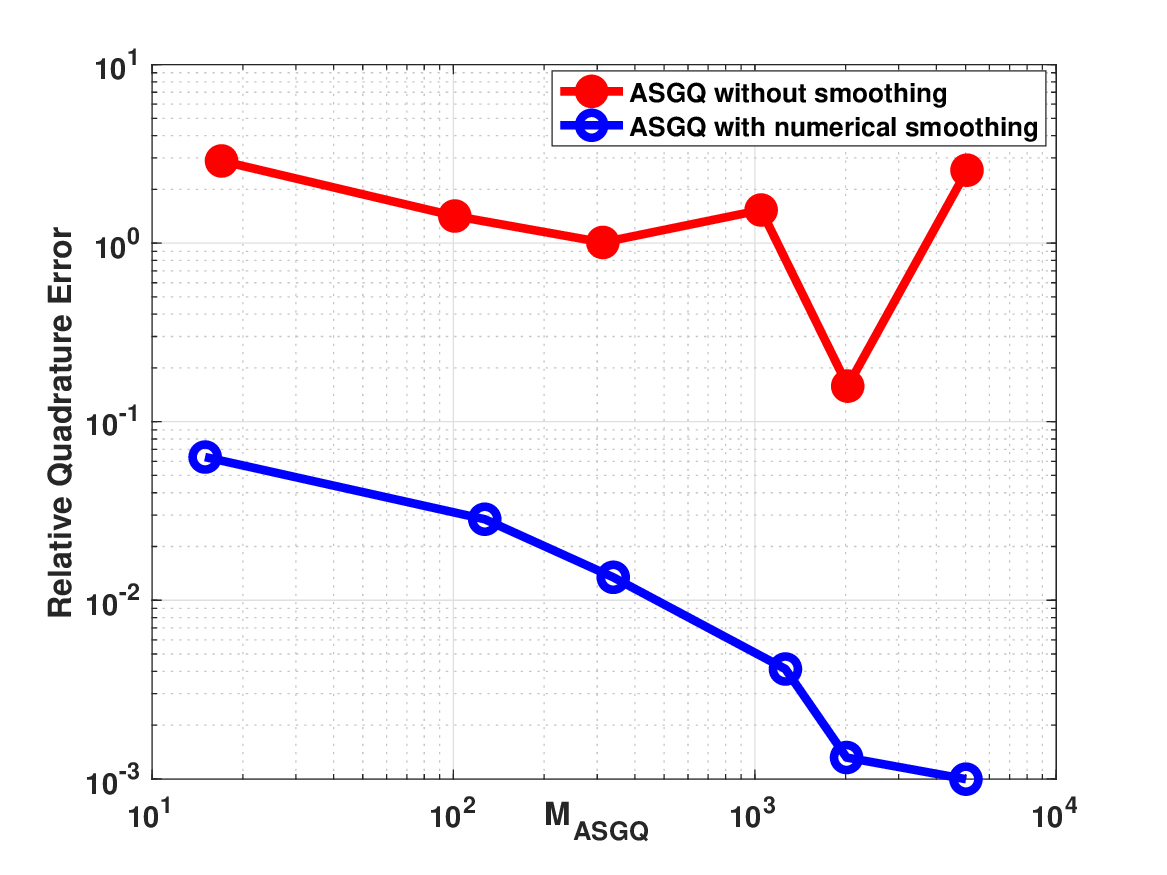}
		\caption{}
		\label{fig:2}
	\end{subfigure}
	\caption{Digital option under  the Heston model: Comparison of the relative quadrature error convergence for the ASGQ method with and without numerical smoothing.  (a) Without the Richardson extrapolation ($N=8$), and (b) with the Richardson extrapolation ($N_{\text{fine level}}=8$).}
	\label{fig: Digital option under Heston: Comparing the relative quadrature error convergence for ASGQ combined with numerical smoothing and ASGQ without smoothing.}
\end{figure}
\begin{figure}[h!]
	\centering 
	\begin{subfigure}{0.4\textwidth}
		\includegraphics[width=\linewidth]{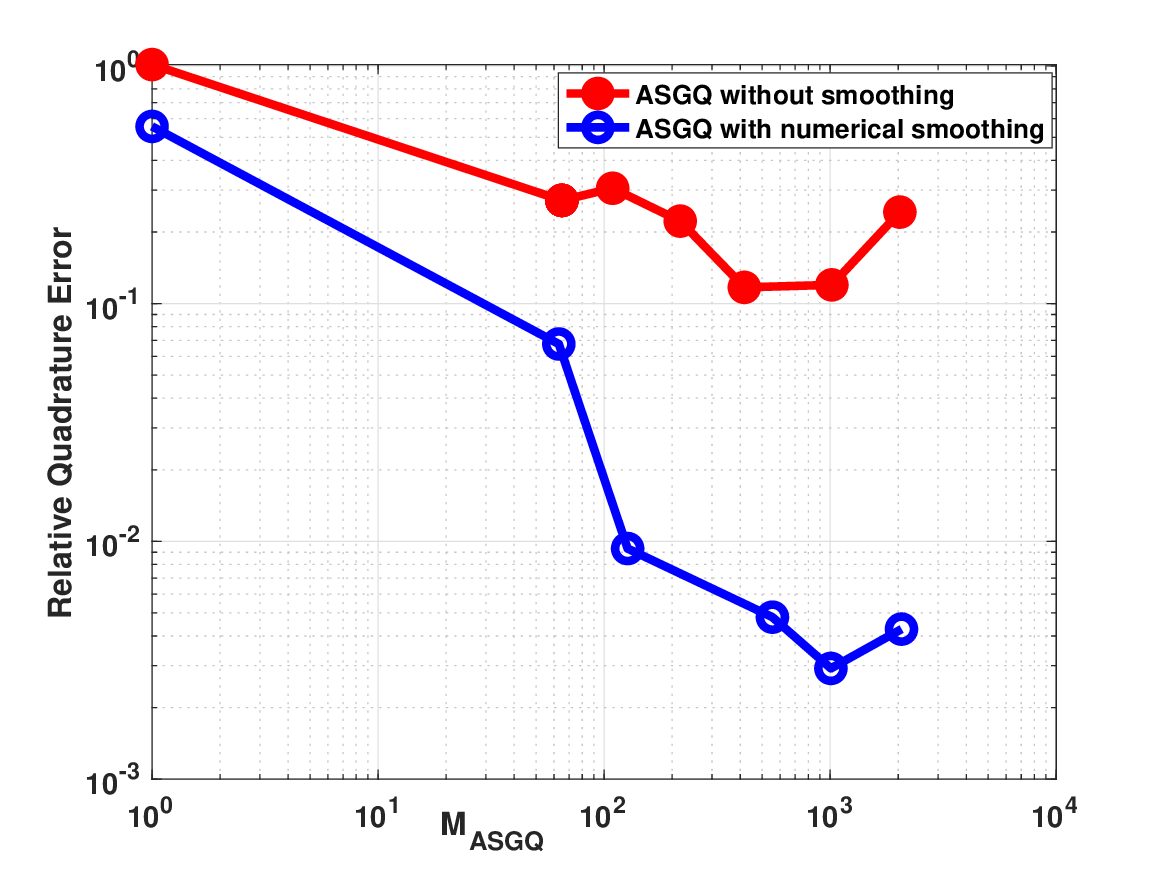}
		\caption{}
		\label{fig:1}
	\end{subfigure}\hfil 
	\begin{subfigure}{0.4\textwidth}
	\includegraphics[width=\linewidth]{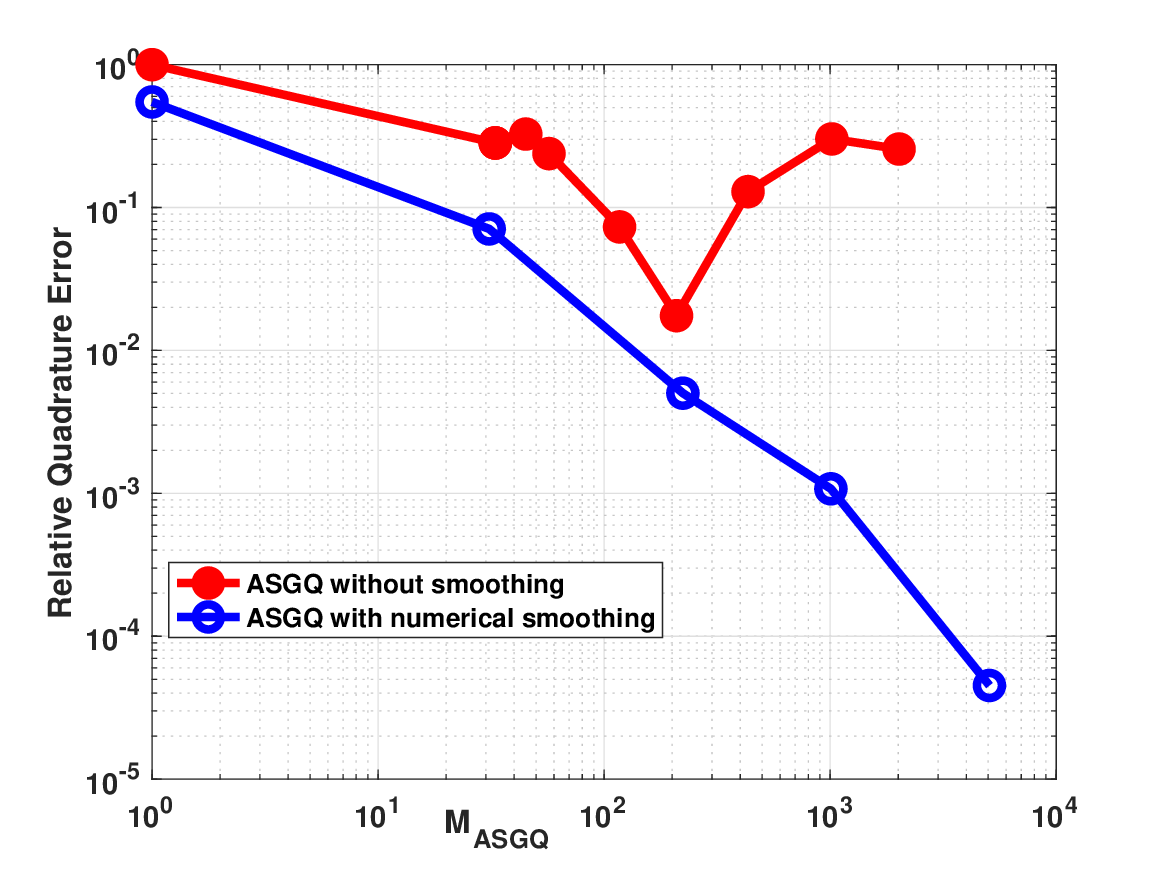}
		\caption{}
		\label{fig:2}
	\end{subfigure}
	\caption{Call option under the  Heston model:  Comparison of the relative quadrature error convergence for the ASGQ method with and without numerical smoothing.  (a) Without the Richardson extrapolation ($N=16$), and (b) with the Richardson extrapolation ($N_{\text{fine level}}=8$).}
	\label{fig: Call option under Heston: Comparing the relative quadrature error convergence for ASGQ combined with numerical smoothing and ASGQ without smoothing.}
\end{figure}
\subsection{Comparison of the rQMC method with and without numerical smoothing}\label{sec: QMC with numerical smoothing versus QMC without  smoothing}
In this section, we demonstrate the advantage of combining numerical smoothing with  the rQMC method. Figures  \ref{fig: Comparing the  statisitical error convergence for QMC combined with numerical smoothing versus QMC without smoothing. a) Digital option under GBM.  b) Call option under GBM..}  and  \ref{fig: Comparing the statisitical error convergence for QMC combined with numerical smoothing versus QMC without smoothing. a) Digital option under Heston (N=4).  b) Call option under Heston $(N=4).}  display comparisons of 
the  statistical error convergence for the examples   in Table \ref{table: Examples details.} and for some  number of time steps $N$.  Because regularity was regained using numerical smoothing, an improvement in the statistical error convergence of the rQMC method occurs, which agrees with Remark \ref{rem: QMC_smoothing_error} (see left plots in Figures \ref{fig: Comparing the  statisitical error convergence for QMC combined with numerical smoothing versus QMC without smoothing. a) Digital option under GBM.  b) Call option under GBM..}  and  \ref{fig: Comparing the statisitical error convergence for QMC combined with numerical smoothing versus QMC without smoothing. a) Digital option under Heston (N=4).  b) Call option under Heston $(N=4).}).
For  the Heston and GBM models,   the convergence rate for QMC was  improved more significantly   for the digital option than the call option payoff. 
\begin{figure}[h!]
	\centering 
	\begin{subfigure}{0.4\textwidth}
		\includegraphics[width=\linewidth]{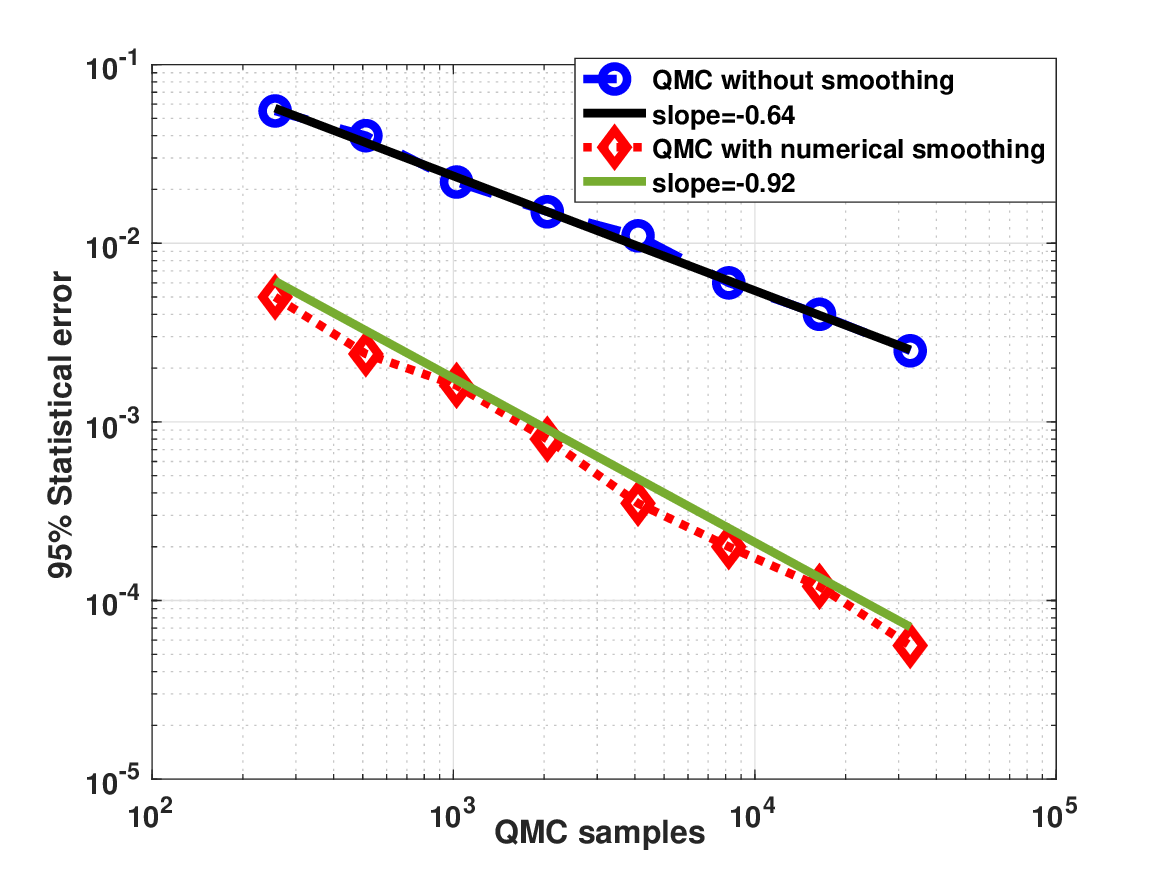}
		\caption{}
		\label{fig:1}
	\end{subfigure}\hfil 
	\begin{subfigure}{0.4\textwidth}
		\includegraphics[width=\linewidth]{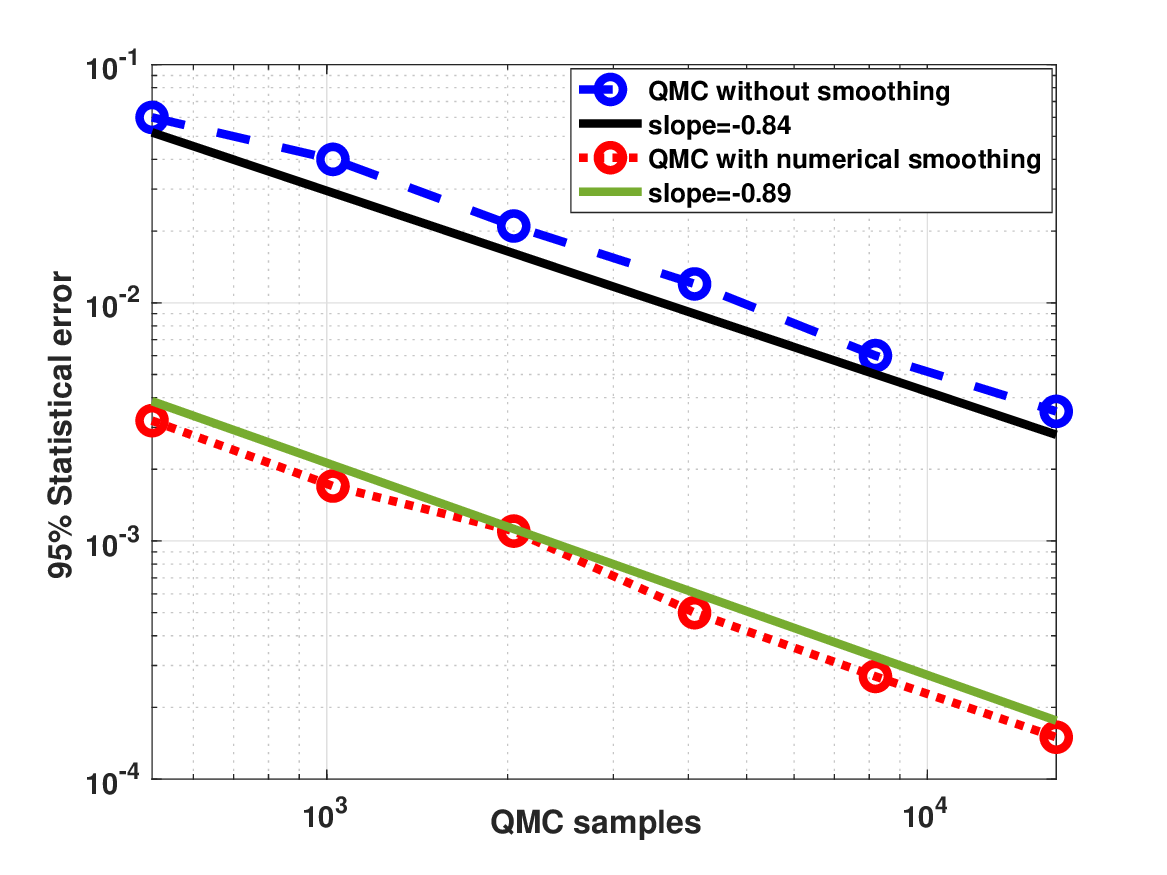}
		\caption{}
		\label{fig:2}
	\end{subfigure}
	\caption{Comparison of the $95\%$ statistical error convergence for  rQMC  with and without numerical smoothing with $N=8$. (a) Digital option under GBM,    (b) call option under GBM.}
	\label{fig: Comparing the  statisitical error convergence for QMC combined with numerical smoothing versus QMC without smoothing. a) Digital option under GBM.  b) Call option under GBM..}
\end{figure}
\begin{figure}[h!]
	\centering 
	\begin{subfigure}{0.4\textwidth}
		\includegraphics[width=\linewidth]{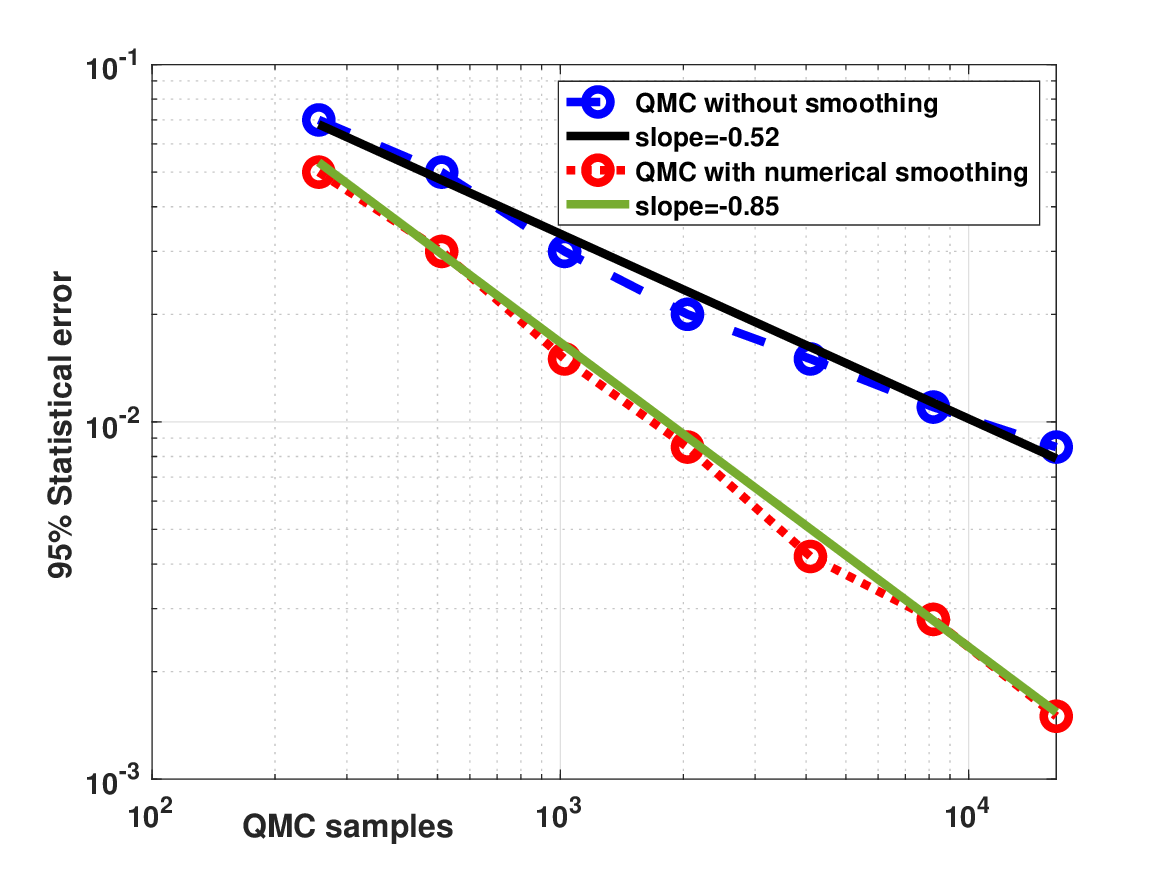}
		\caption{}
		\label{fig:1}
	\end{subfigure}\hfil 
	\begin{subfigure}{0.4\textwidth}
		\includegraphics[width=\linewidth]{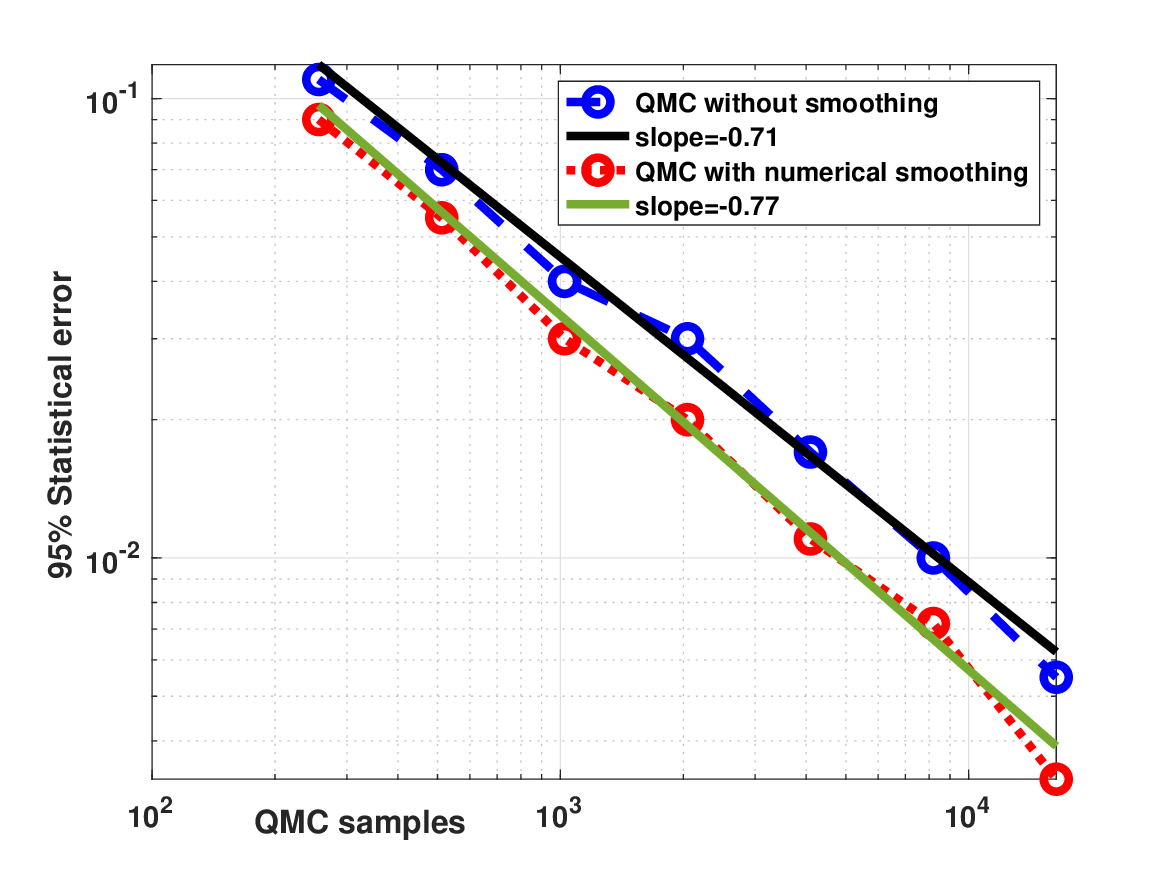}
		\caption{}
		\label{fig:2}
	\end{subfigure}
	\caption{Comparison of the $95\%$ statistical error convergence for  rQMC  with and without numerical smoothing  with $N=4$. (a) Digital option under Heston,     (b) call option under Heston.}
	\label{fig: Comparing the statisitical error convergence for QMC combined with numerical smoothing versus QMC without smoothing. a) Digital option under Heston (N=4).  b) Call option under Heston $(N=4).}
\end{figure}
\subsection{Study of the numerical smoothing parameters}\label{sec: Study of the numerical smoothing parameters}
We study the effect of the   smoothing parameters on the relative numerical smoothing error for  sufficiently large ASGQ points $M_{\text{ASGQ}}=10^3$.  These parameters are (i) the number of Laguerre points  in the preintegration step, $M_{\text{Lag}}$, and (ii) the Newton tolerance  in the root-finding step, $\text{TOL}_{\text{Newton}}$. Figures \ref{fig: Digital option under GBM with N=4: The relative quadrature error convergence for ASGQ combined with numerical smoothing for different scenarios of smoothing parameters. } and  \ref{fig: Call option under GBM with N=4: The relative quadrature error convergence for ASGQ combined with numerical smoothing for different scenarios of smoothing parameters. } present the digital and call option results under the GBM model in Table \ref{table: Examples details.}, for the case without the Richardson extrapolation and when $N=4$.  These plots show that a faster convergence of the root-finding and quadrature errors can be achieved for the call option compared to the digital option. Moreover, we observe that  the numerical smoothing procedure is  cheap because  few Laguerre quadrature points and  large values of Newton tolerance are required to achieve a certain accuracy. Similar observations have been obtained for other  examples.
\begin{figure}[h!]
	\centering 
	\begin{subfigure}{0.4\textwidth}
		\includegraphics[width=\linewidth]{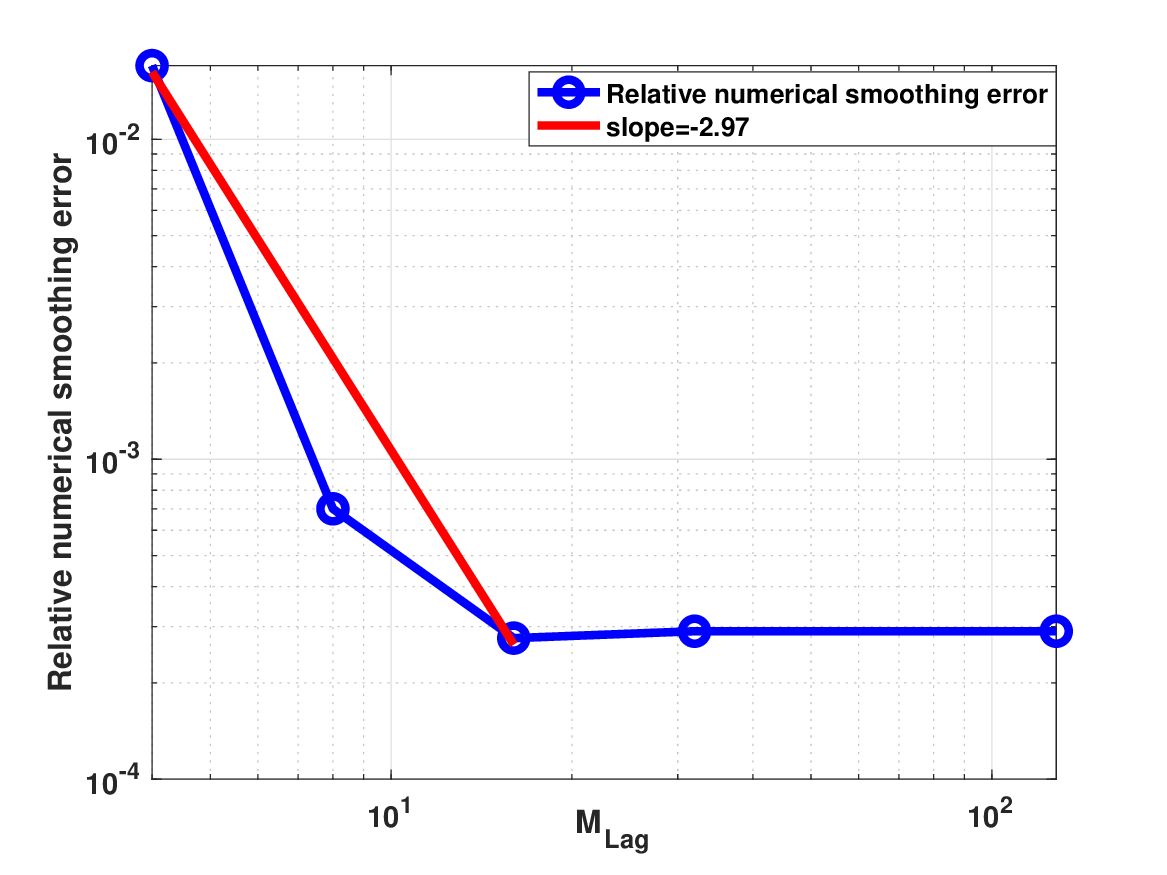}
		\caption{}
		\label{fig:1}
	\end{subfigure}\hfil 
	\begin{subfigure}{0.4\textwidth}
		\includegraphics[width=\linewidth]{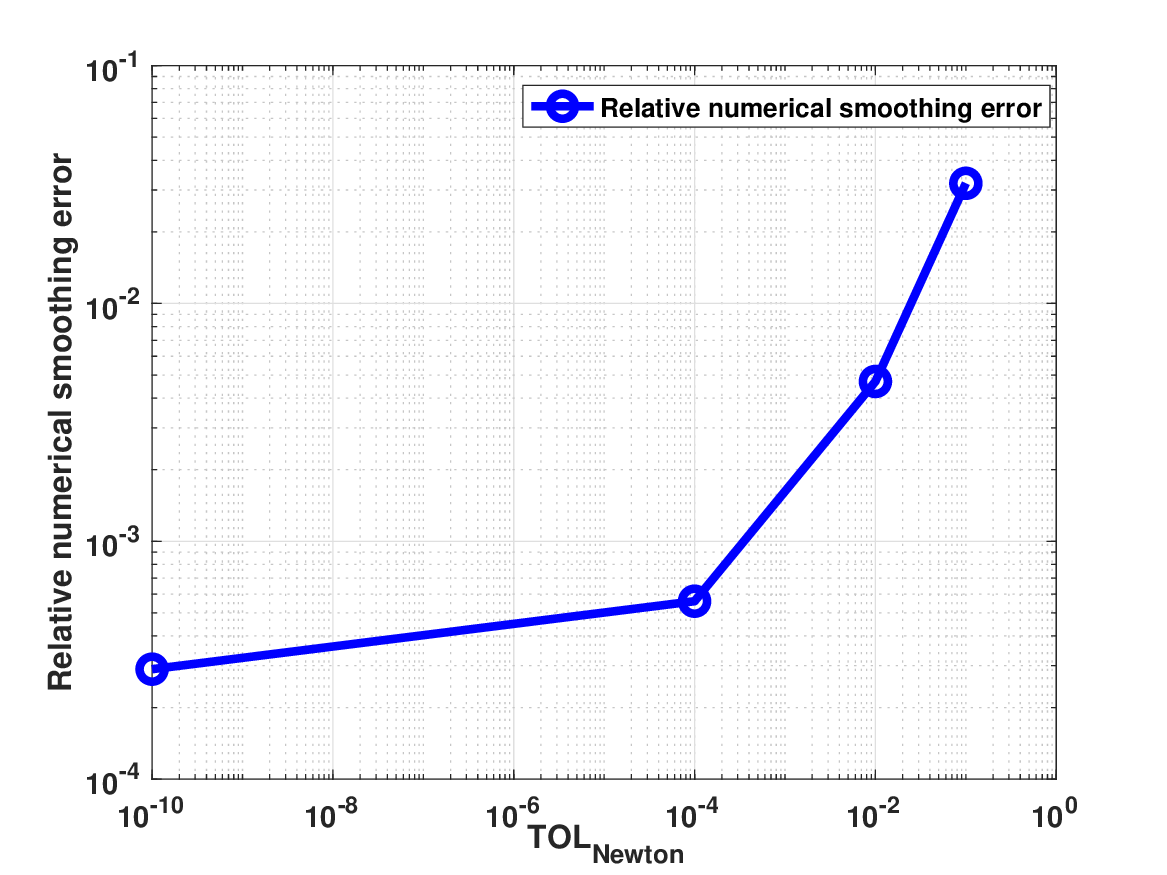}
		\caption{}
		\label{fig:2}
	\end{subfigure}
	\caption{Digital option under GBM with $N=4$: The relative numerical smoothing error for a fixed number of ASGQ points $M_{\text{ASGQ}}=10^3$ plotted against (a)  different values of $M_{\text{Lag}}$ with a fixed Newton tolerance $\text{TOL}_{\text{Newton}}=10^{-10}$, (b)   different values of $\text{TOL}_{\text{Newton}}$ with a fixed number of Laguerre quadrature points   $M_{\text{Lag}}=128$.}
	\label{fig: Digital option under GBM with N=4: The relative quadrature error convergence for ASGQ combined with numerical smoothing for different scenarios of smoothing parameters. }
\end{figure}
\begin{figure}[h!]
	\centering 
	\begin{subfigure}{0.4\textwidth}
		\includegraphics[width=\linewidth]{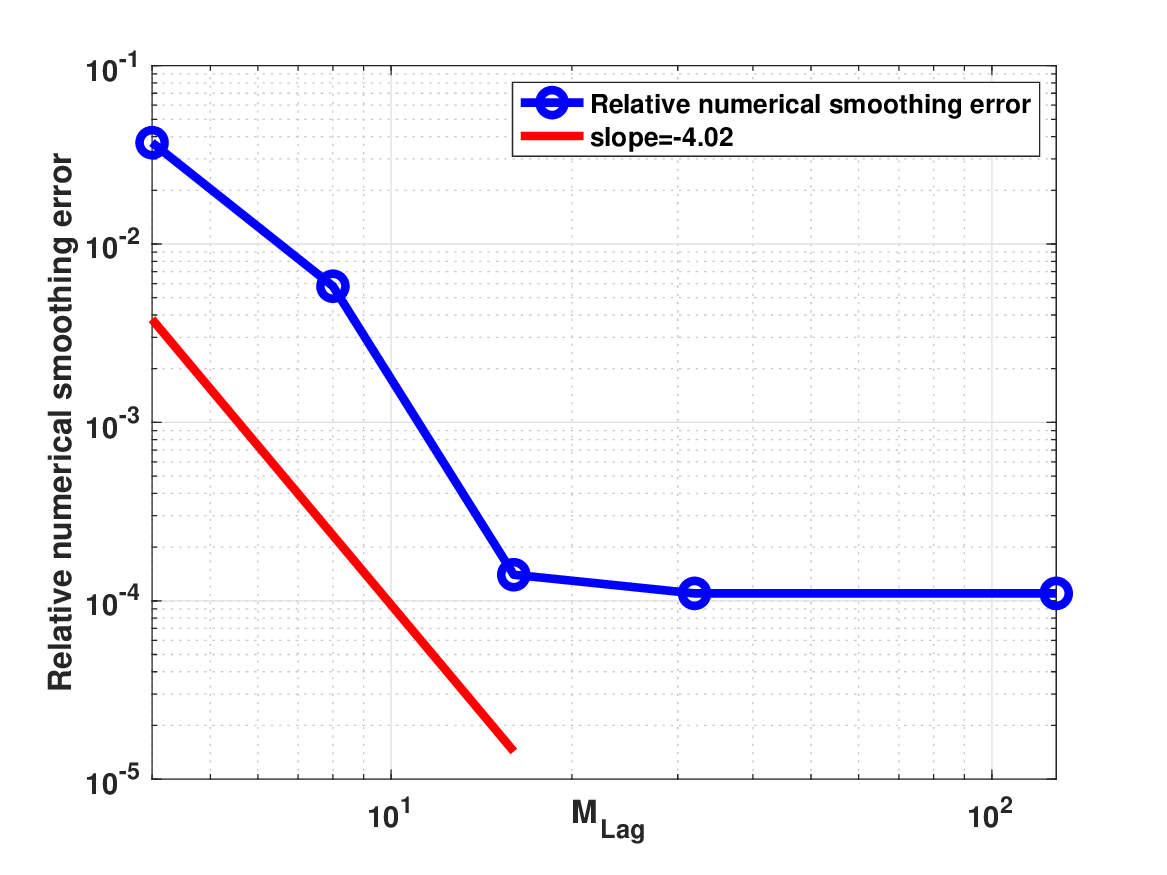}
		\caption{}
		\label{fig:1}
	\end{subfigure}\hfil 
	\begin{subfigure}{0.4\textwidth}
		\includegraphics[width=\linewidth]{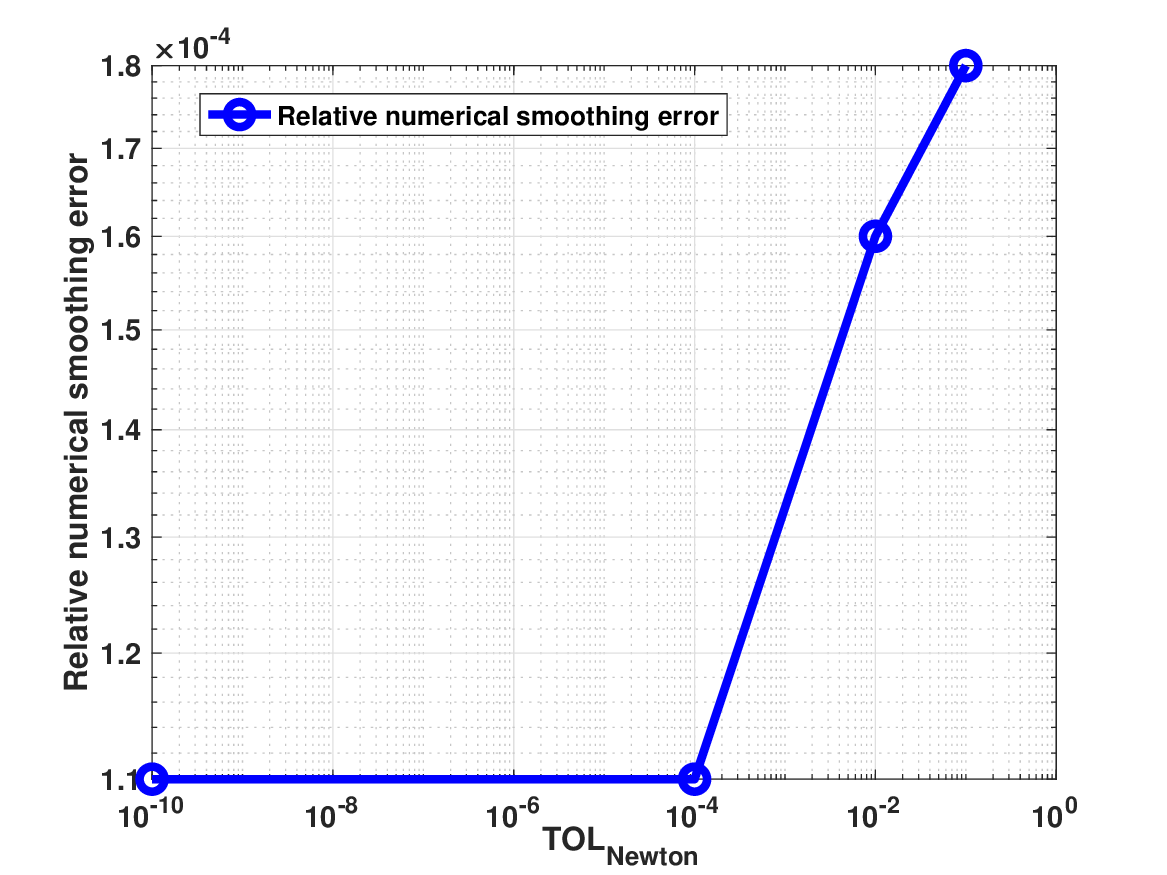}
		\caption{}
		\label{fig:2}
	\end{subfigure}
	\caption{Call option under GBM with  $N=4$: The  relative numerical smoothing error for a fixed number of ASGQ  points $M_{\text{ASGQ}}=10^3$ plotted against (a)  different values of $M_{\text{Lag}}$ with  a  fixed Newton tolerance $\text{TOL}_{\text{Newton}}=10^{-10}$, (b)   different values of $\text{TOL}_{\text{Newton}}$ with a fixed number of Laguerre quadrature points   $M_{\text{Lag}}=128$.}
	\label{fig: Call option under GBM with N=4: The relative quadrature error convergence for ASGQ combined with numerical smoothing for different scenarios of smoothing parameters. }
\end{figure}
\subsection{ASGQ method with numerical smoothing versus the MC method}\label{sec: ASGQ with numerical smoothing  versus MC without smoothing}
For  a sufficiently fixed small error tolerance in the price estimates, we compare the computational time needed for the MC method and the ASGQ method with numerical smoothing  to meet the desired error tolerance. The reported errors are relative errors normalized using the reference solutions.  Furthermore, we conduct our numerical experiments for two  scenarios:  without the Richardson extrapolation, and  with level-$1$   Richardson extrapolation. The actual work (runtime) is obtained using a 3,2 GHz 8-Core Intel Xeon W architecture.

The numerical findings are summarized in Table \ref{table:Summary of our numerical results.}. The reported results highlight the computational gains achieved using the ASGQ method with numerical smoothing compared to the MC method to meet a  relative error   below  $1\%$.  These  results correspond to the best configuration with the Richardson extrapolation for each method. More details  for each case are provided in Figures  \ref{fig: Digital option under GBM and Heston: Computational work comparison for the different methods with the different configurations in terms of the level of Richardson extrapolation. To achieve a relative error below 1,
	the ASGQ method combined with numerical smoothing and level 1 of Richardson extrapolationsignificantly outperforms  the other methods.}, 	\ref{fig: Call option under GBM and Heston: Computational work comparison for the different methods with the different configurations in terms of the level of Richardson extrapolation. To achieve a relative error below 1,		 ASGQ  combined with numerical smoothing and level 1 of Richardson extrapolation
	significantly outperforms  the other methods.} and \ref{fig: 4-dimensional basket call option under GBM: Computational work comparison for the different methods. To achieve a relative error below 1, ASGQ  combined with numerical smoothing significantly outperforms the  MC method.}, comparing the numerical complexity of each method under the two Richardson extrapolation scenarios.  These figures illustrate that,  to achieve a relative error of less than $1\%$, the optimal configuration is the  level-$1$  Richardson extrapolation for both the MC and ASGQ methods, except for the four-asset basket call option under  the  GBM model. 	
\begin{small}
	\begin{table}[!h]
		\centering
		\begin{tabular}{l*{4}{c}r}
			\toprule[1.5pt]
			Example            &  Total relative error  & CPU time  $\left( \text{ASGQ} / \text{MC}\right)$ in \% \\
			\hline
			Single-asset digital option (GBM)  &  $0.4 \%$&  $ 0.2\%$ \\	
			\hline
			Single-asset call option (GBM)      &  $0.5\%$&  $0.3\%$ \\
			
			\hline
			Single-asset  digital option (Heston)   &$0.4\%$&  $3.2\%$ \\	
			\hline
			Single-asset call option (Heston)   &  $0.5\%$&  $0.4\%$ \\
			\hline
				$4$-asset basket   call option (GBM)     &  $0.8\%$&  $7.4\%$ \\	
			\bottomrule[1.25pt]
		\end{tabular}
		\caption{Summary of the relative errors and computational gains achieved using  ASGQ  with numerical smoothing compared to the MC method, to realize a certain error tolerance.  The CPU time ratios are computed for the best configuration with  Richardson extrapolation for each method.}
		\label{table:Summary of our numerical results.}
	\end{table}
\end{small}
\begin{figure}[h!]
	\centering 
	\begin{subfigure}{0.4\textwidth}
		\includegraphics[width=\linewidth]{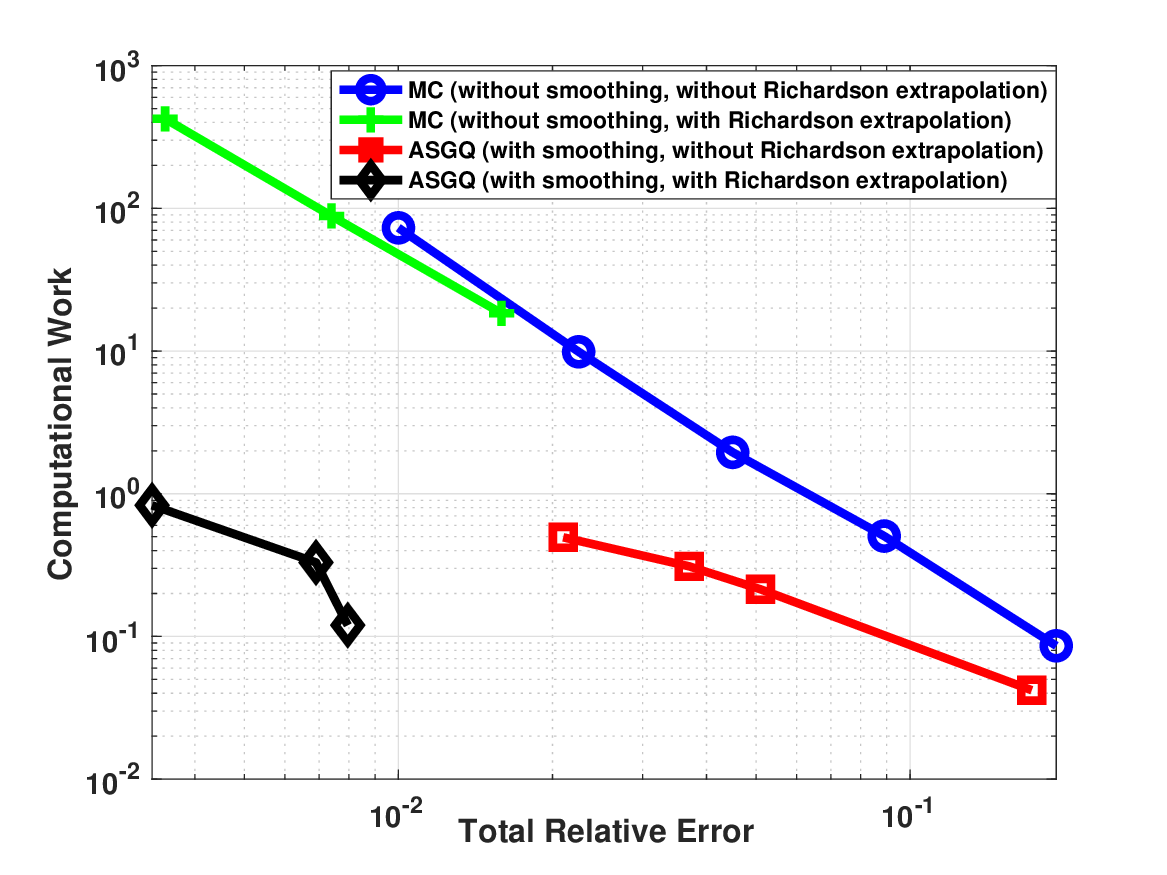}
		\caption{}
		\label{fig:1}
	\end{subfigure}\hfil 
	\begin{subfigure}{0.4\textwidth}
		\includegraphics[width=\linewidth]{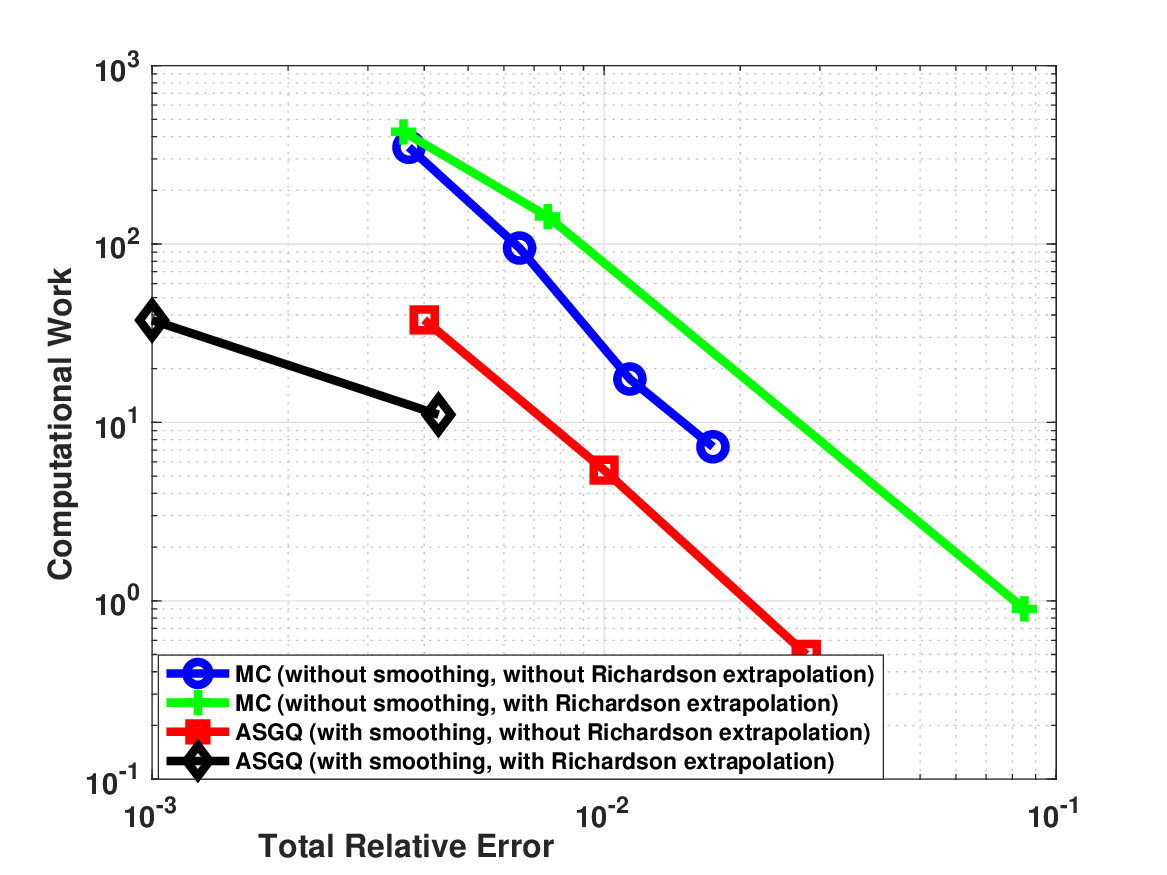}
		\caption{}
		\label{fig:2}
	\end{subfigure}
	\caption{Computational work comparison for the different methods with the various configurations in terms  of the Richardson extrapolation level. To achieve a relative error below $1\%$, ASGQ  combined with numerical smoothing and level-$1$  Richardson extrapolation	significantly outperforms  the other methods. (a) Digital option under GBM, (b)  digital option under Heston.}
	\label{fig: Digital option under GBM and Heston: Computational work comparison for the different methods with the different configurations in terms of the level of Richardson extrapolation. To achieve a relative error below 1,
		the ASGQ method combined with numerical smoothing and level 1 of Richardson extrapolationsignificantly outperforms  the other methods.}
\end{figure}
\begin{figure}[h!]
	\centering 
	\begin{subfigure}{0.4\textwidth}
		\includegraphics[width=\linewidth]{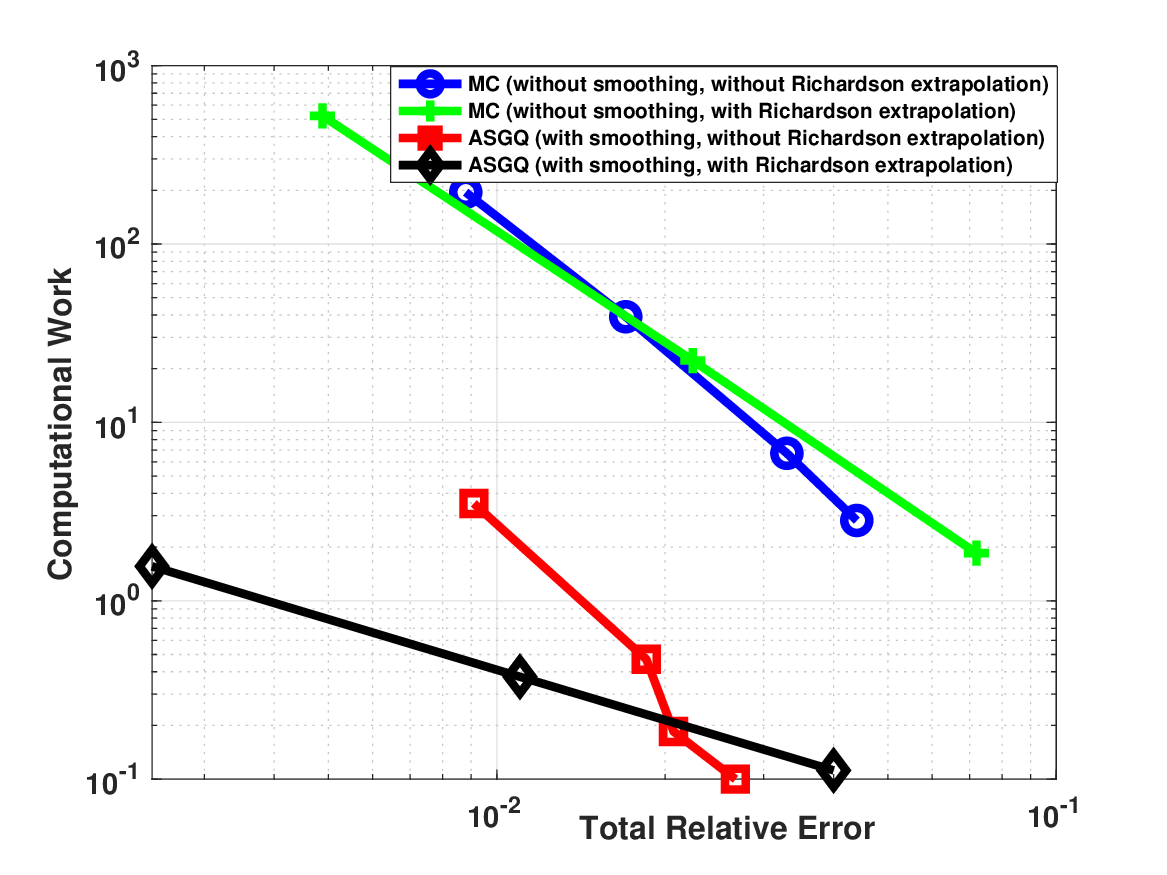}
		\caption{}
		\label{fig:1}
	\end{subfigure}\hfil 
	\begin{subfigure}{0.4\textwidth}
		\includegraphics[width=\linewidth]{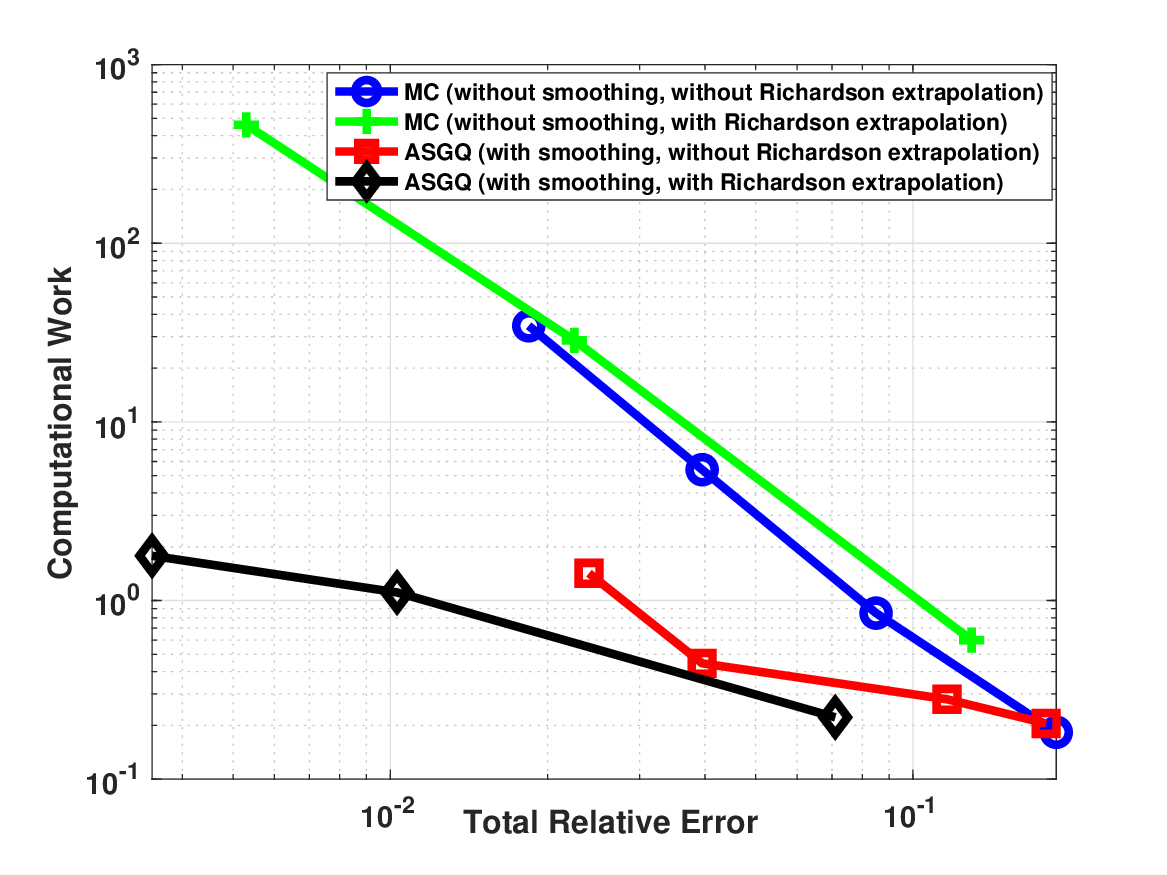}
		\caption{}
		\label{fig:2}
	\end{subfigure}
	\caption{Computational work comparison for the different methods with the various configurations in terms  of the Richardson extrapolation level. To achieve a relative error below $1\%$,	 ASGQ  combined with numerical smoothing and level-$1$  Richardson extrapolation
		significantly outperforms  the other methods. (a) Call option under GBM, (b)  call option under Heston.}
	\label{fig: Call option under GBM and Heston: Computational work comparison for the different methods with the different configurations in terms of the level of Richardson extrapolation. To achieve a relative error below 1,		 ASGQ  combined with numerical smoothing and level 1 of Richardson extrapolation
		significantly outperforms  the other methods.}
\end{figure}
\begin{figure}[h!]
	\centering
	\includegraphics[width=0.4\linewidth]{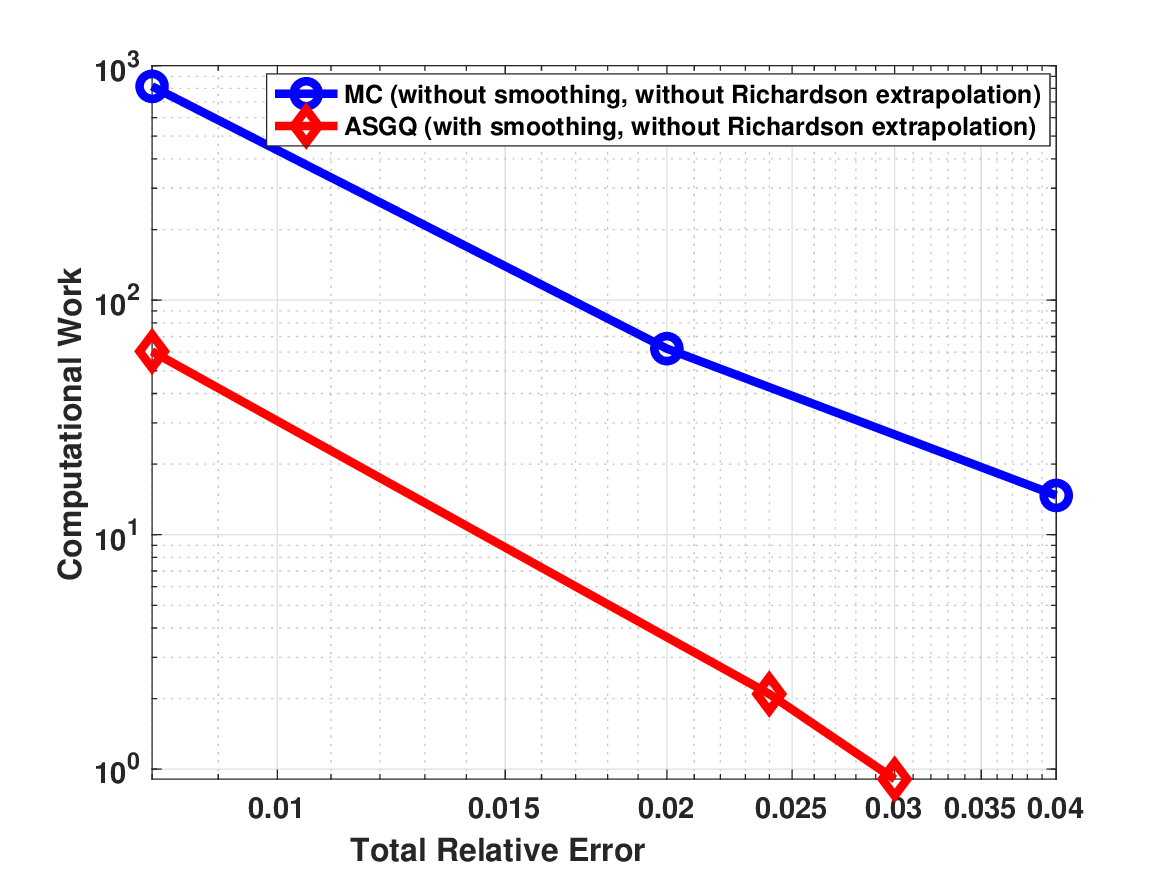}
	\caption{Four-asset basket call option under GBM: Computational work comparison for the different methods. To achieve a relative error below $1\%$, ASGQ  combined with numerical smoothing significantly outperforms the  MC method.}
	\label{fig: 4-dimensional basket call option under GBM: Computational work comparison for the different methods. To achieve a relative error below 1, ASGQ  combined with numerical smoothing significantly outperforms the  MC method.}
\end{figure}
\begin{remark}[Regarding rQMC with numerical smoothing]
We also combined numerical smoothing with the  rQMC method and observed an improvement in the performance compared to the case without smoothing (Section \ref{sec: QMC with numerical smoothing versus QMC without  smoothing}). Moroever, the rQMC method with numerical smoothing consistently outperforms the MC method to achieve a relative error below $1\%$. However,  we consistently observe  that the ASGQ method  outperforms the  rQMC method in all our numerical examples,  when both are combined with numerical smoothing. In particular, as an illustration, Figure \ref{fig: Digital option under GBM: Computational work comparison for the different methods (MC, QMC and ASGQ) with the different configurations in terms of the level of Richardson extrapolation. To achieve a relative error below $1$, the ASGQ method combined with numerical smoothing and level 1 of Richardson extrapolation	significantly outperforms  the other methods.}  shows the comparison for the example of the digital option under the GBM model. 
	\begin{figure}[h!]
		\centering
		\includegraphics[width=0.4\linewidth]{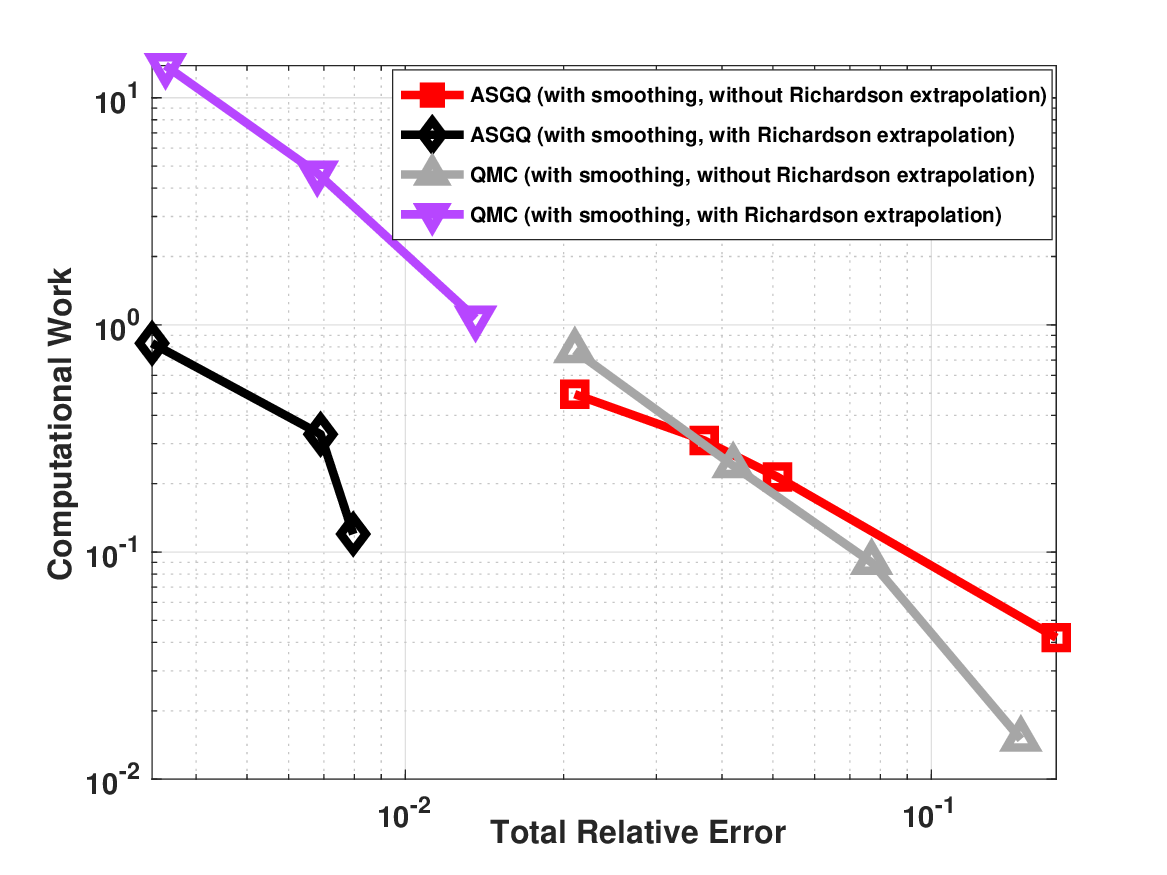}
		\caption{Digital option under GBM: Computational work comparison of  rQMC and ASGQ, both combined with numerical smoothing, with  different  Richardson extrapolation level configurations.  To achieve a relative error below $1\%$,  ASGQ  combined with numerical smoothing and level-$1$  Richardson extrapolation	significantly outperforms  the other methods.}
	\label{fig: Digital option under GBM: Computational work comparison for the different methods (MC, QMC and ASGQ) with the different configurations in terms of the level of Richardson extrapolation. To achieve a relative error below $1$, the ASGQ method combined with numerical smoothing and level 1 of Richardson extrapolation	significantly outperforms  the other methods.}
	\end{figure}
	\end{remark}

\newpage
\

\textbf{Acknowledgments}  C. Bayer gratefully acknowledges support from the German Research Foundation (DFG) via the Cluster of Excellence MATH+ (project AA4-2) and the individual grant BA5484/1. This publication is based on  work supported by the King Abdullah University of Science and Technology (KAUST) Office of Sponsored Research (OSR) under Award No. OSR-2019-CRG8-4033 and the Alexander von Humboldt Foundation.  The authors  are  incredibly grateful to  the anonymous referees for their valuable comments and suggestions  that greatly contributed to shaping the final version of the paper.


\bibliographystyle{plain}
\bibliography{smoothing} 

\begin{thebibliography}{10}

\bibitem{alfonsi2010high}
Aur{\'e}lien Alfonsi.
\newblock High order discretization schemes for the {CIR} process: application
  to affine term structure and {H}eston models.
\newblock {\em Mathematics of Computation}, 79(269):209--237, 2010.

\bibitem{andersen2007efficient}
Leif Andersen.
\newblock Efficient simulation of the {H}eston stochastic volatility model.
\newblock {\em Available at SSRN 946405}, 2007.

\bibitem{andersen2005extended}
Leif Andersen and Rupert Brotherton-Ratcliffe.
\newblock Extended libor market models with stochastic volatility.
\newblock {\em Available at SSRN 294853}, 2001.

\bibitem{barthelmann2000high}
Volker Barthelmann, Erich Novak, and Klaus Ritter.
\newblock High dimensional polynomial interpolation on sparse grids.
\newblock {\em Advances in Computational Mathematics}, 12(4):273--288, 2000.

\bibitem{bayer2022optimal}
Christian Bayer, Chiheb Ben~Hammouda, Antonis Papapantoleon, Michael Samet, and
  Ra{\'u}l Tempone.
\newblock Optimal damping with hierarchical adaptive quadrature for efficient
  {F}ourier pricing of multi-asset options in {L}\'evy models.
\newblock {\em arXiv preprint arXiv:2203.08196}, 2022.

\bibitem{bayer2018hierarchical}
Christian Bayer, Chiheb Ben~Hammouda, and Ra{\'u}l Tempone.
\newblock Hierarchical adaptive sparse grids and quasi-{M}onte {C}arlo for
  option pricing under the rough {B}ergomi model.
\newblock {\em Quantitative Finance}, 20(9):1457--1473, 2020.

\bibitem{bayer2020multilevel}
Christian Bayer, Chiheb Ben~Hammouda, and Ra{\'u}l Tempone.
\newblock Multilevel {M}onte {C}arlo combined with numerical smoothing for
  robust and efficient option pricing and density estimation.
\newblock {\em arXiv preprint arXiv:2003.05708}, 2020.

\bibitem{bayersmoothing}
Christian Bayer, Markus Siebenmorgen, and Ra{\'u}l Tempone.
\newblock Smoothing the payoff for efficient computation of basket option
  pricing.
\newblock {\em Quantitative Finance}, 18(3):491--505, 2018.

\bibitem{ben2020hierarchical}
Chiheb Ben~Hammouda.
\newblock {\em Hierarchical Approximation Methods for Option Pricing and
  Stochastic Reaction Networks}.
\newblock PhD thesis, 2020.

\bibitem{broadie2006exact}
Mark Broadie and {\"O}zg{\"u}r Kaya.
\newblock Exact simulation of stochastic volatility and other affine jump
  diffusion processes.
\newblock {\em Operations research}, 54(2):217--231, 2006.

\bibitem{bungartz2004sparse}
Hans-Joachim Bungartz and Michael Griebel.
\newblock Sparse grids.
\newblock {\em Acta numerica}, 13:147--269, 2004.

\bibitem{chan2013fast}
Jiun~Hong Chan and Mark Joshi.
\newblock Fast {M}onte {C}arlo {G}reeks for financial products with
  discontinuous pay-offs.
\newblock {\em Mathematical Finance: An International Journal of Mathematics,
  Statistics and Financial Economics}, 23(3):459--495, 2013.

\bibitem{chen2018sparse}
Peng Chen.
\newblock Sparse quadrature for high-dimensional integration with {G}aussian
  measure.
\newblock {\em ESAIM: Mathematical Modelling and Numerical Analysis},
  52(2):631--657, 2018.

\bibitem{deluigi:hal-00746872}
Christophe De~Luigi, J{\'e}r{\^o}me Lelong, and Sylvain Maire.
\newblock {Robust adaptive numerical integration of irregular functions with
  applications to basket and other multi-dimensional exotic options}.
\newblock {\em {Applied Numerical Mathematics}}, 100:14--30, 2016.

\bibitem{dick2013high}
Josef Dick, Frances~Y Kuo, and Ian~H Sloan.
\newblock High-dimensional integration: the quasi-{M}onte {C}arlo way.
\newblock {\em Acta Numerica}, 22:133--288, 2013.

\bibitem{ernst2018convergence}
Oliver~G Ernst, Bjorn Sprungk, and Lorenzo Tamellini.
\newblock Convergence of sparse collocation for functions of countably many
  {G}aussian random variables (with application to elliptic pdes).
\newblock {\em SIAM Journal on Numerical Analysis}, 56(2):877--905, 2018.

\bibitem{fries2008conditional}
Christian~P Fries and Mark~S Joshi.
\newblock Conditional analytic {M}onte-{C}arlo pricing scheme of auto-callable
  products.
\newblock {\em Available at SSRN 1125725}, 2008.

\bibitem{gerstner1998numerical}
Thomas Gerstner and Michael Griebel.
\newblock Numerical integration using sparse grids.
\newblock {\em Numerical algorithms}, 18(3):209--232, 1998.

\bibitem{giles2015multilevel}
Michael~B Giles, Tigran Nagapetyan, and Klaus Ritter.
\newblock Multilevel {M}onte {C}arlo approximation of distribution functions
  and densities.
\newblock {\em SIAM/ASA Journal on Uncertainty Quantification}, 3(1):267--295,
  2015.

\bibitem{giles2009multilevel}
Michael~B Giles and Benjamin~J Waterhouse.
\newblock Multilevel quasi-{M}onte {C}arlo path simulation.
\newblock {\em Advanced Financial Modelling, Radon Series on Computational and
  Applied Mathematics}, 8:165--181, 2009.

\bibitem{griebel2013smoothing}
Michael Griebel, Frances Kuo, and Ian Sloan.
\newblock The smoothing effect of integration in {$\mathbb{R}^d$} and the
  {ANOVA} decomposition.
\newblock {\em Mathematics of Computation}, 82(281):383--400, 2013.

\bibitem{griebel2017note}
Michael Griebel, Frances Kuo, and Ian Sloan.
\newblock Note on “the smoothing effect of integration in {$\mathbb{R}^d$}
  and the {ANOVA} decomposition”.
\newblock {\em Mathematics of Computation}, 86(306):1847--1854, 2017.

\bibitem{griewank2017high}
Andreas Griewank, Frances~Y Kuo, Hernan Le{\"o}vey, and Ian~H Sloan.
\newblock High dimensional integration of kinks and jumps-smoothing by
  preintegration.
\newblock {\em Journal of Computational and Applied Mathematics}, 344:259--274,
  2018.

\bibitem{haji2016multi}
Abdul-Lateef Haji-Ali, Fabio Nobile, Lorenzo Tamellini, and Ra{\'u}l Tempone.
\newblock Multi-index stochastic collocation for random {PDE}s.
\newblock {\em Computer Methods in Applied Mechanics and Engineering},
  306:95--122, 2016.

\bibitem{heston1993closed}
Steven~L Heston.
\newblock A closed-form solution for options with stochastic volatility with
  applications to bond and currency options.
\newblock {\em The review of financial studies}, 6(2):327--343, 1993.

\bibitem{jeanblanc2009mathematical}
Monique Jeanblanc, Marc Yor, and Marc Chesney.
\newblock {\em Mathematical methods for financial markets}.
\newblock Springer Science \& Business Media, 2009.

\bibitem{kahl2006fast}
Christian Kahl and Peter J{\"a}ckel.
\newblock Fast strong approximation {M}onte {C}arlo schemes for stochastic
  volatility models.
\newblock {\em Quantitative Finance}, 6(6):513--536, 2006.

\bibitem{lord2010comparison}
Roger Lord, Remmert Koekkoek, and Dick~Van Dijk.
\newblock A comparison of biased simulation schemes for stochastic volatility
  models.
\newblock {\em Quantitative Finance}, 10(2):177--194, 2010.

\bibitem{mastroianni1994error}
G~Mastroianni and G~Monegato.
\newblock Error estimates for {G}auss-{L}aguerre and {G}auss-{H}ermite
  quadrature formulas.
\newblock In {\em Approximation and Computation: A Festschrift in Honor of
  Walter Gautschi}, pages 421--434. Springer, 1994.

\bibitem{niederreiter1992random}
Harald Niederreiter.
\newblock {\em Random number generation and quasi-{M}onte {C}arlo methods},
  volume~63.
\newblock Siam, 1992.

\bibitem{nuyens2014construction}
Dirk Nuyens.
\newblock The construction of good lattice rules and polynomial lattice rules.
\newblock In {\em Uniform distribution and quasi-Monte Carlo methods}, pages
  223--256. De Gruyter, 2014.

\bibitem{SLOAN1985131}
Ian~H. Sloan.
\newblock Lattice methods for multiple integration.
\newblock {\em Journal of Computational and Applied Mathematics},
  12-13:131--143, 1985.

\bibitem{sloan1998quasi}
Ian~H Sloan and Henryk Wo{\'z}niakowski.
\newblock When are quasi-{M}onte {C}arlo algorithms efficient for high
  dimensional integrals?
\newblock {\em Journal of Complexity}, 14(1):1--33, 1998.

\bibitem{xiao2018conditional}
Ye~Xiao and Xiaoqun Wang.
\newblock Conditional quasi-{M}onte {C}arlo methods and dimension reduction for
  option pricing and hedging with discontinuous functions.
\newblock {\em Journal of Computational and Applied Mathematics}, 343:289--308,
  2018.

\end{thebibliography}

\appendix

\section{Details of the Proof of Theorem \ref{thr:smoothness} in Section \ref{sec:Analiticity Analysis}}\label{appendix:Details for the proof of Theorem}
In this section, we state and prove the theoretical results  for the proof of Theorem \ref{thr:smoothness} in Section \ref{sec:Analiticity Analysis}. We use the same notation as  in Section \ref{sec:Analiticity Analysis}.  In particular,  we recall that $X^N_T$ denotes the  numerical solution of the SDE \eqref{eq:SDE}   using the   Euler–Maruyama  scheme along  the grid $\mathcal{D}^N$.
\begin{lemma}
  \label{lem:dXdZ}
 If  Assumption \ref{ass:boundedness-derivative} holds, we have the following:
  \begin{equation*}
    \f{\pa X^N_T}{\pa z_{n,k}}(Z_{-1}, \textbf{Z}^N) = 2^{-n/2+1} \mathcal{O}(1)
  \end{equation*}
  in the sense that the $\mathcal{O}(1)$ term does not depend on $n$ or $k$.
\end{lemma}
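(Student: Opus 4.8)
The plan is to differentiate the Euler recursion~\eqref{eq:SDE_descretized} in the coefficient $z_{n,k}$, solve the resulting affine (tangent) recursion by discrete variation of constants, and then read off the $2^{-n/2}$ decay from the localization and scaling of the Haar antiderivative $\Psi_{n,k}$ alone; Assumption~\ref{ass:boundedness-derivative} is used only to absorb everything else into an $\Ordo{1}$ random variable. Concretely, set $D_\ell \coloneqq \frac{\partial X^N_\ell}{\partial z_{n,k}}$. Since $X^N_\ell$ is a finite composition of $C^\infty$ maps (recall $b$ and all its derivatives are bounded), $D_\ell$ is well defined; differentiating~\eqref{eq:SDE_descretized} and using $\frac{\partial \Delta^N_\ell W}{\partial z_{n,k}} = \Delta^N_\ell \Psi_{n,k}$ gives the affine recursion
\[
  D_{\ell+1} = a_\ell\, D_\ell + c_\ell, \qquad
  a_\ell \coloneqq 1 + b'(X^N_\ell)\,\Delta^N_\ell W, \qquad
  c_\ell \coloneqq b(X^N_\ell)\,\Delta^N_\ell \Psi_{n,k},
\]
with $D_0 = 0$ (as $X^N_0 = x$ is constant). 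Solving it yields
\[
  \frac{\partial X^N_T}{\partial z_{n,k}} = D_{2^N}
  = \sum_{\ell=0}^{2^N-1} \Bigl(\textstyle\prod_{m=\ell+1}^{2^N-1} a_m\Bigr)\, c_\ell
  = \sum_{\ell=0}^{2^N-1} \frac{\partial X^N_T}{\partial X^N_{\ell+1}}\, c_\ell ,
\]
where the last equality is the chain rule: $\prod_{m=\ell+1}^{2^N-1} a_m$ is exactly the derivative of the Euler flow from step $\ell+1$ to $T$ at frozen increments, i.e.\ $\frac{\partial X^N_T}{\partial X^N_{\ell+1}}$.

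By Assumption~\ref{ass:boundedness-derivative} with $p=1$ there is a random variable $C_1$ with finite moments of all orders, independent of $N$ and $\ell$, with $\bigl|\frac{\partial X^N_T}{\partial X^N_{\ell+1}}\bigr| \le C_1$ a.s., and since $b$ is bounded, $|b(X^N_\ell)| \le \|b\|_\infty$. Hence $\bigl|\frac{\partial X^N_T}{\partial z_{n,k}}\bigr| \le C_1 \|b\|_\infty \sum_{\ell=0}^{2^N-1} \bigl|\Delta^N_\ell \Psi_{n,k}\bigr|$, and it remains to bound the last sum by a constant multiple of $2^{-n/2}$ uniformly in $k$. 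Now $\Psi_{n,k}$ is the nonnegative ``tent'' supported on $[2^{-n}k, 2^{-n}(k+1)]$, affine with slope $\pm 2^{n/2}$ on each half of that interval and with $\|\Psi_{n,k}\|_\infty = 2^{-n/2-1}$; its breakpoints lie on the grid $\mathcal{D}^N$ (since $n \le N$), so $\Psi_{n,k}$ is monotone on each grid cell and $\sum_\ell |\Delta^N_\ell\Psi_{n,k}|$ equals the total variation of $\Psi_{n,k}$, namely $2\,\|\Psi_{n,k}\|_\infty = 2^{-n/2}$ (equivalently, $\Delta^N_\ell\Psi_{n,k}$ is nonzero only on the $\Ordo{2^{N-n}}$ grid cells inside $\operatorname{supp}\psi_{n,k}$, on each contributing $\pm 2^{n/2}$ times the grid spacing $\Ordo{2^{-N}}$). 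Therefore $\bigl|\frac{\partial X^N_T}{\partial z_{n,k}}\bigr| \le C_1\|b\|_\infty\, 2^{-n/2} \le 2^{-n/2+1}\, C_1\|b\|_\infty$, and $C_1\|b\|_\infty$ is an $\Ordo{1}$ random variable independent of $n$ and $k$, which is the claim.

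On the difficulties: given Assumption~\ref{ass:boundedness-derivative}, the argument is essentially bookkeeping, and note that Assumption~\ref{ass:boundedness-inverse} is not needed here. The one point deserving care is the identification $\prod_{m=\ell+1}^{2^N-1} a_m = \frac{\partial X^N_T}{\partial X^N_{\ell+1}}$, which lets the $p=1$ case of Assumption~\ref{ass:boundedness-derivative} supply a bound uniform in $N$; this is exactly the statement that the tangent flow of the Euler scheme has moments bounded uniformly in $N$. Were one unwilling to postulate this, the same product could be controlled directly via $\log \prod_m a_m = \sum_m \log\!\bigl(1 + b'(X^N_m)\Delta^N_m W\bigr) \approx \int_0^T b'(X^N_s)\,dW_s - \tfrac12\int_0^T b'(X^N_s)^2\,ds$, using a Burkholder--Davis--Gundy / exponential-moment estimate together with the boundedness of $b'$; that is where all the stochastic analysis would sit. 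Finally, all the derivatives above are those of the deterministic map $X^N_T(\cdot,\cdot)$ evaluated at the Gaussian point $(Z_{-1},\mathbf{Z}^N)$, so no differentiation under an expectation enters at this stage.
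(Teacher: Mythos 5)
Your proof is correct and follows essentially the same route as the paper: both reduce to the chain-rule sum $\frac{\partial X^N_T}{\partial z_{n,k}} = \sum_\ell \frac{\partial X^N_T}{\partial X^N_{\ell+1}}\, b(X^N_\ell)\, \Delta^N_\ell \Psi_{n,k}$, use Assumption~\ref{ass:boundedness-derivative} (with the boundedness of $b$) to make the random prefactor $\mathcal{O}(1)$ uniformly in $N,\ell$, and extract the $2^{-n/2}$ factor from the localization and scaling of $\Psi_{n,k}$. Your only cosmetic departures are deriving that sum by explicitly solving the tangent recursion rather than invoking the chain rule directly, and accounting $\sum_\ell \abs{\Delta^N_\ell \Psi_{n,k}}$ via the total variation of $\Psi_{n,k}$ instead of (number of active increments) $\times$ (per-increment bound), which yield the same estimate.
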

\begin{proof}
  First, Assumption~\ref{ass:boundedness-derivative} implies
  that $\f{\pa X^N_T}{\pa \Delta^N_\ell W} = \mathcal{O}(1)$. Indeed, 
  \begin{equation*}
    \f{\pa X^N_T}{\pa \Delta^N_\ell W} = \f{\pa X^N_T}{\pa X^N_{\ell+1}}
    \f{\pa X^N_{\ell+1}}{\pa \Delta^N_\ell W} = \mathcal{O}(1) b(X^N_\ell) =
    \mathcal{O}(1).
  \end{equation*}
  Next, we must  identify the increments $\Delta^N_\ell$ that depend on
  $Z_{n,k}$. This is the case if and only if (iff)  the support of $ \psi_{n,k}$ has a nonempty
  intersection with $]t^N_{\ell}, t^N_{\ell+1}[$. Explicitly, this means that
  \begin{equation*}
    \ell 2^{-(N-n+1)} -1 < k < (\ell+1) 2^{-(N-n+1)}.
  \end{equation*}
  If we fix $N$, $k$, and  $n$, the derivative of $\Delta^N_\ell W$
  w.r.t.~$Z_{n,k}$ does not vanish iff $
    2^{N-n+1} k \le \ell < 2^{N-n+1} (k+1)$  because
  \begin{equation}\label{eq:dWdZ}
    \abs{\f{\pa \Delta^N_\ell W}{\pa Z_{n,k}}} = \abs{\Delta^N_\ell \Psi_{n,k}}
    \le 2^{-(N-n/2)}.
  \end{equation}
Thus, we obtain the following:
  \begin{equation}
    \label{eq:2}
    \f{\pa X^N_T}{\pa z_{n,k}}(Z_{-1}, \textbf{Z}^N) =
    \sum_{\ell=2^{N-n+1}k}^{2^{N-n+1}(k+1)-1} \f{\pa X^N_T}{\pa \Delta^N_\ell
      W} \f{\pa \Delta^N_\ell W}{\pa Z_{n,k}} = 2^{N-n+1} 2^{-(N-n/2)}
    \mathcal{O}(1) = 2^{-n/2+1} \mathcal{O}(1).\qedhere
  \end{equation}
\end{proof}
\begin{lemma}
  \label{lem:d2XdZdY}
If  Assumption \ref{ass:boundedness-derivative} holds, then  similar to Lemma~\ref{lem:dXdZ}, we have
  \begin{equation*}
    \f{\pa^2 X^N_T}{\pa y \pa z_{n,k}}(Z_{-1}, \textbf{Z}^N) = 2^{-n/2+1} \mathcal{O}(1).
  \end{equation*}
\end{lemma}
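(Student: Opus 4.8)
The plan is to rerun the proof of Lemma~\ref{lem:dXdZ} almost verbatim, changing only one ingredient. Recall that there one writes $\f{\pa X^N_T}{\pa z_{n,k}} = \sum_{\ell=2^{N-n+1}k}^{2^{N-n+1}(k+1)-1}\f{\pa X^N_T}{\pa \Delta^N_\ell W}\,\Delta^N_\ell\Psi_{n,k}$, a sum of $2^{N-n+1}$ terms whose deterministic coefficients satisfy $\abs{\Delta^N_\ell\Psi_{n,k}}\le 2^{-(N-n/2)}$ by~\eqref{eq:dWdZ}, and then combines the factor $\f{\pa X^N_T}{\pa \Delta^N_\ell W}=\mathcal{O}(1)$ with this counting as in~\eqref{eq:2}. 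Since the coefficients $\Delta^N_\ell\Psi_{n,k}$ do not depend on $y$, differentiating in $y$ gives $\f{\pa^2 X^N_T}{\pa y\,\pa z_{n,k}}=\sum_\ell\left(\f{\pa}{\pa y}\f{\pa X^N_T}{\pa \Delta^N_\ell W}\right)\Delta^N_\ell\Psi_{n,k}$, so if I can show $\f{\pa}{\pa y}\f{\pa X^N_T}{\pa \Delta^N_\ell W}=\mathcal{O}(1)$ uniformly in $\ell$ and $N$, the identical counting produces $2^{N-n+1}\cdot 2^{-(N-n/2)}\cdot\mathcal{O}(1)=2^{-n/2+1}\mathcal{O}(1)$, which is the assertion.

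First I would compute $\f{\pa}{\pa y}\f{\pa X^N_T}{\pa \Delta^N_\ell W}$. Since $y=z_{-1}$ is the coefficient of $\Psi_{-1}$ in the expansion of $W^N$, one has $\f{\pa \Delta^N_m W}{\pa y}=\Delta^N_m\Psi_{-1}=t^N_{m+1}-t^N_m=\mathcal{O}(2^{-N})$ for each of the $\mathcal{O}(2^N)$ increments, and by the chain rule $\f{\pa}{\pa y}\f{\pa X^N_T}{\pa \Delta^N_\ell W}=\sum_m\f{\pa^2 X^N_T}{\pa \Delta^N_\ell W\,\pa \Delta^N_m W}\,\Delta^N_m\Psi_{-1}$. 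Hence it suffices to show that the second increment-derivatives are uniformly $\mathcal{O}(1)$, for then the sum is $\mathcal{O}(2^N)\cdot\mathcal{O}(2^{-N})\cdot\mathcal{O}(1)=\mathcal{O}(1)$. To get $\f{\pa^2 X^N_T}{\pa \Delta^N_\ell W\,\pa \Delta^N_m W}=\mathcal{O}(1)$ I would differentiate the identity $\f{\pa X^N_T}{\pa \Delta^N_\ell W}=\f{\pa X^N_T}{\pa X^N_{\ell+1}}\,b(X^N_\ell)$ used in Lemma~\ref{lem:dXdZ} once more with respect to $\Delta^N_m W$; the chain rule then produces a number of summands bounded independently of $N$, each a product of a flow derivative $\f{\pa X^N_T}{\pa X^N_i}$ or $\f{\pa^2 X^N_T}{\pa X^N_i\,\pa X^N_j}$, which is $\mathcal{O}(1)$ by Assumption~\ref{ass:boundedness-derivative}, a derivative of $b$ of order $\le 2$, bounded by hypothesis, and a first increment-derivative $\f{\pa X^N_i}{\pa \Delta^N_m W}=\mathcal{O}(1)$, obtained exactly as in Lemma~\ref{lem:dXdZ}. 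The product of $\mathcal{O}(1)$ factors is $\mathcal{O}(1)$.

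The hard part will be the bookkeeping that keeps all these bounds \emph{uniform in $N$}: a priori the number of increments $m$, and the number of terms generated by repeatedly applying the chain rule to the discrete flow, both grow with $N$, so one must check that the factors $\Delta^N_m\Psi_{-1}=\mathcal{O}(2^{-N})$ and $\Delta^N_\ell\Psi_{n,k}=\mathcal{O}(2^{-(N-n/2)})$ exactly absorb this growth, and — most importantly — that the constants in Assumption~\ref{ass:boundedness-derivative} can be taken independent of $N$ and of the state indices. That independence is precisely what the assumption postulates (and what holds when $b$ has bounded derivatives of all orders), so once this accounting is carried out the argument is a line-by-line copy of the proof of Lemma~\ref{lem:dXdZ}.
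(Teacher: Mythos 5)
Your proposal is correct and follows essentially the same route as the paper: it exploits the linearity of the Brownian increments in the Gaussian coefficients, reduces the mixed derivative to the second increment-derivatives $\partial^2 X^N_T/\partial\Delta^N_\ell W\,\partial\Delta^N_m W=\mathcal{O}(1)$ (obtained, as in the paper's equation~\eqref{eq:d2XdWdW}, from Assumption~\ref{ass:boundedness-derivative} and the boundedness of $b$ and its derivatives), and concludes by the same counting $2^{N-n+1}\cdot 2^{-(N-n/2)}\cdot\mathcal{O}(2^{N})\cdot\mathcal{O}(2^{-N})=2^{-n/2+1}$. No gaps beyond those already implicit in the paper's own argument.
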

\begin{proof}
  $\Delta^N_\ell W$ is a linear function in $Z_{-1}$ and $\textbf{Z}^N$, implying that
  all mixed derivatives $\f{\pa^2\Delta^N_\ell W}{\pa Z_{n,k} \pa Z_{-1}}$ vanish.
  From equation~\eqref{eq:2} we hence obtain
  \begin{equation*}
    \f{\pa^2 X^N_T}{\pa z_{n,k} \pa y}(Z_{-1}, \textbf{Z}^N) =
    \sum_{\ell=2^{N-n+1}k}^{2^{N-n+1}(k+1)-1} \f{\pa^2 X^N_T}{\pa \Delta^N_\ell
      W \pa Z_{-1}} \f{\pa \Delta^N_\ell W}{\pa Z_{n,k} }.
  \end{equation*}
  Further,
  \begin{equation*}
    \f{\pa^2 X^N_T}{\pa \Delta^N_\ell W \pa Z_{-1}} = \sum_{j=0}^{2^{N+1}-1}
    \f{\pa^2 X^N_T}{\pa \Delta^N_\ell W \pa \Delta^N_j W} \f{\pa \Delta^N_j W}{\pa
      Z_{-1}}.
  \end{equation*}
  Note that
  \begin{equation}
    \label{eq:d2XdWdW}
    \f{\pa^2 X^N_T}{\pa \Delta^N_\ell W \pa \Delta^N_j W} = \f{\pa^2
      X^N_T}{\pa X^N_{\ell+1} \pa X^N_{j+1}} b(X^N_\ell) b(X^N_j) + \indic{j
      < \ell} \f{\pa X^N_T}{\pa X_\ell^N} b^\prime(X^N_\ell) \f{\pa
      X^N_\ell}{\pa X^N_{j+1}} b(X^N_j) = \mathcal{O}(1)
  \end{equation}
  using Assumption~\ref{ass:boundedness-derivative}. We also have $\f{\pa \Delta^N_j W}{\pa
    Z_{-1}} = \mathcal{O}(2^{-N})$, implying the statement of the lemma.
\end{proof}
\begin{remark}
  Lemmas~\ref{lem:dXdZ} and~\ref{lem:d2XdZdY} also hold (mutatis mutandis) for
  $z_{n,k} = y$ (with $n = 0$).  
\end{remark}
\begin{proposition}
  \label{prop:first-derivatives}
  if Assumptions \ref{ass:boundedness-derivative} and \ref{ass:boundedness-inverse} hold, we have $\f{\pa H^N(\textbf{z}^N)}{\pa z_{n,k}} = \mathcal{O}(2^{-n/2})$ such
  that the constant in front of $2^{-n/2}$ does not depend on $n$ or $k$.
\end{proposition}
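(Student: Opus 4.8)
The plan is to start from the mollified identity~\eqref{eq:1}, recognize that $\f{\pa g_\delta\left(X^N_T(y,\textbf{z}^N)\right)}{\pa y}$ is a genuine total $y$-derivative, integrate by parts in $y$, verify that the boundary contributions at $y=\pm\infty$ vanish, and then send $\delta\to 0$ by dominated convergence to reach the integration-by-parts formula for $\f{\pa H^N(\textbf{z}^N)}{\pa z_{n,k}}$ already displayed in the text, namely
\begin{small}
\begin{equation*}
  \f{\pa H^N(\textbf{z}^N)}{\pa z_{n,k}} = -\int_\R g\left( X^N_T(y,\textbf{z}^N) \right)\, \f{\pa}{\pa y}\left[ \left( \f{\pa X^N_T}{\pa y}(y,\textbf{z}^N) \right)^{-1} \f{\pa X^N_T}{\pa z_{n,k}}(y,\textbf{z}^N)\, \f{1}{\sqrt{2\pi}}\, e^{-y^2/2} \right] dy.
\end{equation*}
\end{small}
For this first step I need the integrand (with $g_\delta$, uniformly in $\delta$) to decay fast enough at $\pm\infty$ for the boundary term to disappear and to admit a fixed integrable majorant; both follow from the Gaussian weight $e^{-y^2/2}$, once one notes that $y\mapsto X^N_T(y,\textbf{z}^N)$ and its first two $y$-derivatives grow at most exponentially in $|y|$ (because $b$ is bounded and each increment $\Delta^N_\ell W$ is affine in $y$) and that the payoff $g$ has at most polynomial growth.

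Next I differentiate the bracket by the product rule, producing three terms. The factor $\left(\f{\pa X^N_T}{\pa y}\right)^{-1}$ is $\mathcal{O}(1)$ by Assumption~\ref{ass:boundedness-inverse}; its $y$-derivative $-\left(\f{\pa X^N_T}{\pa y}\right)^{-2}\f{\pa^2 X^N_T}{\pa y^2}$ is likewise $\mathcal{O}(1)$, using Assumption~\ref{ass:boundedness-inverse} together with the remark after Lemma~\ref{lem:d2XdZdY} applied with $z_{n,k}=y$, $n=0$, which gives $\f{\pa^2 X^N_T}{\pa y^2}=\mathcal{O}(1)$; the factor $\f{\pa X^N_T}{\pa z_{n,k}}$ is $2^{-n/2+1}\mathcal{O}(1)$ by Lemma~\ref{lem:dXdZ}; its $y$-derivative $\f{\pa^2 X^N_T}{\pa y\,\pa z_{n,k}}$ is $2^{-n/2+1}\mathcal{O}(1)$ by Lemma~\ref{lem:d2XdZdY}; and $\f{\pa}{\pa y}\left(\f{1}{\sqrt{2\pi}}e^{-y^2/2}\right)=-\f{y}{\sqrt{2\pi}}e^{-y^2/2}$. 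Collecting the three terms, each is $2^{-n/2+1}\mathcal{O}\left((1+|y|)e^{-y^2/2}\right)$ with $\mathcal{O}$-constants independent of $n$ and $k$, hence so is the whole $y$-derivative of the bracket.

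Finally I substitute this bound into the integral. Since $\left|g\left(X^N_T(y,\textbf{z}^N)\right)\right|$ is controlled by a fixed random variable with finite moments of all orders times a polynomial in $|y|$, the product against $(1+|y|)e^{-y^2/2}$ is integrable in $y$ — any residual $\mathcal{O}(1)$ random factors, which may themselves grow like $e^{c|y|}$, are still absorbed by $e^{-y^2/2}$ — so the $y$-integral is $\mathcal{O}(1)$. Therefore $\f{\pa H^N(\textbf{z}^N)}{\pa z_{n,k}}=2^{-n/2+1}\mathcal{O}(1)=\mathcal{O}(2^{-n/2})$ with a constant not depending on $n$ or $k$, as claimed. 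The main obstacle here is not the algebra of the product rule but the analytic bookkeeping surrounding the integration by parts: making the vanishing of the boundary terms and the $\delta\to0$ passage rigorous, which is precisely where the boundedness of $b$ and the at-most-polynomial growth of $g$ are used, and which is the reason the toy case $X^N_T(y,\textbf{z}^N)=\cos(y)+z_{n,k}$ flagged in the footnote is excluded.
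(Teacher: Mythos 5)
Your proposal is correct and follows essentially the same route as the paper: starting from the mollified identity \eqref{eq:1}, integrating by parts to discard $g_\delta$, expanding the $y$-derivative of the bracket by the product rule, and bounding the three resulting terms via Lemma~\ref{lem:dXdZ}, Lemma~\ref{lem:d2XdZdY} (including its remark for $z_{n,k}=y$) and Assumption~\ref{ass:boundedness-inverse} with $p=2$, before integrating out $y$ against the Gaussian weight. The only difference is that you spell out the vanishing of boundary terms and the $\delta\to 0$ passage, which the paper treats more informally in the preamble of Appendix~\ref{appendix:Details for the proof of Theorem}.
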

\begin{proof}
 if Assumptions \ref{ass:boundedness-derivative} and \ref{ass:boundedness-inverse} hold, we obtain the following:
  \begin{align*}
    \f{\pa H^N(\textbf{z}^N)}{\pa z_{n,k}} &= -\int_\R g\left( X^N_T
      (y, \textbf{z}^N) \right) \f{\pa}{\pa y} \left[ \left( \f{\pa X^N_T}{\pa y}(y, \textbf{z}^N)
  \right)^{-1} \f{\pa X^N_T}{\pa z_{n,k}}(y, \textbf{z}^N)  \f{1}{\sqrt{2\pi}}
                                    e^{-\f{y^2}{2}} \right] dy\\
    &= -\int_\R g\left( X^N_T(y, \textbf{z}^N) \right) \Biggl[- \left( \f{\pa
      X^N_T}{\pa y}(y, \textbf{z}^N) \right)^{-2} \f{\pa^2 X^N_T}{\pa y^2}(y, \textbf{z}^N)
      \f{\pa X^N_T}{\pa z_{n,k}}(y, \textbf{z}^N) +\\
    &\quad+ \left( \f{\pa X^N_T}{\pa y}(y, \textbf{z}^N) \right)^{-1} \f{\pa^2
      X^N_T}{\pa z_{n,k} \pa y}(y, \textbf{z}^N) - y \left( \f{\pa X^N_T}{\pa y}(y, \textbf{z}^N)
  \right)^{-1} \f{\pa X^N_T}{\pa z_{n,k}}(y, \textbf{z}^N) \Biggr] \f{1}{\sqrt{2\pi}}
                                    e^{-\f{y^2}{2}}dy.
  \end{align*}
  Hence, Lemmas~\ref{lem:dXdZ} and ~\ref{lem:d2XdZdY}  with
  Assumption~\ref{ass:boundedness-inverse} (for $p=2$) imply that
  \begin{equation*}
    \f{\pa H^N(\textbf{z}^N)}{\pa z_{n,k}} = \mathcal{O}(2^{-n/2}),
  \end{equation*}
  with constants independent of $n$ and $k$.\footnote{When $F^N(Z_{-1}, \textbf{Z}^N) = \mathcal{O}(c)$ for some deterministic
  			constant $c$,  this property is retained when integrating out one of the rdvs, i.e., we still achieve   $		\int_\R F^N(y, \textbf{Z}^N) \f{1}{\sqrt{2\pi}} e^{-\f{y^2}{2}}dy = \mathcal{O}(c).$}
\end{proof}
For the general case we require the following lemma.
\begin{lemma}
  \label{lem:d2XdZ2}
  For any $p \in \N$ and indices $n_1, \ldots, n_p$ and $k_1, \ldots, k_p$
  (satisfying $0 \le k_j < 2^{n_j}$) we have (with constants independent of
  $n_j, k_j$)
  \begin{equation*}
    \f{\pa^p X^N_T}{\pa z_{n_1,k_1} \cdots \pa z_{n_p,k_p}} (Z_{-1}, \textbf{Z}^N) =
    \mathcal{O}\left( 2^{-\sum_{j=1}^p n_j/2} \right).
  \end{equation*}
  The result also holds (mutatis mutandis) if one or several $z_{n_j,k_j}$ are
  replaced by $y = z_{-1}$ (with $n_j$ set to $0$).
\end{lemma}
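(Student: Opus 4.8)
The plan is to reduce the statement to two facts and then combine them: (i) a uniform bound showing that \emph{arbitrary} mixed partial derivatives of $X^N_T$ with respect to the Brownian increments $\Delta^N_\ell W$ are $\mathcal{O}(1)$; and (ii) the linearity of each increment $\Delta^N_\ell W$ in the coefficients $z_{n,k}$ (and in $y = z_{-1}$), together with the decay estimate~\eqref{eq:dWdZ}. The argument runs by induction on $p$, the cases $p = 1$ (and $p = 2$ with one argument equal to $y$) being Lemmas~\ref{lem:dXdZ} and~\ref{lem:d2XdZdY}.

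\emph{Step 1.} First I would establish that, for every $q \in \N$ and all $j_1, \ldots, j_q \in \{0, \ldots, 2^N-1\}$,
\begin{equation*}
  \f{\pa^q X^N_T}{\pa \Delta^N_{j_1} W \cdots \pa \Delta^N_{j_q} W} = \mathcal{O}(1),
\end{equation*}
with an $\mathcal{O}(1)$ term whose bound may depend on $q$ but not on $N$ or the $j_i$. This follows by induction on $q$ from the Euler recursion $X^N_{\ell+1} = X^N_\ell + b(X^N_\ell)\,\Delta^N_\ell W$: since $\Delta^N_j W$ enters $X^N_T$ only through $X^N_{j+1}$, one has $\f{\pa X^N_T}{\pa \Delta^N_j W} = \f{\pa X^N_T}{\pa X^N_{j+1}}\,b(X^N_j)$ as in the proof of Lemma~\ref{lem:dXdZ}, and each subsequent differentiation produces, via the chain rule, only products of (a) mixed derivatives of $X^N_T$ — or of an intermediate $X^N_\ell$, for which one applies Assumption~\ref{ass:boundedness-derivative} to the scheme stopped at step $\ell$ — with respect to the $X^N_m$, all $\mathcal{O}(1)$, and (b) derivatives of $b$, bounded by hypothesis; equation~\eqref{eq:d2XdWdW} is exactly the case $q = 2$. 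The count of summands grows factorially in $q$, but for each fixed $q$ this only affects the constant.

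\emph{Step 2.} Because $\Delta^N_\ell W = z_{-1}\,\Delta^N_\ell\Psi_{-1} + \sum_{n,k} z_{n,k}\,\Delta^N_\ell\Psi_{n,k}$ is linear in the coefficients, all its second and higher coefficient-derivatives vanish, so iterating the chain rule gives the exact identity
\begin{equation*}
  \f{\pa^p X^N_T}{\pa z_{n_1,k_1}\cdots\pa z_{n_p,k_p}} = \sum_{\ell_1,\ldots,\ell_p} \f{\pa^p X^N_T}{\pa\Delta^N_{\ell_1}W\cdots\pa\Delta^N_{\ell_p}W}\,\prod_{j=1}^p \Delta^N_{\ell_j}\Psi_{n_j,k_j}.
\end{equation*}
By the support argument in the proof of Lemma~\ref{lem:dXdZ}, for each fixed $j$ the factor $\Delta^N_{\ell_j}\Psi_{n_j,k_j}$ is nonzero only for $\ell_j$ in an interval of length $2^{N-n_j+1}$, and then $\abs{\Delta^N_{\ell_j}\Psi_{n_j,k_j}} \le 2^{-(N-n_j/2)}$ by~\eqref{eq:dWdZ}. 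Combining this with Step~1,
\begin{equation*}
  \left\lvert \f{\pa^p X^N_T}{\pa z_{n_1,k_1}\cdots\pa z_{n_p,k_p}} \right\rvert \le \mathcal{O}(1)\prod_{j=1}^p 2^{N-n_j+1}\,2^{-(N-n_j/2)} = \mathcal{O}(1)\prod_{j=1}^p 2^{-n_j/2+1} = \mathcal{O}\left(2^{-\sum_{j=1}^p n_j/2}\right),
\end{equation*}
with the constant (namely $2^p$ times that of Step~1) independent of the $n_j,k_j$. If one or more arguments equal $y = z_{-1}$ the same computation applies: there $\Delta^N_\ell\Psi_{-1} = 2^{-(N+1)}$ for all $2^{N+1}$ increments, so the product of "interval length" and "magnitude" is $2^{N+1}\cdot 2^{-(N+1)} = 1 = 2^{-0/2}$, i.e.\ the estimate holds with $n_j = 0$.

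I expect Step~1 to be the main obstacle: although it is conceptually just a repeated application of the chain rule to the Euler recursion, it requires careful bookkeeping to identify which $X^N_m$ and which derivatives of $b$ appear in every summand and to check that each is covered by Assumption~\ref{ass:boundedness-derivative}; the factorial growth in the number of summands, while harmless for fixed $p$, is precisely what prevents this line of argument from yielding analyticity (\cf~Remark~\ref{rem:analyticity}).
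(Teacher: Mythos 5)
Your proposal is correct and follows essentially the same route as the paper: the exact chain-rule expansion over increments (valid since each $\Delta^N_\ell W$ is linear in the coefficients), the $\mathcal{O}(1)$ bound on mixed derivatives of $X^N_T$ with respect to the $\Delta^N_\ell W$ obtained from the Euler recursion, Assumption~\ref{ass:boundedness-derivative} and the boundedness of $b$ and its derivatives (the paper cites this as an argument analogous to~\eqref{eq:d2XdWdW}), and finally the count of $\prod_j 2^{N-n_j+1}$ nonvanishing summands each of size $\prod_j 2^{-(N-n_j/2)}$ via~\eqref{eq:dWdZ}. Your Step~1 merely spells out in more detail the induction the paper leaves implicit, and your treatment of the $y=z_{-1}$ case matches the paper's ``mutatis mutandis'' remark.
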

\begin{proof}
 Each $\Delta^N_\ell W$ is a linear function of
  $(Z_{-1}, \textbf{Z}^N)$ implying that all higher derivatives of $\Delta^N_\ell W$
  w.r.t.~$(Z_{-1}, \textbf{Z}^N)$ vanish. Hence,
  \begin{equation*}
    \f{\pa^p X^N_T}{\pa Z_{n_1,k_1} \cdots \pa Z_{n_p,k_p}} = \sum_{\ell_1 =
      2^{N-n_1+1} k_1}^{2^{N-n_1+1}(k_1+1) - 1} \cdots \sum_{\ell_p =
      2^{N-n_p+1} k_p}^{2^{N-n_p+1}(k_p+1) - 1} \f{\pa^p X^N_T}{\pa
      \Delta^N_{\ell_1} \cdots \pa \Delta^N_{\ell_p} W} \f{\pa
      \Delta^N_{\ell_1} W}{\pa Z_{n_1,k_1}} \cdots \f{\pa
      \Delta^N_{\ell_p}W}{\pa Z_{n_p,k_p}}.
  \end{equation*}
 By an argument similar to that made for \eqref{eq:d2XdWdW}, we obtain
  \begin{equation*}
    \f{\pa^p X^N_T}{\pa \Delta^N_{\ell_1} \cdots \pa \Delta^N_{\ell_p} W} = \mathcal{O}(1).
  \end{equation*}
  By~\eqref{eq:dWdZ}, we observe that each summand in the aforementioned sum is of order
  $\prod_{j=1}^p 2^{-(N-n_j/2)}$. The number of summands in total is
  $\prod_{j=1}^p 2^{N-n_j+1}$. Therefore, we obtain  the desired result. 
\end{proof}
\begin{proof}[Sketch of the proof of Theorem~\ref{thr:smoothness}]
	We apply integration by parts $p$ times, as performed in the proof of
	Proposition~\ref{prop:first-derivatives}, which shows that we can again
	replace the mollified payoff function $g_\delta$ by the true, nonsmooth function
	$g$. Moreover,  using this procedure, we obtain a formula of the form
		\begin{equation*}
			\f{\pa^p H^N}{\pa z_{n_1,k_1} \cdots \pa z_{n_p,k_p}} (z^N) = \int_{\R}
			g\left( X^N_T(y, z^N) \right) \blacksquare \f{1}{\sqrt{2\pi}} e^{-\f{y^2}{2}} dy,
	\end{equation*}
	where $\blacksquare$ represents a long sum of products of various
	terms. However,   when  the
	derivatives w.r.t.~$y$ are ignored, each summand contains all derivatives
	w.r.t.~$z_{n_1,k_1}, \ldots, z_{n_p,k_p}$ exactly once. (Generally,
	each summand is a product of the derivatives of $X^N_T$ w.r.t.~some
	$z_{n_j,k_j}$s, possibly including other terms, such as polynomials in $y$ and
	derivatives w.r.t.~$y$.) As all other terms are assumed to be of
	order $\mathcal{O}(1)$ based on Assumptions~\ref{ass:boundedness-derivative}
	and~\ref{ass:boundedness-inverse}, the result suggested by 	Lemma~\ref{lem:d2XdZ2} is implied, concluding the proof of Theorem \ref{thr:smoothness}.
\end{proof}

\section{Discussion of Assumption \ref{ass:boundedness-inverse}}\label{appendix:Discussion of Assumption}
We present sufficient conditions for   Assumption \ref{ass:boundedness-inverse} to be valid in the one-dimensional setting. Moreover, we  discuss its limitation and some multivariate cases in which this assumption holds.

We want to examine the term given by $\left(\frac{\partial X_T^N}{\partial y} \left( Z_{-1}, \mathbf{Z}^N\right) \right)^{-p}$  for $p \in \nset$.  For this, we  consider the one-dimensional SDE
\begin{equation*}\label{eq:one_dim_SDE_example}
dX_t=a(X_t) dt +b(X_t) dW_t.
\end{equation*}  
For  ease of presentation, we set the drift term $a(.)$ to  zero. Moreover, using the Brownian bridge construction, we achieve 
\begin{equation}\label{eq:one_dim_SDE_example_bridge}
dX_t=b(X_t)  \left(\frac{y}{\sqrt{T} }dt+ dB_t \right),
\end{equation}
where  $y$  is  a  standard Gaussian rdv and  $B$ is the Brownian bridge.

The solution of \eqref{eq:one_dim_SDE_example_bridge}, at the final time  $T>0$  is 
\begin{equation*}
X_T=x_0+   \frac{y}{\sqrt{T} } \int_{0}^{T} b(X_s) ds+  \int_{0}^{T} b(X_s)   dB_s,
\end{equation*}
and consequently,
\begin{equation*}
\frac{\partial X_T}{\partial y}=    \frac{y}{\sqrt{T} } \int_{0}^{T} b^\prime(X_s)   \frac{\partial X_s}{\partial y } ds+   \frac{1}{\sqrt{T} } \int_{0}^{T} b(X_s) ds+  \int_{0}^{T} b^\prime(X_s)   \frac{\partial X_s}{\partial y }   dB_s.
\end{equation*}  
This implies that $\frac{\partial X_T}{\partial y}$ solves 
\begin{equation*}   
	\begin{cases}
	d \left(\frac{\partial X_T}{\partial y} \right) &=    \frac{b(X_t)}{\sqrt{T}}  dt+  b^\prime(X_t)   \frac{\partial X_t}{\partial y }   dW_t,\\
	\frac{\partial X_T}{\partial y}\mid_{t=0}&=0.
	\end{cases}
\end{equation*}
Using Duhamel's principle, we obtain
\begin{equation*}
\frac{\partial X_T}{\partial y}=  \int_{0}^{T}  \frac{b(X_s)}{\sqrt{T}} \; \exp \left(  \left( \int_{s}^{T} b^\prime(X_u)dW_u \right)  -\frac{1}{2}  \int_{s}^{T}  (b^\prime)^2 (X_u) du   \right) ds.
\end{equation*}
If there exists  $b_0 \in \rset $ such that 
	\begin{equation}\label{eq:cond1 asummptionA_3}
	b^2(x) \ge b_0^2, \quad \forall x \in \rset,
	\end{equation}
 then 
\begin{align*}
\abs{\frac{\partial X_T}{\partial y}}  &\ge \frac{\abs{b_0}}{\sqrt{T}}          \int_{0}^{T}  \exp \left(  \left( \int_{s}^{T} b^\prime(X_u)dW_u \right)  -\frac{1}{2}  \int_{s}^{T}  (b^\prime)^2 (X_u) du   \right) ds,\\
&\ge \frac{\abs{b_0}}{\sqrt{T}}        \exp \left(  \left( \int_{0}^{T} b^\prime(X_u)dW_u \right)  -\frac{1}{2}  \int_{0}^{T}  (b^\prime)^2 (X_u) du   \right) ds,
\end{align*}
and consequently,  for any $p \in \nset$, we  obtain
\begin{align*}
\left(\abs{\frac{\partial X_T}{\partial y} } \right)^{-p}  	\le  \left(\frac{\abs{b_0}}{\sqrt{T}}\right)^{-p}        \exp \left( -p  \left( \left( \int_{0}^{T} b^\prime(X_u)dW_u \right)  -\frac{1}{2}  \int_{0}^{T}  (b^\prime)^2 (X_u) du\right)   \right) ds,
\end{align*}
and the sufficient condition for Assumption \ref{ass:boundedness-inverse} to be   valid is that for any $p \in \nset$, there exists  a real deterministic constant  $D_p>0$ such that
\begin{align}\label{eq:condition inverse moments}
		\expt{  \exp \left( -p  \left(\left( \int_{0}^{T} b^\prime(X_u)dW_u \right)  -\frac{1}{2}  \int_{0}^{T}  (b^\prime)^2 (X_u) du   \right) \right)  }  &\le D_p.
\end{align}
For the particular one-dimensional  GBM model, condition \eqref{eq:condition inverse moments} is clearly satisfied. Moreover,  both  (i)  one-dimensional models with a linear or constant diffusion and  (ii)  multivariate models with a linear drift and constant diffusion satisfy Assumption \ref{ass:boundedness-inverse}. Interestingly, the multivariate lognormal model  can  be observed in case (ii) (refer to \cite{bayersmoothing} for further details).  However,  there may be  cases  in which Assumption \ref{ass:boundedness-inverse} is not fulfilled, e.g.,  $X_T=W_T^2$, corresponding to a system of SDEs where   the diffusion coefficient does not satisfy condition \eqref{eq:cond1 asummptionA_3}. Nevertheless, the proposed method  works well in such cases because (using notation of Section \ref{sec:Discrete time, practical motivation}) $g(X_T)=G(y_1^2)$, and then we can apply our numerical smoothing technique  to obtain a highly smooth integrand. Finally, an additional investigation on the sufficient conditions for our smoothness Theorem  \ref{thr:smoothness} to be valid in high dimensions is an open problem and is not within the scope of this work. 

\section{More Details on the Work Discussion of the ASGQ Method}\label{appendix:More details on the error and work discussion of ASGQ method combined with numerical smoothing}
Under certain conditions of the regularity parameters $p$  and $s$, we can achieve  $\text{Work}_{\text{ASGQ}}=\Ordo{\text{TOL}^{-1}}$ under the best scenario ($p,s \gg  1$). In fact, let q=p/2, then using  the method of Lagrange multipliers, we obtain
\begin{equation*}
M_{ASGQ}\propto\Delta t^{\frac{2q+s-qs}{q(qs+2q+s)}},\quad \text{and} \quad  M_{\text{lag}}\propto\Delta t^{\frac{2q+s-qs}{\frac{s}{2}(qs+2q+s)}}.
\end{equation*}
Using the constraint in \eqref{eq:opt_ASGQ_work}, we can easily demonstrate that, for an error tolerance $\text{TOL}$, we achieve $\Delta t=\Ordo{\text{TOL}^{\frac{qs+2q+s}{qs-2q-s}}}$. Therefore, the optimal work, $\text{Work}_{\text{ASGQ}}$, solution of \eqref{eq:opt_ASGQ_work} satisfies
\begin{align*}
\text{Work}_{\text{ASGQ}} \propto    \Delta t^{-1} \times M_{\text{ASGQ}} \times M_{\text{Lag}}&\propto   \Delta t^{-1}  \Delta t^{\frac{2q+s-qs}{q(qs+2q+s)}} \Delta t^{\frac{2q+s-qs}{\frac{s}{2}(qs+2q+s)}} \\
&\propto \text{TOL}^{-1-\frac{2(2q+s)}{qs-2q-s}-\frac{1}{q}-\frac{2}{s}}\\
&=\Ordo{\text{TOL}^{-1}},\quad \text{because} \: p,s \gg  1. 
\end{align*}
\section{Simulation Schemes for   the Heston Dynamics}\label{sec:Schemes to simulate the Heston dynamics}
\subsection{Modified Euler scheme}\label{sec:Discretization of Heston model with a non smooth transformations for the volatility process}
The forward Euler scheme can be used to simulate the Heston model. The literature has reported many solutions to avoid the problems arising from the use of negative values of the volatility process $v_t$ in \eqref{eq:dynamics Heston}  \cite{lord2010comparison}. Table \ref{Numerical schemes for CIR process} introduces $f_1, f_2$, and $f_3$, which imply various schemes when different choices are adopted. The forward Euler scheme to discretize \eqref{eq:dynamics Heston}  yields
\begin{align*}
\hat{S}_{t+\Delta t}&= \hat{S}_t +\mu \hat{S}_t \Delta t+\sqrt{\hat{V}_t \Delta t} \hat{S}_t Z_s \nonumber\\
\hat{V}_{t+\Delta t}&=f_1(\hat{V}_t)+\kappa (\theta-f_2(\hat{V}_t)) \Delta t+\xi \sqrt{f_3(\hat{V}_t) \Delta t} Z_V \nonumber\\
\hat{V}_{t+\Delta t}&=f_3(\hat{V}_{t+\Delta t}) \COMMA
\end{align*}
where $Z_s$ and $Z_V$  are two correlated standard normal rdvs with correlation $\rho$. 
\begin{table}[h!]
	\centering
	\begin{tabular}{l*{6}{c}r}
		\toprule[1.5pt]
	Scheme &  $f_1$& $f_2$  & $f_3$     \\
	\hline
	Full truncation scheme & $\hat{V}_t$ &  $\hat{V}_t^+$&$\hat{V}_t^+$\\
	Partial truncation scheme & $\hat{V}_t$ &  $\hat{V}_t$&$\hat{V}_t^+$\\
	Reflection scheme  &$\abs{\hat{V}_t}$ & $\abs{\hat{V}_t}$& $\abs{\hat{V}_t}$\\
			\bottomrule[1.25pt]
	\end{tabular}
	\caption{Different variants for the  forward Euler scheme for the Heston model. $\hat{V}_t^+=\max(0,\hat{V}_t)$.}
	\label{Numerical schemes for CIR process}
\end{table}

Lord et al. \cite{lord2010comparison} suggested that the full truncation scheme is an optimal option in terms of  the weak error convergence. Therefore,  we used this variant of the forward Euler scheme. 
\subsection{Moment-matching scheme}\label{sec:Euler schemes with moment matching}
We consider the moment-matching  scheme  suggested by Andersen and Brotherton-Ratcliffe \cite{andersen2005extended} (the ABR scheme). This scheme assumes that the variance $v_t$ is  locally lognormal, and the parameters are determined such that the first two moments of the discretization coincide with the theoretical moments: 
\begin{align*}\label{eq: vol_moment_matching}
\hat{V}(t+\Delta t)&= \left(e^{-\kappa \Delta t}  \hat{V}(t) + \left( 1-e^{-\kappa \Delta t}\right) \theta  \right) e^{-\frac{1}{2} \Gamma(t)^2 \Delta t+\Gamma(t) \Delta W_v(t)}\nonumber\\
\Gamma^2(t) &=\Delta t^{-1} \log \left(  1+ \frac{\frac{1}{2}\xi^2 \kappa^{-1} \hat{V}(t) (1-e^{-2 \kappa \Delta t})}{\left( e^{-\kappa \Delta t} \hat{V}(t) +(1-e^{- \kappa \Delta t}) \theta \right)^2 }\right).
\end{align*}
As reported in \cite{lord2010comparison}, the scheme is  easy to implement and   more effective than many of the Euler variants presented in Section \ref{sec:Discretization of Heston model with a non smooth transformations for the volatility process}; however, this scheme exhibits  a nonrobust weak error behavior w.r.t.~the model parameters.

\subsection{Heston OU-based scheme}\label{sec:Discretization of Heston model with the volatility process Simulated using the sum of  Ornstein-Uhlenbeck (Bessel) processes}

Because any  OU process is normally distributed, the sum of $n$ squared OU processes is  chi-squared distributed with $n$ degrees of freedom, where $n \in \nset_+$.  We define $\mathbf{X}$ as a  $n$-dimensional vector-valued OU process with
\begin{equation}\label{equivalent OU process} 
\mathrm{d}X_t^i = \alpha X_t^i \mathrm{d}t + \beta \mathrm{d}W_t^i,
\end{equation}
where $\mathbf{W}$ is an $n$-dimensional vector of independent Brownian motions. 

We also define  the process $Y_t$ as   follows:
\begin{equation*}
Y_t = \sum_{i = 1}^n \left( X_t^i \right)^2.
\end{equation*}
Then, because
\begin{equation*}
\mathrm{d} \left( X_t^i \right)^2 =  2 X_t^i \mathrm{d}X_t^i + 2 \mathrm{d} \langle X^i \rangle_t  =  \left( 2 \alpha \left( X_t^i \right)^2 + \beta^2 \right) \mathrm{d}t + 2 \beta X_t^i \mathrm{d}W_t^i,
\end{equation*}
we can write (using the independence of the Brownian motions):
\begin{equation}\label{eq:expressing CIR processes from OU processes}
\mathrm{d}Y_t =  \mathrm{d} \left( \sum_{i = 1}^n \left( X_t^i \right)^2 \right)  =  \sum_{i = 1}^n \mathrm{d} \left( X_t^i \right)^2  =  \left( 2 \alpha Y_t + n \beta^2 \right) \mathrm{d}t + 2 \beta \sum_{i = 1}^n X_t^i \mathrm{d}W_t^i.
\end{equation}
Furthermore, the process $ Z_t = \int_0^t \sum_{i = 1}^n X_u^i \mathrm{d}W_u^i$ is a martingale with quadratic variations
\begin{equation*}
\langle Z \rangle_t  =  \int_0^t \sum_{i = 1}^n \left( X_u^i \right)^2 \mathrm{d}u  =  \int_0^t Y_u \mathrm{d}u.
\end{equation*}
Consequently, using the L\'evy characterization theorem, the process $ \widetilde{W}_t = \int_0^t \frac{1}{\sqrt{Y_u}} \sum_{i = 1}^n X_u^i \mathrm{d}W_u^i$ is a Brownian motion. Finally, we obtain 
\begin{eqnarray}\label{eq:equivalent CIR process}
\mathrm{d}Y_t & = & \left( 2 \alpha Y_t + n \beta^2 \right) \mathrm{d}t + 2 \beta \sqrt{Y_t} \mathrm{d}\widetilde{W}_t \nonumber\\
& = & \kappa \left( \theta - Y_t \right) \mathrm{d}t + \xi \sqrt{Y_t} \mathrm{d}W_t,
\end{eqnarray}
where $\kappa = -2 \alpha$, $\theta = -n \beta^2 / 2 \alpha$ and $\xi = 2 \beta$.

Equations \eqref{equivalent OU process}, \eqref{eq:expressing CIR processes from OU processes}, and \eqref{eq:equivalent CIR process} indicate that,   to simulate the   process $Y_t$ given by \eqref{eq:equivalent CIR process}, we can  simulate the  OU process  $\mathbf{X}$ with dynamics  \eqref{equivalent OU process} such that its parameters $(\alpha, \beta)$ are expressed in terms of  those of the  process $Y_t$:
$$ \alpha=-\frac{\kappa}{2},\quad \beta=\frac{\xi}{2}, \quad n=\frac{-2 \theta \alpha}{\beta^2}=\frac{4 \theta  \kappa}{\xi^2}.$$ Consequently, we can simulate the volatility of the Heston model using a sum of OU processes.

\begin{remark}
The previous derivation can  be generalized to cases where $n^\ast$ is not an integer, by considering the time-change of the squared Bessel process (see  Chapter 6 in \cite{jeanblanc2009mathematical} for details). An alternative  method to  generalize the scheme for  any noninteger $n^\ast$ is to consider  $n^\ast=n+p, \: p \in (0,1)$, and   compute   $\expt{g(X_{n^\ast})}$ for any observable $g$ as follows:
\begin{equation*}
\expt{g(X_{n^\ast})}\approx (1-p) \expt{g(X_n)}+ p \expt{g(X_{n+1})}.
\end{equation*}
\end{remark}

\subsection{On the choice of the simulation scheme of the Heston model}\label{sec:On the choice of the simulation scheme of the Heston model}
We determine the optimal  scheme for simulating the Heston model defined in \eqref{eq:dynamics Heston}.  In our setting, an optimal scheme is characterized by two properties: (i) the behavior of mixed rate convergence (Section \ref{sec:Comparison in terms of mixed differences rates}),   which is a critical requirement for the  optimal performance of ASGQ and  (ii) the  weak error behavior (Section \ref{sec:Comparison in terms of  the weak error behavior})  to apply the Richardson extrapolation when necessary. 

Although we tested many parameter sets and obtained consistent numerical observations; for illustration, we only present the results for the single call option based on the Heston model with parameters listed in Table \ref{table: Examples details.}. This   set   corresponds to $n=1$, where $n$ represents the number of OU processes used in the Heston OU-based scheme (Section \ref{sec:Discretization of Heston model with the volatility process Simulated using the sum of  Ornstein-Uhlenbeck (Bessel) processes}).  
%
\subsubsection{Comparison of different schemes in terms of mixed difference rates}\label{sec:Comparison in terms of mixed differences rates}
As emphasized in \cite{haji2016multi,bayer2018hierarchical}, one crucial requirement to achieve the optimal performance of the ASGQ is to check  the  error convergence of the first and mixed difference operators, as expressed by the error contribution $\Delta \text{E}_{\boldsymbol{\beta}}$ in  \eqref{error_contr}. This  is a measure of how much the  quadrature error would decrease after the addition of a new mutli-index $\boldsymbol{\beta}$   to the constructed index set of the  ASGQ estimator, $\mathcal{I}_{\text{ASGQ}}$.  The ASGQ method exhibits optimal behavior if  (i) $\Delta \text{E}_{\boldsymbol{\beta}}$ decreases exponentially fast w.r.t.~$\beta_i$  and (ii) $\Delta \text{E}_{\boldsymbol{\beta}}$ has a  product structure so that  a faster error decay is observed for second differences  compared to the corresponding first difference operators.

In this section, we compare  the three  approaches of simulating Heston dynamics: (i) the full truncation scheme (Section \ref{sec:Discretization of Heston model with a non smooth transformations for the volatility process}),  (ii) the ABR scheme (Section \ref{sec:Euler schemes with moment matching}),  and (iii) the Heston OU-based scheme (Section \ref{sec:Discretization of Heston model with the volatility process Simulated using the sum of  Ornstein-Uhlenbeck (Bessel) processes}) in terms of the mixed difference convergences. In our numerical experiments, we only observe the differences in the mixed difference rates related to the volatility coordinates because we apply schemes that only differ in the way they simulate the volatility process. Figure \ref{fig:first_diff_Heston_call_N_4_set2} illustrates  a comparison of the first difference rates related to the volatility coordinates for  the various schemes. The figure reveals  that the full truncation scheme is the worst scheme and that the Heston OU-based  and  the  ABR schemes  perform very well in terms of the speed of the mixed rate convergence.
\begin{figure}[h!]
	\centering 
	\begin{subfigure}{0.33\textwidth}
		\includegraphics[width=\linewidth]{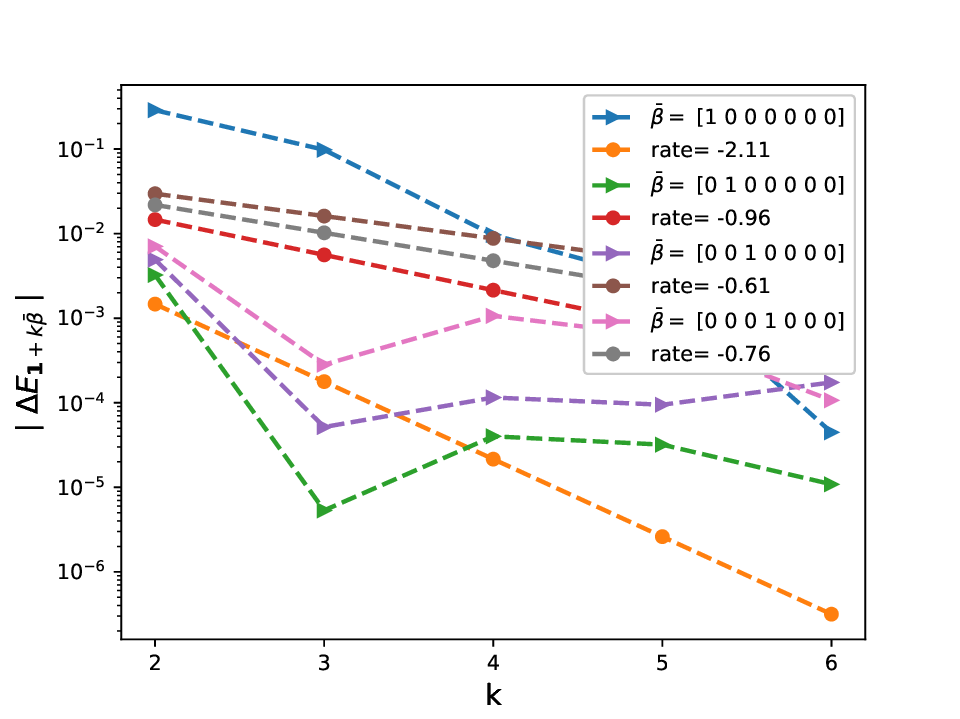}
		\caption{}
		\label{fig:1}
	\end{subfigure}\hfil 
	\begin{subfigure}{0.33\textwidth}
		\includegraphics[width=\linewidth]{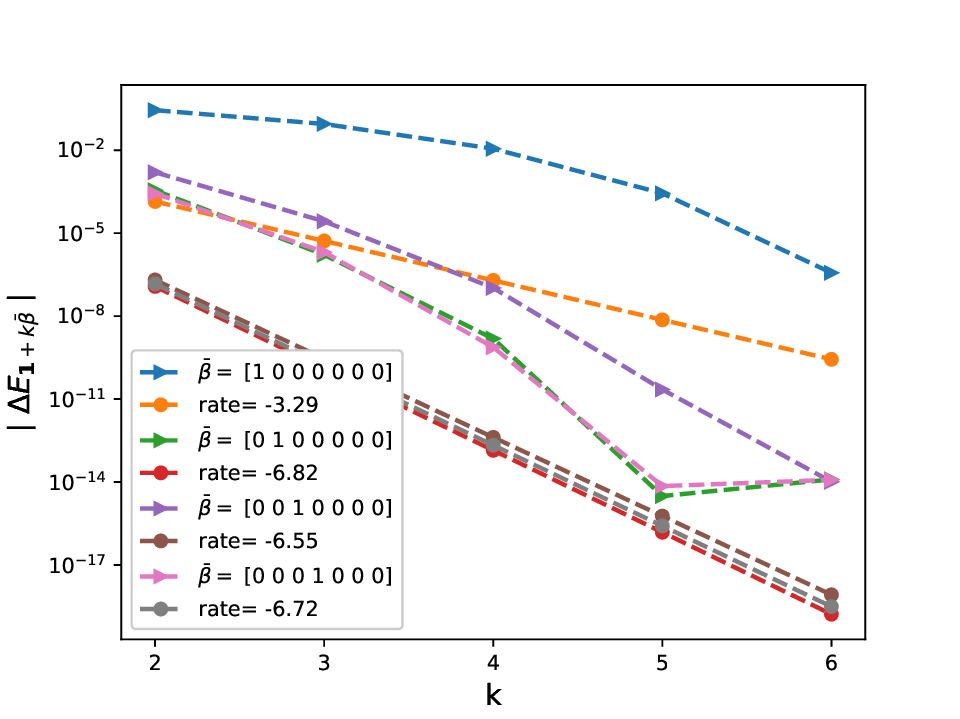}
		\caption{}
		\label{fig:2}
	\end{subfigure}\hfil 
		\begin{subfigure}{0.33\textwidth}
		\includegraphics[width=\linewidth]{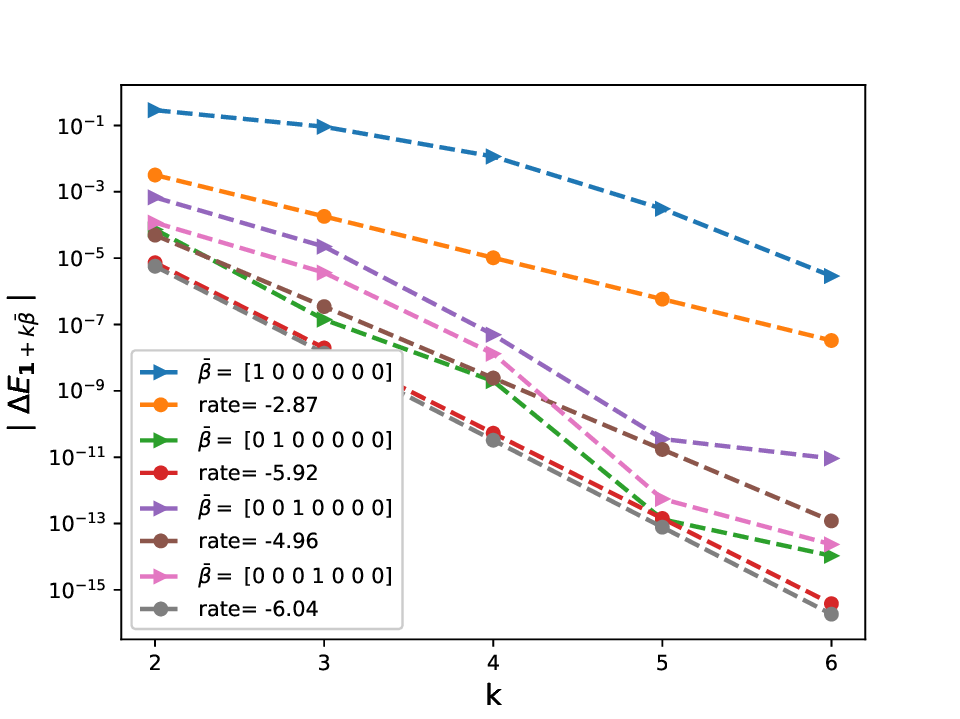}
		\caption{}
		\label{fig:4}
	\end{subfigure}
	\caption{Rate of error convergence of the first-order differences  $\abs{\Delta \text{E}_{\boldsymbol{\beta}}}$, defined in \eqref{error_contr}, ($\boldsymbol{\beta}=\mathbf{1}+k \bar{\boldsymbol{\beta}}$) for the single call option under the Heston model. The parameters are given in Set $1$ in Table  \ref{table: Examples details.}, and the number of time steps  $N=4$. We only present  the first  four dimensions used for the volatility noise (mainly $dW_v$ in \eqref{eq:dynamics Heston}). (a) Full truncation scheme, (b)  ABR scheme, and (c)  Heston OU-based scheme.}
	\label{fig:first_diff_Heston_call_N_4_set2}	
\end{figure}

\subsubsection{Comparison in terms of  the weak error behavior}\label{sec:Comparison in terms of  the weak error behavior}
We compare  the three  schemes of simulating Heston dynamics: (i) the full truncation scheme (Section \ref{sec:Discretization of Heston model with a non smooth transformations for the volatility process}),  (ii) the ABR scheme (Section \ref{sec:Euler schemes with moment matching}),  and (iii) the Heston OU-based scheme (Section \ref{sec:Discretization of Heston model with the volatility process Simulated using the sum of  Ornstein-Uhlenbeck (Bessel) processes}  in terms of the weak error convergence. We select the scheme with weak error rate of  order $1$ (\ie, $\Ordo{\Delta t}$) in the preasymptotic regime to efficiently employ the   Richardson extrapolation in our proposed methods. Figure \ref{fig:weak convergence comparison set 1}  compares  the weak error rates for the different schemes.   This figuren reveals that the  Heston OU-based scheme  exhibits a better weak convergence rate  closer to $1$ than the ABR scheme, which exhibits  a weak error rate of  $0.7$.
\begin{figure}[h!]
	\centering 
	\begin{subfigure}{0.33\textwidth}
	\includegraphics[width=\linewidth]{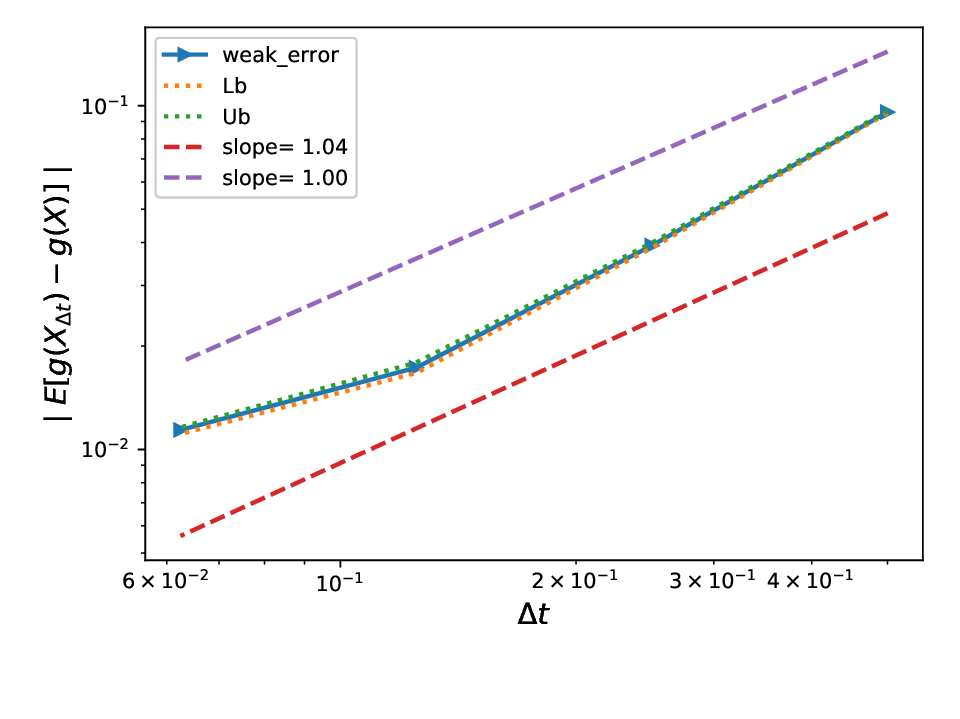}
	\caption{}
	\label{fig:weak_error_full}
\end{subfigure}\hfil 
	\begin{subfigure}{0.33\textwidth}
		\includegraphics[width=\linewidth]{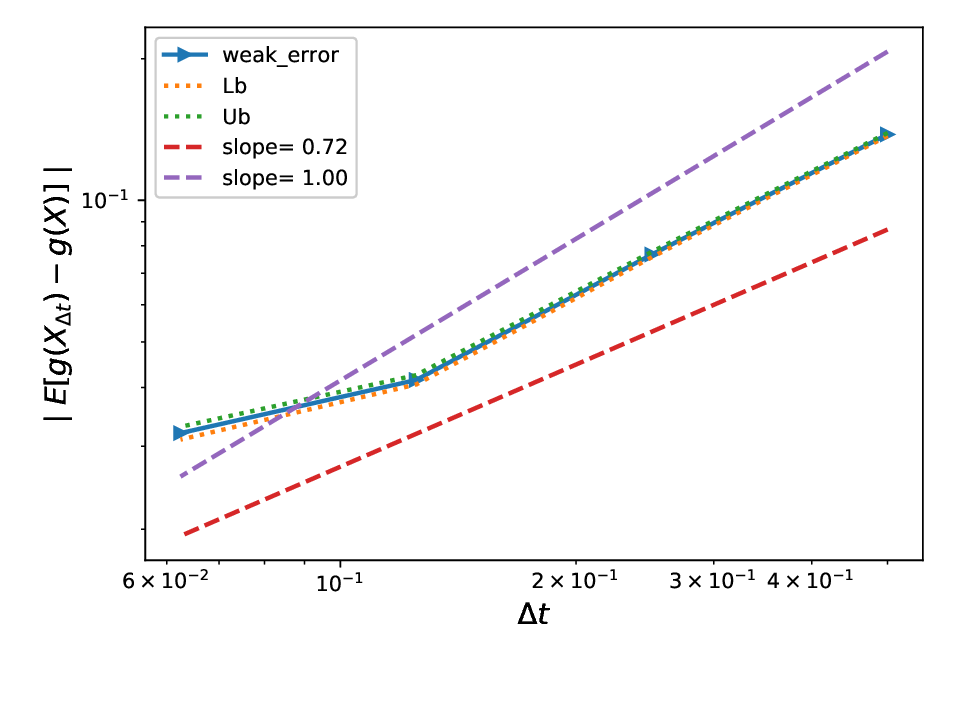}
		\caption{}
		\label{fig:weak_error_ABR}
	\end{subfigure}\hfil 
	\begin{subfigure}{0.33\textwidth}
	\includegraphics[width=\linewidth]{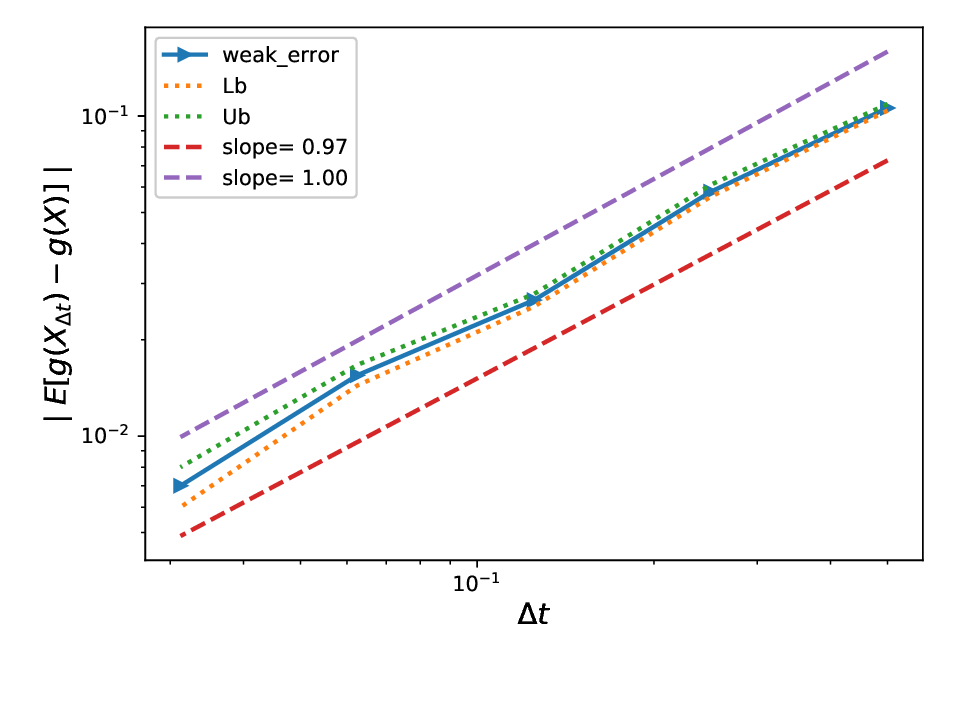}
	\caption{}
	\label{fig:1_weak_error_smooth_vol_set1}
\end{subfigure}\hfil 
	\caption{Weak error convergence for the single call option under the    Heston model for the parameters listed in Table \ref{table: Examples details.}. (a) Full truncation scheme, (b)  ABR scheme, and (c)  Heston OU-based scheme. The upper and lower bounds are $95\%$ confidence intervals. }
	\label{fig:weak convergence comparison set 1}	
\end{figure}

\end{document}